\theoremstyle{thmstyleone}%
\newtheorem{thm}{\bf Theorem}%  meant for continuous numbers
\newtheorem{proposition}{\bf Proposition}% 
\theoremstyle{thmstyletwo}%
\newtheorem{example}{Example}%
\newtheorem{remark}{Remark}%
\theoremstyle{thmstylethree}%
\newtheorem{dfn}{\bf Definition}%
\DeclareMathOperator{\Tr}{Tr}
\newtheorem{lem}{\bf Lemma}
\newtheorem{cor}{\bf Corollary}
\begin{document}

\title[reversibility and causal emergence]{Dynamical Reversibility and A New Theory of Causal Emergence based on SVD}

%%=============================================================%%
%% GivenName	-> \fnm{Joergen W.}
%% Particle	-> \spfx{van der} -> surname prefix
%% FamilyName	-> \sur{Ploeg}
%% Suffix	-> \sfx{IV}
%% \author*[1,2]{\fnm{Joergen W.} \spfx{van der} \sur{Ploeg} 
%%  \sfx{IV}}\email{iauthor@gmail.com}
%%=============================================================%%

\author*[1,2]{\fnm{Jiang} \sur{Zhang}}\email{zhangjiang@bnu.edu.cn}

\author[1,2]{\fnm{Ruyi} \sur{Tao}}

\author[2,3]{\fnm{Keng Hou} \sur{Leong}}

\author[1,2]{\fnm{Mingzhe} \sur{Yang}}

\author[2]{\fnm{Bing} \sur{Yuan}}

\affil*[1]{\orgdiv{School of Systems Science}, \orgname{Beijing Normal University}, \orgaddress{\street{No.19 Xinjiekouwai Street }, \city{Beijing}, \postcode{100876}, \country{China}}}

\affil[2]{\orgname{Swarma Research}, \orgaddress{\city{Beijing}, \postcode{100085}, \country{China}}}

\affil[3]{\orgdiv{Tsinghua-Berkeley Shenzhen Institute, Tsinghua Shenzhen International Graduate School}, \orgname{Tsinghua University}, \orgaddress{\city{Shenzhen}, \postcode{518055}, \country{China}}}

%%==================================%%
%% Sample for unstructured abstract %%
%%==================================%%

\abstract{The theory of causal emergence (CE) with effective information (EI) posits that complex systems can exhibit CE, where macro-dynamics show stronger causal effects than micro-dynamics. A key challenge of this theory is its dependence on coarse-graining method. {In this paper, we introduce a fresh concept of approximate dynamical reversibility and establish a novel framework for CE based on this. By applying singular value decomposition(SVD) to Markov dynamics, we find that the essence of CE lies in the presence of redundancy, represented by the irreversible and correlated information pathways. Therefore, CE can be quantified as the potential maximal efficiency increase for dynamical reversibility or information transmission. We also demonstrate a strong correlation between the approximate dynamical reversibility and EI, establishing an equivalence between the SVD and EI maximization frameworks for quantifying CE, supported by theoretical insights and numerical examples from Boolean networks, cellular automata, and complex networks. Importantly, our SVD-based CE framework is independent of specific coarse-graining techniques and effectively captures the fundamental characteristics of the dynamics.}}

\keywords{Dynamical Reversibility, Causal Emergence, Effective Information, Singular Value Decomposition, Markov Chain}

%%\pacs[JEL Classification]{D8, H51}

%%\pacs[MSC Classification]{35A01, 65L10, 65L12, 65L20, 65L70}

\maketitle

\section{Introduction}
We live in a world surrounded by a multitude of complex systems. These systems engage in time-irreversible stochastic dynamics, leading to entropy production and disorder accumulation~\cite{prigogine1997end}. Despite this, there exists a belief that beneath the disorder in complex systems lie profound patterns and regularities~\cite{west2018scale,waldrop1993complexity}. Consequently, researchers endeavor to derive causal laws from these dynamic systems at a macroscopic scale, while disregarding detailed micro-level information as appropriate~\cite{landau1980statistical,Hoel2013,Hoel2017}. Ultimately, the goal is to develop an effective theory or model capable of elucidating the causality of complex systems at a macroscopic level.

Such idea can be captured by the theoretical framework known as causal emergence(CE) proposed by Hoel et al.~\cite{Hoel2013,Hoel2017,yuan2024emergence}. This framework builds upon an information-theoretic measure called Effective Information (EI)~\cite{tononi2003measuring}, which quantifies the causal influence between successive states within a Markov dynamical system. Through illustrative examples, they demonstrate that a coarse-grained Markov dynamics, when measured by EI at the macro-level, can exhibit stronger causal power than at the micro-level. Nevertheless, one of the foremost challenges is that the manifestation of CE relies on the specific manner in which we coarse-grain the system. Different coarse-graining methods may yield entirely disparate outcomes for CE \cite{yuan2024emergence}. Although this issue can be mitigated by maximizing EI~\cite{Hoel2013,Hoel2017,Zhang2023,yang2023}, the challenges such as computational complexity, the question of solution uniqueness, {ambiguity, and the non-commutativity of marginalization and abstraction operations continue to persist~\cite{eberhardt2022causal}.} Is it possible to construct a more robust theory of CE that is independent of the coarse-graining method?

Although Rosas et al.~\cite{rosas2020reconciling} proposed a new framework for CE based on integrated information decomposition theory~\cite{rosas2020reconciling,luppi2021like, williams2010nonnegative} and they utilize the synergistic information from all micro-variables across two consecutive time steps to quantify emergence, which does not require a predefined coarse-graining strategy, it still involves iterating through all variable combinations to derive synergistic information, resulting in significant computational complexity. Rosas et al. also proposed an approximate method to mitigate the complexity~\cite{rosas2020reconciling}, but it requires a pre-defined macro-variable. In addition, Barnett and Seth~\cite{barnett2023dynamical} introduced a novel framework for quantifying emergence based on the concept of dynamical independence. If the micro-dynamics are unrelated to the prediction of macro-dynamics, the complex system is considered to exhibit emergent macroscopic processes. However, this framework has only been applied to linear systems to date. Both of these methods for quantifying emergence are established on mutual information derived from data, leading to outcomes that are influenced by data distribution. Consequently, these results may not exclusively capture the ``causal'' or the dynamic essence of the system.

{Building on Hoel's theory, this paper aims to develop a CE framework grounded in an intriguing yet distinct concept: dynamical reversibility. Despite there are a bunch of discussions on the reversibility of a Markov chain and its connections with causality~\cite{faye1997causation, bernardo2023causal, bernardo2022bridging, farr2020causation, bernardo2023bridging, kathpalia2021time}, the reversibility discussed here differs from the conventional concepts. It refers to the invertibility of the transition probability matrix (TPM) in a Markov dynamics, which relates to the dynamics' ability to maintain information about the process's \textbf{states} (similar to reversible computing~\cite{Reversible_computing} or the unitary process in quantum mechanics~\cite{nielsen2010quantum}). In contrast, the conventional reversibility conception involves restoring the same \textbf{state distributions} from a stationary distribution by reversing the process. Notably, exact dynamical reversibility implies the conventional reversibility of a Markov chain, as we will demonstrate in this paper. There appears to be a contradiction between the reversibility of Markov dynamics and the conventional view that macroscopic processes are always irreversible. However, we do not assert that macro-dynamics are reversible; instead, we introduce an indicator called approximate dynamical reversibility to quantify their degree of reversibility. 
}

{
The concept of causality explored here focuses on the measure of causation in Markov dynamics rather than traditional notions involving interventions and counterfactuals. As noted by Hoel et al.~\cite{comolatti2022causal}, all measures of causation are combinations of two causal primitives: sufficiency and necessity. Sufficiency quantifies the probability of an effect $e$ occurring given that the cause $c$ occurs, while necessity measures the probability of $e$ not occurring if $c$ does not occur. A high measure of causation arises only when $c$ is both a sufficient and necessary condition for $e$. This actually implies a bijective(reversible) functional map between all possible causes ($c$ or $\neg c$) and all effects ($e$ or $\neg e$) if the causation measure is maximized.
}

{
This leads to an intriguing connection between reversibility and causality in Markov dynamics: when the TPM of a Markov chain approaches invertibility, the previous state effectively becomes both a sufficient and necessary condition  approximately for determining the state at the next time step. This point is further supported by examples found in references~\cite{Hoel2013,Hoel2017}, where EI, as a measure of causation, is maximized when the underlying dynamics are reversible. Therefore, we can re-frame the theory of CE as an endeavor to obtain a reversible macro-level dynamics by appropriately disregarding micro-level information.}

{
The close connection between EI as a measure of dynamical causality and the approximate dynamical reversibility of the same Markov chain allows us to easily arrive at a simple understanding of emergence: the emergence of causality is essentially equivalent to the emergence of reversibility. However, the concept of reversibility can provide us with even deeper insights. First, if we view a Markovian dynamics as a communication channel that transmit the information of the system's state into the future~\cite{Hoel2017}—where each state's probabilistic transition can be seen as an information pathway—then the more reversible the dynamics, the more efficient these information pathways become, meaning that the average amount of information transmitted through each pathway increases. Through singular value decomposition (SVD), we can reveal that the essence of CE lies in the presence of redundant information pathways in the system's dynamics. These pathways are linear dependent row vectors, they either transmit no information or transmit very little information (corresponding to singular vectors associated with zero or near-zero singular values). Most of the information, however, is transmitted through the less but more reversible core pathways in the dynamics (corresponding to the singular vectors associated with larger singular values). As a result, the degree of CE in dynamics can then be quantified as the potential maximal improvement in information efficiency (which is equivalent to reversibility efficiency) under certain accuracy constraints. And the optimal coarse-graining strategy for dynamics should focus on eliminating ineffective information pathways while preserving the dynamics as much as possible~\cite{Hoel2017}, in line with the maximization of EI that we will present in the paper.
}

Another related topic, the lumpability of Markov chains\cite{barreto2011lumping}, also delves into the process of coarse-graining a Markov chain\cite{zhang2019spectral} and its relationship with reversibility\cite{marin2014relations}. {This conception mainly focus on the legitimacy of a grouping method for states during the process of coarse-graining~\cite{buchholz1994exact}. A lumpable grouping method should guarantee the coarse-grained macro-dynamics being a legitimate Markov chain and its TPM, the time evolution operator should commute with the coarse-graining operator.} However, the criteria for determining the lumpability primarily focus on the consistency of the Markov dynamics rather than the causality assessed by EI, and the reversibility they {are concerned with} is not dynamical reversibility\cite{marin2014relations}. Therefore, the concepts explored in this paper serve as a {complement} to the understanding of the lumpability of Markov chains.  
\\

This paper commences by introducing an indicator designed to measure the proximity of a Markov chain to be dynamically reversible, utilizing the singular values of its TPM. Subsequently, it provides several formal definitions and mathematical theorems to establish the validity of the indicator and its close association with EI. Additionally, a novel definition for CE based on the approximate dynamical reversibility is introduced, followed by the validation of this definition and a demonstration of its equivalence to, and distinctions from, {EI maximization}-based causal emergence using multiple examples including Boolean networks, cellular automata, and complex networks. Finally, a more streamlined and potent coarse-graining method for general Markov chains, employing the singular value decomposition (SVD) of the TPM, is proposed. \\

\section{Results}
\subsection{Theories}
\subsubsection{Effective information and causal emergence}
First, we will briefly introduce Hoel et al.'s theory of Causal Emergence (CE), which is grounded in the information-theoretic measure known as Effective Information (EI). This measure was initially introduced in \cite{tononi2003measuring} and has since been employed to quantify causal emergence in the work of Hoel et al. in \cite{Hoel2013}. For a given Markov chain $\chi$ with a discrete state space $\mathcal{S}$ and the transitional probability matrix (TPM) $P$, where the element of $P$ at the $i$th row and the $j$th column, $p_{ij}$, is the conditional probability that the system transitions to state $j$ at the current time, given that it was in state $i$ at the previous time step, then $EI$ is defined as:
\begin{equation}
    \label{eqn.ei_definition}
    EI\equiv I(X_{t+1};X_t|do(X_t\sim U)),
\end{equation}
where $X_t, X_{t+1}, \forall t\geq 0$ represent the state variables defined on $\mathcal{S}$ at time step $t$ and $t+1$, respectively. The do-operator, denoted as $do(X_t\sim U)$, embodies Pearl's intervention concept as outlined in~\cite{pearl2018book}. This intervention enforces $X_t$ to adhere to a uniform (maximum entropy) distribution on $\mathcal{S}$, specifically $Pr(X_t=i)=1/N$, where $i\in \mathcal{S}$ and $N$ represents the total number of states in $\mathcal{S}$. Given that $Pr(X_{t+1}=j)=\sum_{i\in \mathcal{S}}p_{ij}\cdot Pr(X_t=i)$, the do-operator indirectly influences $Pr(X_{t+1})$ as well. %Here, $p_{ij}$ denotes the element at the intersection of the $i$th row and the $j$th column in matrix $P$, indicating the transition probability from state $i$ to state $j$. 
Consequently, the $EI$ metric quantifies the mutual information between $X_t$ and $X_{t+1}$ subsequent to this intervention, thereby measuring the strength of the causal influence exerted by $X_t$ on $X_{t+1}$. 

The rationale behind using the $do$ operator is to ensure that the $EI$ metric purely captures the characteristics of the underlying dynamics, specifically the TPM ($P$), while remaining unaffected by the actual distribution of $X_t$~\cite{yuan2024emergence}. This point can be more clear by showing another equivalent form of $EI$~\cite{Hoel2013}:
\begin{equation}
    \label{eqn.ei_definition2}
    EI=\frac{1}{N}\sum_{i=1}^N\sum_{j=1}^N p_{ij}\log\frac{p_{ij}}{\frac{\sum_{k=1}^N p_{kj}}{N}}=\frac{1}{N}\sum_{i=1}^N \left(P_i\cdot\log P_i-P_i\cdot\log \bar{P}\right)=\frac{1}{N}\sum_{i=1}^N D_{KL}(P_i||\bar{P}),
\end{equation}
where $P_i=\left(p_{i1},p_{i2},\cdots,p_{iN}\right)$ is the $i$th row vector of $P$, and $P$ can therefore be written as $P=(P_1^T,P_2^T,\cdot\cdot\cdot,P_N^T)^T$. In Equation \ref{eqn.ei_definition2}, $\cdot$ represents the scalar product between two vectors, and $\log$ is the {element-wise} logarithmic operator for vectors, $D_{KL}(\cdot||\cdot)$ is the KL-divergence between two probability distributions, and $\bar{P}=\frac{1}{N}\sum_{i}^NP_i$
is the average vector of all the $N$ row vectors of the TPM. Thus, $EI$ measures the average KL-divergence between any $P_i$ and their average $\Bar{P}$. To be noted that, this form of $EI$ is the generalized Jensen-Shannon divergence as mentioned in \cite{englesson2021generalized}. All the logarithms in Eq.\ref{eq:EI_two_terms} is base on 2. 

{The vector form representation of the TPM can indicate that the dynamics can be regarded as an information channel as pointed out in~\cite{Hoel2017}, and each row vector represents an information pathway. While, the similarities among the row vectors represent a redundancy in the dynamics, which serves as an information channel~\cite{Hoel2013}. As demonstrated in Equation \ref{eqn.ei_definition2}, $EI$ quantified the averaged differences between these row vectors.}

Furthermore, $EI$ can be decomposed into two terms~\cite{Hoel2013}:
\begin{equation}
\label{eq:EI_two_terms}
    EI=\frac{1}{N}\sum_{i=1}^N P_i\cdot\log P_i-\frac{1}{N}\sum_{i=1}^N P_i\cdot\log \bar{P}=\underbrace{-\bar{H}(P_i)}_{determinism}+\underbrace{H(\bar{P})}_{-degeneracy},
\end{equation}
the first term is determinism $-\Bar{H}(P_i)\equiv\frac{1}{N}\sum_{i=1}^N P_i\cdot\log P_i$ which measures how the current state can deterministically(sufficiently) influences the state in next time step, and the second term is non-degeneracy $H(\Bar{P})\equiv -\sum_{j=1}^N\Bar{P}_{\cdot j}\log \Bar{P}_{\cdot j}$ which measures how exactly we can infer(necessarily) the state in previous time step from the current state. In their original definition in \cite{Hoel2013}, both the determinism and the degeneracy are added $\log N$ to guarantee them to be positive. This decomposition reveals why $EI$ can measure the strength of causal effect of a Markov chain and the connections with the reversibility of dynamics because the determinism can be understood as a kind of sufficiency and the degeneracy is a kind of necessity \cite{comolatti2022causal,yuan2024emergence}.

This point can be more clear by the examples shown in Figure \ref{fig:examples}, where four cases of TPMs are shown and the $EI$ values and their normalized forms $\mathit{eff}\equiv EI/\log_2 N$ are also demonstrated below. It is not difficult to observe that as the TPM is close to an invertible matrix(a permutation matrix, see Proposition \ref{thm.P_invertible} in Supplementary \ref{sec:dynamica_reversibility}), $EI$ is larger. 

%Actually, $EI$ can reach its maximum $\log N$ when $P$ is dynamically reversible (that is $P$ is a permutation matrix according to Propositions \ref{thm:EI_maximum} and \ref{thm.P_invertible} presented in Appendix \ref{sec:theorems_EI}. In this case, $P_i$s for all $i\in [1,N]$ are orthogonal one-hot vectors, and $\bar{P}=\mathbbm{1}/N$). 

Causal emergence occurs when the coarse-grained TPM possessing larger EI than the original TPM. As shown in the example in Figure \ref{fig:examples}(d), which is the coarse-grained TPM of the example in Figure \ref{fig:examples}(c). And the degree of CE can be calculated as the difference between the EIs {as mentioned in~\cite{Hoel2013}}:
\begin{equation}
    \label{eq:dfn_CE}
    CE = EI(P')-EI(P),
\end{equation}
where, $P'$ is the coarse-grained TPM of $P$. In this example, the coarse-graining is implemented by collapsing the first three rows and columns of the TPM in Figure \ref{fig:examples}(c) into one macro-state. Thereafter, $EI(P')=1$(or $\mathit{eff}=1.0$) in (d) is clearly larger than $EI(P)=0.81$(or $\mathit{eff}=0.41$) in (c), which manifests that the strength of cause-effect in macro-level (coarse-grained TPM in (d)) is larger than the micro-level (c), thus, causal emergence occurs, and the degree of CE is $1-0.81=0.19$. 

To be noted, the extent of causal emergence could vary with changes in the coarse-graining method, and in certain cases, it might even be negative as shown in \cite{Hoel2013,yuan2024emergence}. {Thus, to quantify CE, it is essential to search for an optimal coarse-graining strategy that maximizes the EI of macro-dynamics. However, it is important to note that this optimal solution may not be unique\cite{eberhardt2022causal, e26080618}, and the best coarse-graining strategy could violate the lumpability requirement. This can result in ambiguity when merging different causal states and disrupt the commutativity between marginalization (the time evolution operator, i.e., the TPM) and abstraction (the coarse-graining operator), which is essential for maintaining consistent dynamics before and after coarse-graining, as previously discussed~\cite{eberhardt2022causal}.}
\\
\subsubsection{Dynamical reversibility}

Second, we will introduce the conception of dynamical reversibility for a Markov chain, and propose a quantitative indicator to measure the proximity for a general Markov chain to be dynamically reversible. 

% \subsubsubsection{Definitions and Properties}
\textbf{a. Definitions and Properties}

\begin{dfn}
\label{dfn:dynamical_reversibility}
    For a given markov chain $\chi$ and the corresponding TPM $P$, if $P$ simultaneously satisfies: 1. $P$ is an invertible matrix, that is, there exists a matrix $P^{-1}$, such that $P\cdot P^{-1}=I$; and 2. $P^{-1}$ is also an effective TPM of another Markov chain $\chi^{-1}$, then $\chi$ and $P$ can be called dynamically reversible.
\end{dfn}

It is important to clarify that in this context, a dynamically reversible Markov chain differs from the commonly used term ``time reversible'' Markov chain~\cite{stroock2013introduction,bernardo2023bridging}, as dynamical reversibility necessitates the ability of $P$ to be reversibly applied to each individual state, whereas the latter focuses on the reversibility of state space distributions. Actually, we can prove that the former implies the latter (Lemma \ref{thm:reversibility_implication} in Supplementary \ref{sec:dynamica_reversibility}).

Further, the following theorem states that all the Markov chains satisfying the two conditions mentioned in Definition \ref{dfn:dynamical_reversibility} are permutation matrices.
\\

\begin{proposition}
\label{thm.P_invertible}
    For a given markov chain $\chi$ and the corresponding TPM $P$, if $P$ is dynamically reversible as defined in Definition \ref{dfn:dynamical_reversibility}, if and only if $P$ is a permutation matrix.
\end{proposition}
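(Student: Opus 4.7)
The plan is to prove the two directions of the ``if and only if'' separately. The forward ($\Leftarrow$) direction is immediate: if $P$ is a permutation matrix, then $P$ is invertible with $P^{-1}=P^{T}$, which is itself a permutation matrix and therefore a legitimate row-stochastic TPM. I would dispatch this in a single line.

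For the nontrivial converse, suppose $P$ is an invertible row-stochastic matrix whose inverse $P^{-1}$ is also row-stochastic; note that the row-sum-to-$1$ property of $P^{-1}$ follows automatically from $P\mathbf{1}=\mathbf{1}$ by applying $P^{-1}$ to both sides, so the substantive content of the assumption is that $P^{-1}$ has nonnegative entries. The central idea is to examine how $P$ acts on the probability simplex $\Delta_N=\{x\in\mathbb{R}_{\geq 0}^N:\sum_i x_i=1\}$. Row-stochasticity of $P$ ensures that the linear map $T:x\mapsto xP$ carries $\Delta_N$ into itself, and row-stochasticity of $P^{-1}$ ensures the same for $T^{-1}:x\mapsto xP^{-1}$. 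Hence $T$ is a linear bijection of $\Delta_N$ onto itself. Such a bijection must permute the extreme points of $\Delta_N$, which are precisely the standard basis vectors $e_1,\ldots,e_N$. Therefore there is a permutation $\sigma$ of $\{1,\ldots,N\}$ with $e_iP=e_{\sigma(i)}$ for every $i$, and since $e_iP$ is the $i$-th row of $P$, every row of $P$ is a standard basis vector, i.e., $P$ is the permutation matrix associated with $\sigma$.

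The main obstacle, although standard, is the step that a linear bijection of $\Delta_N$ must map vertices to vertices. I would justify this in a line: if $T(e_i)=\alpha u+(1-\alpha)v$ for $u,v\in\Delta_N$ and $0<\alpha<1$, then applying $T^{-1}$ would express $e_i$ as a nontrivial convex combination of $T^{-1}(u),T^{-1}(v)\in\Delta_N$, contradicting extremality of $e_i$ unless $u=v=T(e_i)$. A spectral alternative exists — row-stochasticity of both $P$ and $P^{-1}$ pins every eigenvalue of $P$ to the unit circle, and combining with nonnegativity would then force $P$ to be a permutation — but the simplex argument is shorter and more geometric, so that would be my preferred route.
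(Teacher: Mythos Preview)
Your proof is correct, and it takes a genuinely different route from the paper's. The paper argues spectrally: since both $P$ and $P^{-1}$ are TPMs, every eigenvalue of $P$ satisfies $|\lambda_i|\le 1$ and $|\lambda_i^{-1}|\le 1$, forcing $|\lambda_i|=1$ for all $i$; from this the paper deduces $P^{T}P=I$ via an eigenvector computation, and then nonnegativity plus row-normalization pin $P$ down as a permutation. Your argument instead exploits the convex geometry of the simplex: row-stochasticity of $P$ and of $P^{-1}$ makes $x\mapsto xP$ an affine bijection of $\Delta_N$, which must permute the extreme points $e_1,\ldots,e_N$.

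Your approach is more elementary and, arguably, more robust: it needs no spectral theory, sidesteps any question of diagonalizability of $P$ (which the paper tacitly assumes when writing $P=U\,\mathrm{diag}(\lambda_i)\,U^{-1}$), and the extreme-point preservation step you sketch is airtight. The paper's spectral route, on the other hand, dovetails with the singular-value machinery used throughout the rest of the work and directly yields the orthogonality relation $P^{T}P=I$, which is reused later. You even flag the spectral alternative yourself at the end of your proposal---that is precisely the line the paper follows.
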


\begin{proof}
The proof is referred to Supplementary \ref{sec:gamma}.    
\end{proof}

However, permutation matrices are rare in the whole class of all possible TPMs. That means that, in the case of a general TPM, it is not inherently reversible. Hence, an indicator that can quantify the proximity for a general TPM being reversible is required.

It seems that the rank $r$ of $P$ can be this indicator because if and only if $r<N$, the matrix $P$ is irreversible, and $P$ becomes more degenerate the smaller $r$ is. However, a non-degenerate(full rank) $P$ is not always dynamically reversible. Even if an inverse $P^{-1}$ exists, it may not function as an effective Transition Probability Matrix (TPM), as this requires all elements in $P^{-1}$ to fall between 0 and 1, alongside the fulfillment of the normalization condition (where the one-norm of the $i$th vector $(P^{-1})_i$ in $P^{-1}$ should equal one: $||(P^{-1})_i||_1=1$). While, according to Proposition \ref{thm.P_invertible}, TPMs must be permutation matrices to be dynamically reversible. Thus, the matrices being close to permutation matrices should be ``more'' reversible. One of an important observation is that all the row vectors in a permutation matrix are one-hot vectors(the vectors with only one element is 1, all other elements are zero). This characteristic can be captured by the Frobenius norm of $P$, $||P||_F$. Actually, $||P||_F$ is maximized if and only if the row vectors in $P$ are one-hot vectors (see Lemma \ref{lem.boundsize} in Supplementary \ref{sec:dynamica_reversibility}).

Therefore, the indicator that characterizes the approximate dynamical reversibility should be a kind of mixture of the rank and the Frobenius norm. While, the rank of a matrix $P$ can also be written as:

\begin{equation}
    \label{eq.rank_singular}
    r = \sum_{i=1}^N \sigma_i^0,
\end{equation}
where $\sigma_i\geq 0$ is the $i$th singular value of $P$. Furthermore, according to Lemma \ref{lem.boundsize} in Supplementary \ref{sec:dynamica_reversibility}, the Frobenius norm can be written as: 

\begin{equation}
\label{eq.frobenius_singular}
    ||P||_F^2=\sum_{i=1}^N\sigma_i^{2},
\end{equation}
which is also the sum of the squares of singular values. Both the rank and the Frobenius norm are connected through the singular values of $P$.

Formally, the approximate dynamical reversibility of $P$ can be formalized by the following definition: 
\\
\begin{dfn}
    \label{dfn.gamma}
    Suppose the transitional probability matrix (TPM) is $P$ for a markov chain $\chi$, and its singular values are $(\sigma_1\geq \sigma_2\geq\cdots\geq\sigma_N\geq 0)$, then the $\alpha${-}ordered approximate dynamical reversibility of $P$ is defined as:
    \begin{equation}
        \label{eqn.formal_def_gamma}
        \Gamma_{\alpha} \equiv \sum_i^N \sigma_i^{\alpha},
    \end{equation}
    where $\alpha\in (0,2)$ is a parameter.
\end{dfn}
% \\
Actually, $\Gamma_{\alpha}$ is the Schatten norm of $P$: $\Gamma_{\alpha}=||P||_{\alpha}^{\alpha}$ when $\alpha\geq 1$ (it is also called nuclear norm when $\alpha=1$), while it is quasinorm when $0<\alpha<1$~\cite{wiki_Schatten_norm,recht2010guaranteed,chi2019nonconvex,cui2020towards}.

This definition is reasonable to characterize the approximate dynamical reversibility because the exact dynamical reversibility can be obtained by maximizing $\Gamma_{\alpha}$ as mentioned by the following Theorem:
% \\

\begin{proposition}
    \label{thm.maximum}
    The maximum of $\Gamma_{\alpha}$ is $N$ for any $\alpha\in(0,2)$, and it can be achieved if and only if $P$ is a permutation matrix. 
\end{proposition}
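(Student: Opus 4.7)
The plan is to reduce the statement to two standard inequalities and then analyze when both are simultaneously tight. The two ingredients are (i) the identity $\sum_i \sigma_i^2 = \|P\|_F^2$ from the SVD (already recalled as Eq.~\ref{eq.frobenius_singular}), together with the row-stochastic bound $\|P\|_F^2 \le N$, and (ii) the power-mean inequality comparing the $\alpha$-mean and the $2$-mean of the singular values for $\alpha \in (0,2)$.

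First, I would establish $\|P\|_F^2 \le N$ directly from the TPM constraints. Since each entry satisfies $0 \le p_{ij} \le 1$ and each row obeys $\sum_j p_{ij} = 1$, we have $\sum_j p_{ij}^2 \le \sum_j p_{ij} = 1$, and summing over rows gives $\|P\|_F^2 \le N$, with equality if and only if every row of $P$ is a one-hot vector. Next, by the power-mean inequality applied to the nonnegative singular values,
\begin{equation*}
\left(\frac{1}{N}\sum_{i=1}^N \sigma_i^{\alpha}\right)^{1/\alpha} \le \left(\frac{1}{N}\sum_{i=1}^N \sigma_i^{2}\right)^{1/2},
\end{equation*}
for any $\alpha \in (0,2)$, with equality if and only if all $\sigma_i$ are equal. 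Rearranging and combining with the Frobenius bound gives
\begin{equation*}
\Gamma_\alpha = \sum_{i=1}^N \sigma_i^\alpha \le N^{1-\alpha/2}\bigl(\|P\|_F^2\bigr)^{\alpha/2} \le N^{1-\alpha/2}\cdot N^{\alpha/2} = N,
\end{equation*}
which already yields the maximum bound.

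For the equality characterization I would chain the two equality conditions. Attaining $\Gamma_\alpha = N$ forces both the power-mean inequality and the Frobenius bound to be saturated. Saturation of power-mean forces $\sigma_1 = \cdots = \sigma_N =: \sigma$, and then $N\sigma^2 = \|P\|_F^2 = N$ forces $\sigma = 1$, so $P$ has full rank with all singular values equal to $1$, i.e., $P$ is orthogonal. Saturation of the Frobenius bound forces every row of $P$ to be one-hot, so $P$ is a $0/1$ matrix with one $1$ per row. A $0/1$ row-stochastic matrix that is also full rank cannot have two identical rows, hence it is a permutation matrix. The converse direction is immediate: a permutation matrix is orthogonal, so all $\sigma_i = 1$ and $\Gamma_\alpha = N$ for every $\alpha$.

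I do not expect a genuine obstacle here; the main subtlety is checking that the equality analysis correctly explains why the case $\alpha = 2$ must be excluded from the proposition. When $\alpha = 2$ the power-mean step becomes vacuous, so $\|P\|_F^2 = N$ is saturated by any TPM with one-hot rows (including those with duplicated rows), and such a $P$ need not be a permutation matrix. The open-interval hypothesis $\alpha \in (0,2)$ is precisely what makes the power-mean step strict enough to additionally pin down full rank, thereby forcing the permutation structure.
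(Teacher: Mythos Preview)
Your proof is correct and uses essentially the same chain of inequalities as the paper: the concavity/power-mean step $\Gamma_\alpha \le N^{1-\alpha/2}\|P\|_F^{\alpha}$ followed by the row-stochastic bound $\|P\|_F^2 \le N$ (the paper's Lemma~\ref{lem.upperbound_gamma} and Lemma~\ref{lem.boundsize}).

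The one noteworthy difference is in the equality analysis. The paper only invokes the equality condition of the Frobenius bound (all rows one-hot), then appeals to the structural Lemma~\ref{lem.equality} and splits into two cases according to whether any rows repeat, computing $\Gamma_\alpha$ explicitly in the degenerate case to show it falls strictly below $N$. You instead use \emph{both} equality conditions simultaneously: tightness of the power-mean step forces all $\sigma_i$ equal, and combined with $\sum_i\sigma_i^2=N$ this pins $\sigma_i\equiv 1$, so $P$ is orthogonal and the one-hot rows must be distinct. Your route is shorter and avoids the auxiliary lemma and case split; the paper's route has the side benefit of exhibiting the exact singular values of any one-hot-row TPM, which it reuses elsewhere. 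Your closing remark about why $\alpha=2$ must be excluded is also a nice observation that the paper leaves implicit.
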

\begin{proof}
    The proof can be referred by Supplementary \ref{sec:Gamma}.
\end{proof}

Further, $\Gamma_{\alpha}$ is lower bounded by $||P||_F^{\alpha}$ according to Lemma \ref{thm:lowerbound}, and this lower bound can be increased if the dynamics $P$ is more deterministic (more one-hot row vectors in $P$) according to Lemma \ref{cor:gamma_increasing}. For fully deterministic TPMs (all row vectors are one-hot vectors), $\Gamma_{\alpha}$ can be further increased as the number of orthogonal vectors become larger as claimed by Lemma \ref{thm.onehots}. In general, when $P$ is close to a permutation matrix, $\Gamma_{\alpha}$ approaches its maximum. These propositions and lemmas guarantee that $\Gamma_{\alpha}$ is a reasonable indicator to measure the approximate dynamical reversibility for any given $P$. All the mathematical proofs are given in Supplementary \ref{sec:Gamma}.

% \subsubsubsection{Determinism and degeneracy}
\textbf{b. Determinism and degeneracy}

To be noticed that by adjusting parameter $\alpha\in(0,2)$, we can make $\Gamma_{\alpha}$ more reflective of $P$'s \textbf{determinism} or \textbf{degeneracy}~\cite{Hoel2013}. When $\alpha\rightarrow 0$, $\Gamma_{\alpha}$ converges to the rank of $P$, which resembles the non-degeneracy term in the definition of $EI$ (Equation \ref{eq:EI_two_terms}) because $r$ decreases as $P$ is more and more degenerate. However, $\alpha$ is not allowed to take exact $0$ in Definition \ref{dfn.gamma} because $rank(P)$ is not a continuous function of $P$, and maximizing $rank(P)$ does not have to lead to permutation matrices. 

Similarly, $\Gamma_{\alpha}$ converges to $||P||_F^2$ when $\alpha\rightarrow 2$, but $\alpha$ does not become exactly 2 %is not allowed to take exact $2$
in Definition \ref{dfn.gamma} because the maximization of $\Gamma_{\alpha=2}$ does not imply $P$ being reversible. $||P||_F$ is comparable with the determinism term in the definition of $EI$ (Equation \ref{eq:EI_two_terms}) because when there are more and more one-hot row vectors in $P$, the maximum transitional probability in $P$ become larger and larger which means the underlying dynamics becomes more deterministic. 

In practice we always take $\alpha=1$ to balance the propensity of $\Gamma_{\alpha}$ for measuring determinism and degeneracy, and $\Gamma_{\alpha=1}$ is called nuclear norm which has many potential applications~\cite{fazel2002matrix,cui2020towards}. When $\alpha<1$ the measure $\Gamma_{\alpha}$ quantifies more on the non-degeneracy of $P$. In literatures, $\Gamma_{\alpha\leq 1}$ is always utilized as an approximation of the rank function~\cite{liu2014exact, nie2012low}. On the other side, the measure $\Gamma_{\alpha}$ characterizes more on the determinism of $P$ when $\alpha>1$.

Considering the importance of $\alpha=1$, we mostly show the results on $\alpha=1$, and we abbreviate $\Gamma_1$ as $\Gamma$ in the following texts.

% \subsubsubsection{Normalization and Examples}
\textbf{c. Normalization and Examples}

Since $\Gamma_{\alpha}$ is size-dependent, we need to normalize them by dividing the size of $P$
\begin{equation}
    \label{eqn:normalized_Gamma_alpha}
    \gamma_{\alpha}=\frac{\Gamma_{\alpha}}{N},
\end{equation}
to characterize the size-independent approximate dynamical reversibility such that the comparisons between Markov chains with different sizes are more reasonable. It can be proven that $\gamma_{\alpha}$ is always smaller than {or equal to} $1$ as a derived result of Proposition \ref{thm.maximum}. 

{This quantity also evaluates the averaged dynamical reversibility, or the efficiency of information transmission through $P$ by each information pathway, treating the Markov dynamics as an information channel as stated in~\cite{Hoel2017}, and each state's transitions as an information pathway.} 

In Figure \ref{fig:examples}, we show $\Gamma_{\alpha}$s and the normalized ones, $\gamma_{\alpha}$s, on the four examples by setting $\alpha=1$. $\Gamma$ varies from 2 (case c, d) to 3.81 (case a), and $\gamma=\Gamma/N$ varies from 0.5 (case c) to 1 (case d) in Figure \ref{fig:examples}. It is clear that $\gamma$ is larger if the TPM is closer to a reversible matrix. And the correlation between $\gamma$ and $\mathit{eff}$ can be observed in these examples.

%In literateurs, $\Gamma$ is called nuclear norm or $p$-Schatten norm with $p=1$, and it is always treated as the continuous relaxation of the rank of matrix $P$.

%The reason why $\Gamma$ can quantify the reversibility of $P$ is clear if we make an SVD decomposition on $P$:

%We actually can prove strictly that 1)$\Gamma$ achieves its maximal value $N$ when $P$ is reversible (Theorem \ref{thm.upperbound_gamma} and Theorem \ref{thm.P_invertible}, see Appendix \ref{sec:gamma}); and 2)If $P$ is closed to be reversible, then $\Gamma$ also approaches its maximal value $N$ (Theorem \ref{thm:gamma_increasing}, see Appendix \ref{sec:gamma}). 

% \\

\subsubsection{Connections between $\Gamma_{\alpha}$ and $EI$}
\label{sec:comparison_EI_gamma}

On one side, $EI$ characterizes the strength of causal effect of a Markov chain; on the other side, $\Gamma_{\alpha}$ can quantitatively capture the approximate dynamical reversibility of the Markov chain. We have claimed that the causality and reversibility are deeply connected in the introduction, thus, we will discuss the connections between $EI$ and $\Gamma_{\alpha}$ in this sub-section.

First, we found that $EI$ and $\log\Gamma_{\alpha}$ share the same minimum and maximum as mentioned in the following theorem.
\\

\begin{proposition}
    \label{thm:synthesize_theorem}
    For any TPM $P$ and $\alpha\in(0,1)$, both the logarithm of $\Gamma_{\alpha}$ and $EI$ share identical minimum value of $0$ and one common minimum point at $P=\frac{1}{N}\mathbbm{1}_{N\times N}$. They also exhibit the same maximum value of $\log N$ with maximizing points corresponding to $P$ being a permutation matrix. Where the notation $\mathbbm{1}_{N\times N}$ denotes a matrix where all elements are equal to 1.
\end{proposition}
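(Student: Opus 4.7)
The plan is to handle the two extrema separately and, for each functional, show both that the claimed extremal value is a global bound and that the specified point attains it. The arguments are not intertwined except through the common extremal points, so I would prove four independent sub-claims (a lower bound with equality case, and an upper bound with equality case, for each of $EI$ and $\log\Gamma_\alpha$) and conclude.

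For the shared minimum at $P=\tfrac{1}{N}\mathbbm{1}_{N\times N}$, the inequality $EI\geq 0$ is immediate from KL nonnegativity applied term-by-term in the representation $EI=\tfrac{1}{N}\sum_i D_{KL}(P_i\|\bar P)$ of Equation~\ref{eqn.ei_definition2}. For $\log\Gamma_\alpha\geq 0$ I would reduce to proving $\sigma_1(P)\geq 1$ for any row-stochastic matrix: since $P\mathbf{1}=\mathbf{1}$, the value $1$ is an eigenvalue of $P$, hence $\sigma_1=\|P\|_2\geq 1$, and because $\sigma_i\geq 0$ we get $\Gamma_\alpha\geq \sigma_1^\alpha\geq 1$ for any $\alpha\in(0,1)$. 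Direct substitution at $P=\tfrac{1}{N}\mathbbm{1}_{N\times N}$ verifies both bounds are attained there: all rows equal $\bar P$ so every KL term vanishes, and the matrix has rank one with its unique nonzero singular value equal to $1$.

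For the shared maximum at permutation matrices, the bound $\Gamma_\alpha\leq N$ with equality exactly at permutations is already Proposition~\ref{thm.maximum}, giving $\log\Gamma_\alpha\leq\log N$ with the correct maximizers. For $EI$ I would exploit the decomposition $EI=-\bar H(P_i)+H(\bar P)$ from Equation~\ref{eq:EI_two_terms}: bounding the first term by $0$ and the second by $\log N$ gives $EI\leq \log N$, and equality forces $\bar H(P_i)=0$ (every row one-hot) together with $H(\bar P)=\log N$ ($\bar P$ uniform). A collection of one-hot rows whose average is uniform must hit each column exactly once, which characterizes permutation matrices; a direct computation on any such matrix returns $EI=\log N$, matching the bound.

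The main, if mild, obstacle is the rigidity half of the minimum statement for $\Gamma_\alpha$: showing that $\Gamma_\alpha=1$ pins down the uniform TPM uniquely rather than leaving the full family of rank-one stochastic matrices as minimizers. The argument I have in mind is that $\Gamma_\alpha=1$ combined with $\sigma_1\geq 1$ and $\sigma_i\geq 0$ forces $\sigma_1=1$ and $\sigma_i=0$ for $i>1$, so $P=\mathbf{1}v^{T}$ for some probability vector $v$; the identity $\|P\|_F=\sqrt{N}\,\|v\|_2=\sigma_1=1$ then gives $\|v\|_2=1/\sqrt{N}$, and the Cauchy--Schwarz equality case among probability vectors of $\ell^{1}$-norm $1$ forces $v=(1/N,\ldots,1/N)$. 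No analogous uniqueness holds for $EI$, since $EI=0$ requires only that the rows of $P$ coincide and not that they be uniform; the proposition correctly asserts only that $\tfrac{1}{N}\mathbbm{1}_{N\times N}$ is a common minimum point of the two functionals, consistent with this asymmetry.
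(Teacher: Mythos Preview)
Your proof is correct and follows the same four-part structure as the paper (separate lower and upper bounds, with equality cases, for each of $EI$ and $\log\Gamma_\alpha$), but your treatment of the $\Gamma_\alpha$ minimum is genuinely different and cleaner. The paper's route (Lemma~\ref{thm.lowerbound_gamma}, via Lemma~\ref{lem:min_rank}) argues informally that minimizing $\Gamma_\alpha$ should push the rank of $P$ down to $1$, then minimizes $|P_1|^\alpha N^{\alpha/2}$ over probability vectors; the step ``minimum of $\Gamma_\alpha$ is reached when rank is minimized'' is asserted rather than derived. Your argument sidesteps this entirely: from $P\mathbf{1}=\mathbf{1}$ you get $\sigma_1\geq 1$ directly, hence $\Gamma_\alpha\geq\sigma_1^\alpha\geq 1$, and the equality analysis ($\sigma_1=1$, $\sigma_{i>1}=0$, then $P=\mathbf{1}v^T$ with $\|v\|_2=1/\sqrt{N}$ and $\|v\|_1=1$ forcing $v$ uniform by Cauchy--Schwarz) is both shorter and fully rigorous. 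For the $EI$ minimum, you use KL nonnegativity where the paper (Lemma~\ref{thm:EI_derivitive}) uses a first-derivative computation; both are fine, yours is more direct. For the two maxima, your arguments coincide with the paper's Lemma~\ref{thm:EI_maximum} and its citation of Proposition~\ref{thm.maximum}.
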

\begin{proof}
    The proof can be referred to Supplementary \ref{sec:compare_gamma_EI_app}.
\end{proof}

Thus, $\log\Gamma_{\alpha}$ and $EI$ can reach their maximal values $\log N$ when $P$ is reversible (permutation matrix). They also achieve their minimal values ($0$) when $P_i=\mathbbm{1}/N, \forall i\in\{1,2,\cdots,N\}$, where $\mathbbm{1}\equiv(1,1,\cdots,1)$. However, we can prove that $\mathbbm{1}/N$ is not the unique minimum point of $EI$, any TPM with $P_i=P_j$ for any $i,j\in \{1,2,\cdots,N\}$ can make $EI=0$ (see Lemma \ref{thm:EI_derivitive} and Corollary \ref{cor:EI_convex} in Supplementary \ref{sec:theorems_EI}).

Second, $EI$ is upper and lower bounded by an affine term of $\log\Gamma_{\alpha}$. This point can be formally stated as the following theorems.
\\
\begin{thm}
    \label{thm.EI_boundby_gamma}
    For any TPM $P$, its effective information $EI$ is upper bounded by $\frac{2}{\alpha}\log\Gamma_{\alpha}$, and lower bounded by $\log\Gamma_{\alpha}-\log N$. 
\end{thm}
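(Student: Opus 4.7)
The plan is to prove the two bounds separately, first reducing the upper bound to the case $\alpha=2$ (where $\Gamma_2=\|P\|_F^2$) and then lifting it to all $\alpha\in(0,2)$ via $\ell^p$-norm monotonicity on the singular-value vector.

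For the upper bound I would start from the decomposition in Equation~\ref{eq:EI_two_terms}, $EI=H(\bar{P})-\frac{1}{N}\sum_i H(P_i)$, and bound each piece in turn. Applying the Rényi comparison $H(P_i)\geq H_2(P_i)=-\log\|P_i\|_2^2$ (a direct Jensen consequence, since $-\log$ is convex) row by row gives $EI\leq H(\bar{P})+\frac{1}{N}\sum_i\log\|P_i\|_2^2$. A second Jensen step on the row average, using concavity of $\log$, yields $\frac{1}{N}\sum_i\log\|P_i\|_2^2\leq\log\!\bigl(\frac{1}{N}\sum_i\|P_i\|_2^2\bigr)=\log(\|P\|_F^2/N)$. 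Using the trivial bound $H(\bar{P})\leq\log N$ then cancels the $-\log N$ term and delivers $EI\leq\log\|P\|_F^2=\log\Gamma_2$. To lift this to arbitrary $\alpha\in(0,2)$, I invoke the standard monotonicity of $\ell^p$-norms on nonnegative finite sequences: since $\alpha\leq 2$, $\|\sigma\|_\alpha\geq\|\sigma\|_2$ for the singular-value vector $\sigma=(\sigma_1,\ldots,\sigma_N)$, which rearranges to $\Gamma_\alpha^{2/\alpha}\geq\Gamma_2$, i.e., $\frac{2}{\alpha}\log\Gamma_\alpha\geq\log\Gamma_2\geq EI$, giving the claimed upper bound.

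For the lower bound, Proposition~\ref{thm.maximum} already does the heavy lifting: since $\Gamma_\alpha\leq N$ for every $\alpha\in(0,2)$, one has $\log\Gamma_\alpha-\log N\leq 0$. Combined with the obvious nonnegativity of $EI$ (it is an average of KL divergences), we conclude $EI\geq 0\geq\log\Gamma_\alpha-\log N$, which is precisely the desired inequality.

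The main obstacle will be the two-Jensen chain tying $EI$ to $\|P\|_F^2$: the Rényi bound must be applied row-wise before averaging, and one must track carefully that $\log\|P_i\|_2^2\leq 0$ so that the signs align correctly in the second Jensen step. Once $EI\leq\log\Gamma_2$ is secured, the passage to general $\alpha$ via the power-mean (norm-monotonicity) inequality is routine, and the lower bound reduces to citing Proposition~\ref{thm.maximum}.
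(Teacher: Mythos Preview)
Your proposal is correct and follows essentially the same route as the paper. Both arguments establish $EI\leq\log\|P\|_F^2$ via two Jensen applications (your R\'enyi comparison $H\geq H_2$ is exactly the paper's first inequality $\sum_j p_{ij}\log p_{ij}\leq\log\sum_j p_{ij}^2$, and the row-averaging step is identical) together with $H(\bar P)\leq\log N$, then lift to general $\alpha$ via the $\ell^p$-norm monotonicity on the singular-value vector (which is precisely the content of the paper's Lemma~\ref{thm:lowerbound}); the lower bound is handled identically by combining $EI\geq 0$ with $\Gamma_\alpha\leq N$ from Proposition~\ref{thm.maximum}. One minor remark: your stated concern about tracking the sign of $\log\|P_i\|_2^2$ in the second Jensen step is unnecessary, since Jensen's inequality for the concave logarithm applies to any positive arguments regardless of whether the logs are negative.
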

\begin{proof}
    The proof can be referred to Supplementary \ref{sec:compare_gamma_EI_app}.
\end{proof}

Therefore, we have the following inequality:
\begin{equation}
    \label{eq.EI_bounds}
    \log\Gamma_{\alpha}-\log N \leq EI\leq \frac{2}{\alpha}\log\Gamma_{\alpha}.
\end{equation}

Actually, a tighter upper bound for $EI$, $EI\leq \log\Gamma_{\alpha}$, is found empirically and numerically as the results shown in the next section. We also found that $EI$ and $\Gamma_{\alpha}$ always positively correlated in many cases. Therefore, we propose an approximate relationship exists: 

\begin{equation}
\label{eq:EI_Gamma_Approx}
    EI\sim \log\Gamma_{\alpha}.
\end{equation}

\subsubsection{A new quantification for causal emergence}
\label{sec:new_quantification}
One of the major contribution of this paper is a new quantification for causal emergence based on dynamical reversibility and singular values, and this quantification is independent on any selection of coarse-graining method. Because $\Gamma_{\alpha}$ is the summation of the $\alpha$ powers of the singular values of $P$, removing zero or approximate zero singular values does not change $\Gamma_{\alpha}$. 

First, two new definitions about causal emergence are given.
\\
\begin{dfn}
    \label{dfn:clear_emergence}
    For a given markov chain $\chi$ with TPM $P$, if $r\equiv rank(P)<N$ then \textbf{clear causal emergence} occurs in this system. And the degree of CE is 
    \begin{equation}
        \label{eq:degree_clear_emergence}
        \Delta\Gamma_{\alpha} = \Gamma_{\alpha}\cdot(1/r - 1/N).    
    \end{equation}
\end{dfn}

\begin{dfn}
    \label{dfn:vague_emergence}
    For a given markov chain $\chi$ with TPM $P$, suppose its singular values are $\sigma_1\geq \sigma_2\geq\cdots\geq\cdots\geq\sigma_N\geq 0$. For a given real value $\epsilon\in [0,\sigma_1]$, if there is an integer $i\in\{1,2,\cdots,N\}$ such that $\sigma_i>\epsilon$, then there is \textbf{vague causal emergence} with the level of vagueness $\epsilon$ occurred in the system. And the degree of CE is:
    \begin{equation}
        \label{eq:degree_vague_emergence}
        \Delta\Gamma_{\alpha}(\epsilon) = \frac{\sum_{i=1}^{r_{\epsilon}}\sigma_i^{\alpha}}{r_{\epsilon}}-\frac{\sum_{i=1}^{N}\sigma_i^{\alpha}}{N},
    \end{equation}
    where $r_{\epsilon}=\max\{i|\sigma_i>\epsilon\}$
\end{dfn}
These definitions are independent of any coarse-graining method. As a result, it represents an intrinsic and objective property of Markov dynamics. Thus, the occurrences of both clear and vague CE, as well as the extent of such emergence, can be objectively quantified.

It is not difficult to see that clear CE is a special case of vague CE when $\epsilon=0$, and it has theoretical values particularly when the singular values can be solved analytically. Furthermore, the judgement on the occurrence of CE is independent of $\alpha$ because it relates only on the rank. As a result, the notion of clear CE is only determined by $P$ and is parameter free. 

While, in practice, the threshold $\epsilon$ must be given because the singular values may approach 0 infinitely while $P$ is full rank. $\epsilon$ can be selected according to the relatively clear cut-offs in the spectrum of singular values (or the logarithm of singular values). If $\epsilon$ is very small (say $\epsilon < 10^{-10}$), we can also say CE occurs roughly. {Some indicators, e.g. effective rank, can help us to select the appropriate cut-off~\cite{thibeault2024low,antoulas2005overview}.}

As Proposition \ref{thm:CE_bounds} claims, $\Delta \Gamma_{\alpha}(\epsilon)\in [0,N-1]$ for any $\epsilon\geq 0$, and CE occurs if and only if $\Delta \Gamma_{\alpha}(\epsilon)>0$ according to Corollary \ref{eq:CE_occurence_condition}. The proposition, the corollary and the proves can be found in Supplementary \ref{sec:quantify_CE_app}. 

{Second, the rationale behind Definitions \ref{dfn:clear_emergence} and \ref{dfn:vague_emergence} for causal emergence stems from the observation that, when the coarse-graining strategy is chosen by projecting the probability masses of micro-states onto the directions aligned with the singular vectors corresponding to the largest singular values, both $\gamma_{\alpha}$ and $EI$ of the coarse-grained TPM can be increased. Consequently, setting the coarse-graining method to the vectors being parallel to the major singular vectors is a necessary condition of EI maximization, as supported by a theoretical analysis and two numeric examples demonstrated in Supplementary \ref{sec:causal_emergence_reason}.}

{
Similar to Equation \ref{eq:dfn_CE}, Equations \ref{eq:degree_clear_emergence} and \ref{eq:degree_vague_emergence} characterize the potentially maximal increase in information transfer efficiency or averaged dynamical reversibility of the TPM achieved with the least information loss through an optimal coarse-graining strategy (even if the strategy itself does not need to be explicitly determined). In this context, the threshold $\epsilon$ in Equation \ref{eq:degree_vague_emergence} can be interpreted as the precision requirement for the optimal coarse-grained macro-dynamics.
}

{In equations \ref{eq:degree_clear_emergence} and \ref{eq:degree_vague_emergence}, the efficiency is quantified using the state-averaged dynamical reversibility, $\gamma_{\alpha}(r) \equiv \sum_{i=1}^r \sigma_i / r$, where $r$ represents the effective number of states, which also corresponds to the number of effective information pathways. Alternatively, the efficiency can also be quantified by replacing $\gamma_{\alpha}(r)$ with $\log\left( \sum_{i=1}^r \sigma_i \right) / \log r$, which serves as an analog to $eff\equiv EI/\log N$ by leveraging the approximate relationship $EI \sim \log \Gamma_{\alpha}$. Under this formulation, equations \ref{eq:degree_clear_emergence} and \ref{eq:degree_vague_emergence} can be interpreted as the differences in $eff$ between macro- and micro-dynamics under the potentially optimal coarse-graining strategy, further reinforcing their role as measures of CE.}

{As a result, we can conclude that the essence of the CE phenomenon lies in the presence of redundant information pathways underlying $P$, which are non-reversible and represented by singular vectors corresponding to zero or near-zero singular values. The quantification of CE is achieved by measuring the potentially best improvement in averaged reversibility($\gamma_{\alpha}(r)$) or information transmission efficiency when these redundant channels are removed by potentially optimal coarse-graining strategy.
 }
%Secondly, both $EI$ and $\Gamma$ are correlated with the auto-correlation matrix $P\cdot P^T$. As expressed in Equation \ref{eqn.ei_definition}, $EI$ actually is the ``de-similarity'' between each $P_i,\forall i\in[1,N]$ and their average $\Bar{P}$, measured by KL-divergence. While, any entry in $P\cdot P^T$ is the similarity between $P_i$ and $P_j,\forall i,j\in[1,N]$ but measured by inner product. On the other side, all squared singular values of $P$ are the eigenvalues of $P\cdot P^T$, and $\Gamma$ is the trace of $\sqrt{P\cdot P^T}$ and it also decreases when all $P_i$s are similar(see Theorem \ref{thm.upperbound_gamma} in Appendix \ref{sec:theorems_EI}).

\subsection{Experiments}
In this section, we will show the numeric results on the comparison between $EI$ and $\Gamma_{\alpha}$, the quantification of CE under the new framework. A new method of coarse-graining based on SVD is proposed to compare the results derived from EI. In this section, we use $\Gamma$ to abbreviate $\Gamma_{\alpha=1}$ if there is no extra declarations.
\subsubsection{Comparisons of $EI$ and $\Gamma$}
% \subsubsubsection{Similarities}
\textbf{a. Similarities}

In section \ref{sec:comparison_EI_gamma}, we have derived that $EI$ is upper and lower bounded by a linear term of $\log\Gamma_{\alpha}$ and we conjecture an approximate relationship: $EI\sim \log\Gamma_{\alpha}$ in theory. We will further verify these conclusions by numeric experiments in this section.

We compare $\Gamma_{\alpha}$ and $EI$ on a variety of normalized TPMs generated by three different methods: 1) softening of permutation matrices; 2) softening of degenerate matrices; and 3) totally random matrices.
Permutation matrices consist of orthogonal one-hot row vectors, whereas the row vectors in degenerate matrices may contain repeated rows. 

The approach to soften permutation and degenerate matrices involves assigning an intensity value $p_{ij}$ to each entry positioned at the $i$th row and the $j$th column of the generated permutation matrix $P$. Here, $p_{ij} = \frac{1}{\sqrt{2\pi}\sigma} \exp{\left(-\frac{(j-c_i)^2}{\sigma^2}\right)}$, where $c_i$ represents the position of the element with a value of one in the $i$th row vector as a one-hot vector, and $\sigma$ is a parameter that regulates the softening intensity. {For detailed information, refer to} Supplementary section \ref{sec:perturbation_permutation}.

Notice that, in all cases, the final values of $p_{ij}$ are normalized by dividing by $\sum_{j=1}^N{p_{ij}}$, ensuring that $p_{ij}$ represents transitional probabilities. The results are shown in Figure \ref{fig:comparison}. The details of these generative models are in Supplementary \ref{sec:comparison_experiments}.

As shown in Figure \ref{fig:comparison}(a), (b) and (c), a positive correlation is observed on all these examples, and the approximate relationship $EI\sim \log\Gamma$ is confirmed for large $N\gg 1$. This relation is obviously observed in Figure \ref{fig:comparison}(a) and (b), but degenerates to a nearly linear relation in Figure \ref{fig:comparison}(b) since limited value region of $\Gamma$ is covered. More results on different $\alpha$ can be referred to Supplementary section \ref{sec:perturbation_permutation}.

We also show the upper and lower bounds of $EI$ by the red dashed lines in Figure \ref{fig:comparison}(a) and (b). However, in Figure \ref{fig:comparison}(c), the theoretical bounds are not visible due to the concentration of points in a small area.

Empirically, a tighter upper bound of $EI$ by $\log\Gamma_{\alpha}$ is found as the grey broken lines shown in Figure \ref{fig:comparison}. Therefore, we conjecture a new relationship, $EI\leq \log\Gamma_{\alpha}$ holds, but the rigour proof left for future works.

We also obtain an analytic solution for $EI$ and $\Gamma$ on the simplest parameterized TPM with size $N=2$ (see Supplementary \ref{sec:simplist_case}), and we show the landscape how $EI$ and $\Gamma$ are dependent on the parameters $p$ and $q$, where $p$ and $q$ are the diagonal elements for the 2*2 TPM, $P=\begin{pmatrix}p & 1-p \\1-q & q\end{pmatrix}$. It is clear that both $EI$ and $\Gamma$ attain their maximum values on the diagonal regions ($p=q=0$ or $p=q=1$, in this case, $P$ is a permutation matrix). The differences between Figure \ref{fig:comparison}(e) and (f) are apparent: 1) $\Gamma$ has a peak value when $p\approx 1-q$ but $EI$ has not because it takes same value(0) when $p\approx 1-q$ and $q\approx 1-p$; 2) a broader region with $EI\approx 0$ is observed, while the region with $\Gamma\approx 1$ is much smaller; 3) an asymptotic transition from 0 to maximal $N=2$ is observed for $\Gamma$, but not for $EI$ because $\Gamma$ is a convex function but $EI$ is not(see Corollary 1 in Supplementary Material for section \ref{sec:theorems_EI}). 

While differences exist, our observations indicate that for the majority of regions within $p$ and $q$, there is a strong correlation between $EI$ and $\Gamma_{\alpha}$ in Figure \ref{fig:comparison}(e) and (f).\\

\textbf{b. Differences}
\label{sec:EI_Gamma_Diff}

Although deep connections between $EI$ and $\Gamma_{\alpha}$ have been found, differences between these two indicators exist.  

Firstly, $EI$ quantifies the distinctions between each row vector and the average row vector of $P$ through KL-divergence as defined in Equation \ref{eqn.ei_definition2}. Put differently, $EI$ gauges the resemblances among the row vectors. Conversely, $\Gamma_{\alpha}$ assesses the dynamic reversibility, particularly as $\alpha$ approaches $0$, which correlates with the linear interdependence among the row vectors. While the linear interdependence of row vectors suggests their similarity -- meaning two identical row vectors are linearly dependent -- the reverse is not necessarily true. Consequently, $\Gamma_{\alpha}$ not only captures the similarities among row vectors but also the proximity of $P$ to a dynamically reversible matrix. In contrast, $EI$ cannot accomplish this task.

This assertion can be validated through the following numerical experiments: we can create TPMs by blending linearly dependent row vectors with linearly independent row vectors, where the number of independent vectors, or the rank, is a controlled parameter. The matrices are constructed using $r$ orthogonal one-hot vectors as the base vectors and $N-r$ real vectors. To soften the base vectors with a magnitude of $\sigma$, we employ a method similar to that in the previous section and Supplementary section \ref{sec:perturbation_permutation}. The remaining $N-r$ real vectors are generated by linearly combining the base vectors, with coefficients sampled from random real numbers uniformly distributed on the interval $[0,1]$. The size $N=50$ is fixed in these experiments.%Initially, we generate $r$ independent one-hot vectors, then soften these row vectors using the same method as described in Supplementary section \ref{sec:perturbation_permutation}, with the degree of softening determined by $\sigma$. Subsequently, we create additional row vectors by linearly combining these softened one-hot vectors with randomly chosen linear coefficients. 
We then quantify the disparity between $\Gamma$ and $EI$, with the outcomes depicted in Figure \ref{fig:comparison}(d).

It becomes evident that for small values of $r$, with increasing $\sigma${(the standard deviation of the normal distribution in the softening method mentioned in previous sub-section)}, the disparity between $\log\Gamma$ and $EI$ diminishes, given that the linear {independence} of $P$ strengthens as vectors become more distinct. This underscores that linear dependency does not equate to the similarity between row vectors. However, as the number of independent row vectors grows, if $\sigma$ remains small, $P$ converges towards a permutation matrix. Consequently, both $EI$ and $\log\Gamma$ reach identical maximum values. This elucidates why a slight bump is noticeable when $r$ is substantial.

Secondly, a significant distinction exists between $EI$ and $\Gamma_{\alpha}$ even in the scenarios where all row vectors are identical, resulting in $EI=0$ while $\Gamma_{\alpha}=||\bar{P}||^{\alpha}\cdot N^{\alpha/2}$, a quantity that can vary with $||\bar{P}||$ (refer to Lemma \ref{lem:min_rank}, Supplementary \ref{sec:gamma}). This discrepancy implies that $\Gamma_{\alpha}$ -- as opposed to $EI$ -- can provide more comprehensive insights regarding the row vectors beyond their similarity to the average row vector.

{The differences between $EI$ and $\Gamma_{\alpha}$ suggest that the linear dependency of the information pathways, represented by the row vectors in $P$, may influence both their correlations and the CE of the dynamics, but cannot be captured by EI.}

\subsubsection{Quantifying causal emergence}
Examples for clear and vague CE are shown in this section. First, to show the validity of our new framework for CE, {particularly, the equivalence to the framework of EI maximization,} several Markov dynamics on Boolean networks which have been proposed in Hoel et al.'s papers~\cite{Hoel2013,Hoel2017} are selected to test our method, and compare with the results for CE derived from the method of EI. {All the coarse-grained strategies in these examples are optimal for EI maximization as claimed by Hoel et al in reference~\cite{Hoel2013}}.

Two examples of TPMs generated from the same Boolean network model with identical node mechanism for clear emergence and vague emergence are shown in Figure \ref{fig:emergence_renormalization}(a-i), respectively. The TPM in Figure \ref{fig:emergence_renormalization}(d) is derived from the Boolean network and its node-mechanism in Figure \ref{fig:emergence_renormalization}(a) and (b) directly. Their singular value spectra are shown in Figure \ref{fig:emergence_renormalization}(e) and (h), respectively. 

There are only 4 non-zero singular values (Figure \ref{fig:emergence_renormalization}(e)) for the first example in (d), therefore, clear CE occurs, and the degree of CE is $\Delta\Gamma=0.75$. This judgement for the occurrence of CE is the same as the example mentioned in Figure 2 in ref \cite{Hoel2013}.

Vague CE can be shown on the TPM in Figure \ref{fig:emergence_renormalization}(g), which is derived from (d) by adding random Gaussian noise with strength ($std=0.03$) on the TPM in (d). As a result, the singular spectrum is obtained as shown in Figure \ref{fig:emergence_renormalization}(h). We select $\epsilon=0.2$ as the threshold such that only 4 large singular values are left. The degree of CE is $\Delta\Gamma(0.2)=0.69$. The value of $\epsilon$ is selected according to the spectrum of singular values in Figure \ref{fig:emergence_renormalization}(h) where a clear cut-off at the index of $3$ and $\epsilon=0.2$ can be observed.

Figure \ref{fig:CA}(a-f) shows another example of clear CE on a more complex Boolean network model from the reference of \cite{Hoel2013}, where 6 nodes with the same node mechanism can be grouped into 3 super-nodes to show CE. The corresponding TPM of the original Boolean network model is shown in Figure \ref{fig:CA}(c). The spectrum of the singular values is shown in Figure \ref{fig:CA}(d) where 8 non-zero values are found. The degree of this clear CE is $\Delta\Gamma=2.23$. The same judgement of the occurrence of CE is obtained as in \cite{Hoel2013}. More examples on Boolean networks in references \cite{Hoel2013} and \cite{Hoel2017} can be referred to Supplementary Section \ref{sec:example_booleans}.

The quantification for CE can be applied on complex networks (Figure \ref{fig:emergence_renormalization}(j-l)) and cellular automata (Figure \ref{fig:CA}(g-i)). The example of vague CE is shown for complex networks generated by stochastic block models(SBM) with three sets of parameters (inner or intra connection probabilities) on Figure \ref{fig:emergence_renormalization}(j-l) and the same number of blocks(communities) which is 5. The TPMs are obtained by normalizing the adjacency matrix of the network by each node's degree. One of the exampled network with 100 nodes and 5 blocks (communities) is shown in Figure \ref{fig:emergence_renormalization}(j) and its spectrum of singular values is shown in Figure \ref{fig:emergence_renormalization}(k) where a clear cut-off ($\epsilon=0.3, r_{\epsilon}=5$) can be observed at the horizontal coordinate as same as the number of blocks. We can ascertain that vague CE occurs in this network model with the degree of CE, $\Delta\Gamma(0.3)=0.56$. Two more spectra of the networks generated by the SBM with the same size and the number of blocks(5) but different parameters are shown in the same figure. 

The definition on clear CE can be applied on cellular automata to discover its local emergent structures as shown in Figure \ref{fig:CA}(g-i). In this example, we quantify clear CE for local TPMs of a cellular automaton (number 40 elementary one-dimensional cellular automaton). The local TPM is obtained by the local windows including each cell and its two neighbors. The possible spectra of singular values for these local TPMs are shown in Figure \ref{fig:CA}(h) where clear CE may or may not occur. Figure \ref{fig:CA}(i) shows the distributions of clear CEs ($\Delta\Gamma$) for all cells and time steps with the red dot markers for $\Delta\Gamma>0$. We also plot the original evolution of this automaton as the background. From this experiment, we can draw a conclusion that we can identify the local emergent structures with the quantification of causal emergence for local TPMs. \\

\subsubsection{Coarse-graining based on SVD}
Although our quantification method of CE is coarse-graining method agnostic, to compare the results against the theory based on EI, we invent a concise coarse-graining method based on SVD. The details about this method can be referred to section \ref{sec:coarse-graining}.

We test our method on all the examples shown in Figure \ref{fig:emergence_renormalization} and \ref{fig:CA}. First, for the two TPMs generated according to the same Boolean network model shown in Figure \ref{fig:emergence_renormalization}(d) and (g), their coarse TPMs are shown in Figure \ref{fig:emergence_renormalization}(f) and (i), respectively. The macro-level Boolean network model (Figure \ref{fig:emergence_renormalization}(c)) can be read out from the TPMs and the projection matrix $\Phi$. Notably, the $\Gamma$ s in the coarse TPMs are almost identical to the original ones, which means that, in this scenario, our method effectively maintains $\Gamma$ with minimal alteration. We further test the CE examples in the references \cite{Hoel2013,Hoel2017}, and almost identical coarse TPMs can be obtained.  

Second, the reduced TPM of the original one (Figure \ref{fig:CA}(a)) can be obtained by the same coarse-graining method as shown in Figure \ref{fig:CA}(e), and the projection matrix $\Phi$ is shown in (f). The coarse-grained Boolean network can be read out from the reduced TPM and the projection matrix as shown in Figure \ref{fig:CA}(b). In this example, although $\Gamma$ is much reduced (from $\Gamma=20.39$ to $\Gamma=8.0$) due to the loss of information in coarse graining, the normalized approximate dynamical reversibility increases (from $\gamma=0.32$ increases to $\gamma=1.0$). {The consistent coarse-graining results from the SVD and EI maximization methods confirm the equivalence of the two theories for CE.}

The same coarse-graining method can also be applied in complex networks generated by SBM. The reduced network with 5 nodes that is derived from the original network (Figure \ref{fig:emergence_renormalization}(j)) is shown in Figure \ref{fig:emergence_renormalization}(l). In this example a relatively large decrease of $\Gamma$ (from $\Gamma=13.30$ to $\Gamma=3.33$) and a large increase of $\gamma$ (from $\gamma=0.13$ to $\gamma=0.67$) are also observed simultaneously. This indicates a large amount of information is lost during the coarse-graining, while a relatively more effective small network model with larger normalized approximate dynamical reversibility can be obtained. 

{We also compared the results of the SVD method and the EI maximization method on SBM networks and cellular automata by calculating the similarities between the vectors representing the optimal coarse-graining strategy for EI maximization and the singular vectors associated with the largest singular values, as detailed in Supplementary Section \ref{sec:consistency_maxEI_SVD}. }

%Therefore, the occurrences of both clear and vague CE, as well as the extent of such emergence, can be objectively quantified without dependence on any specific coarse-graining method.\\
\section{Discussion}
In summary, our investigation reveals that the Schatten norm, defined as the sum of the $\alpha$ powers of all singular values of a TPM, serves as a valuable indicator of the approximate dynamical reversibility within Markov dynamics. Both theoretical insights and empirical examinations substantiate a positive association, indicating a rough correspondence between the Effective Information ($EI$) metric and the logarithm of the measure $\Gamma_{\alpha}$. Notably, it is crucial to distinguish between these two metrics. While $\Gamma_{\alpha}$ effectively quantifies the degeneracy of a TPM, particularly as $\alpha\rightarrow 0$, encompassing scenarios where all row vectors are linear correlation, $EI$ solely characterizes the similarity between these row vectors and falls short in distinguishing degenerate yet dissimilar row vectors. 

Expanding on the concept of $\Gamma_{\alpha}$, we have introduced a novel CE theory that offers a more refined definition, capturing the intrinsic properties of a system regardless of the specific coarse-graining technique employed. {By applying SVD to a TPM, we have revealed that the core of CE lies in the presence of redundant irreversible information pathways in the dynamics. Therefore, the degree of CE can be quantified by the maximum potential improvement in the averaged approximate dynamical reversibility or the efficiency of information transfer within the dynamics if the redundant information pathways are discarded.}

{
The validity of our CE framework is supported by the approximate relationship $EI \sim \log \Gamma_{\alpha}$ and the strong positive correlation observed in numerical examples. We also demonstrated the equivalence of the SVD-based and EI maximization-based frameworks for CE quantification, with the former serving as a necessary condition for the latter. All numerical experiments conducted on Boolean networks, cellular automata, and complex networks discussed in this paper support this conclusion.
}

{
Although this theoretical framework parallels the CE theory based on $EI$, quantifying CE using SVD offers greater insights. Firstly, this work demonstrated that the redundancy in Markov dynamics is evident in the correlations between different information pathways, represented by the row vectors of the TPM. While these correlations can be partially characterized by EI, the differences between $\Gamma_{\alpha}$ and $EI$ discussed in Section \ref{sec:EI_Gamma_Diff} indicate that more correlations may arise from the linear dependencies among row vectors. These cannot be quantified by EI but can be captured by $\Gamma_{\alpha}$. Secondly, due to the independency of our framework on the concrete coarse-graining strategy, the problem of finding the optimal coarse-graining strategy for EI maximization and the problem of ambiguity and the violation for the commutativity do not exist. The SVD-based quantification of CE discusses a potentially optimal increase on the efficiency of information transmission. Finally, our method implies a way to find the optimal coarse-graining strategy for maximizing $EI$: to project more probability masses of states onto the directions of singular vectors corresponding to the largest singular values, that is the basic idea of the coarse-graining method mentioned in section \ref{sec:coarse-graining}.}

{Our framework has several potential extensions. One interesting direction is to incorporate Rosas' framework of CE based on integrated information decomposition~\cite{rosas2020reconciling}. The synergetic nature of dynamics should be represented by the TPM and characterized through SVD. Another promising extension is to explore the relationship between SVD and $\Phi$, the degree of information integration, as integrated information theory is also grounded in effective information~\cite{tononi2004information,oizumi2014phenomenology}. Our work aligns with the low rank hypothesis in complex systems as discussed in ~\cite{thibeault2024low,sun2021eigen}. While these studies also utilized the SVD method to uncover emergent phenomena in complex systems, their objectives differ. This paper focuses on the transition probability matrix (TPM) that describes system dynamics, whereas Thibeault et al. analyze the coefficient matrix of complex dynamics, and Chen et al. concentrate on the matrix of fluctuations. Further research is needed to explore the relationships among these approaches. 
}

{Furthermore, this research establishes a potential link between statistical physics and artificial intelligence by framing both coarse-graining and macro-dynamics (coarse-grained TPM) as computational processes performed by an intelligent agent. Traditional statistical mechanics indicates that the information lost during coarse-graining corresponds to Boltzmann entropy~\cite{Boltzmanns_entropy_formula}. The resulting emergent macro-dynamics can be viewed as the agent's internal model representing the underlying micro-dynamics. Consequently, this work on CE suggests that the agent seeks to develop a reversible dynamical model of reality, while incurring information loss during the coarse-graining observation process. Therefore, the intelligent agent’s challenge lies in finding an optimal balance between the information loss from coarse-graining and the resulting gains in reversibility. Ironically, as quantum mechanics suggests, the fundamental world is dynamically reversible.
}

{This work has several weak points. First, the definition and quantification of emergence are abstract due to the absence of a coarse-graining strategy, which fails to capture the conventional idea that the whole is greater than the sum of its parts. Second, discussions have primarily focused on state space, while an approach incorporating variable space would be more practical, despite the exponential growth of state space size with the number of variables, posing a significant challenge. Lastly, this work relies on the TPM of dynamics, but estimating the TPM from data is difficult. }

Another weak point in this study is that the theoretical upper and lower bounds of $EI$ by $\log\Gamma_{\alpha}$ should be further studied since a tighter upper bound is empirically observed from numeric experiments, but cannot be proved. More mathematical tools and methods should be developed to study this problem. As a result, further research is warranted to address these issues.

% \backmatter

\section{Method}
% \subsection{Coarse-graining strategy based on SVD}
\label{sec:coarse-graining}
Although clear or vague CE phenomena mentioned in Sec \ref{sec:new_quantification} can be defined and quantified without coarse-graining, a simpler coarse-grained description for the original system is needed to compare with the results derived from EI. Therefore, we also provide a concise coarse-graining method based on the singular value decomposition of $P$ to obtain a macro-level reduced TPM. The basic idea is to project the row vectors $P_i,\forall i\in \{1,2,\cdots,N\}$ in $P$ onto the sub-spaces spanned by the singular vectors of $P$ such that the major information of $P$ is conserved, as well as $\Gamma_{\alpha}$ is kept unchanged. 

\subsection{Coarse-graining strategy based on SVD}
Concretely, the coarse-graining method contains five steps: 

1) We first make SVD decomposition for $P$ (suppose $P$ is irreducible and recurrent such that stationary distribution exist):
\begin{equation}
    \label{eq:svd_decomp}
    P = U\cdot\Sigma\cdot V^T, 
\end{equation}
where, $U$ and $V$ are two orthogonal and normalized matrices with dimension $N\times N$, and $\Sigma=\rm{diag}(\sigma_1,\sigma_2,\cdot\cdot\cdot,\sigma_N)$ is a diagonal matrix which contains all the ordered singular values. 

2) Selecting a threshold $\epsilon$ and the corresponding $r_{\epsilon}$ according to the singular value spectrum;

3) Reducing the dimensionality of row vectors $P_i$ in $P$ from $N$ to $r$ by calculating the following Equation:
\begin{equation}
    \label{eq:P_V_projection}
    \Tilde{P}\equiv P\cdot V_{N\times r},
\end{equation}
where $V_{N\times r}=(V_1^T,V_2^T,\cdot\cdot\cdot,V_r^T)$. 

4) Clustering all row vectors in $\Tilde{P}$ into $r$ groups by K-means algorithm to obtain a projection matrix $\Phi$, which is defined as:
\begin{equation}
    \label{eq:projection_matrix}
    \Phi_{ij}=\begin{cases} 1 &\mbox{if $\Tilde{P}_i$ is in the $j$th group}\\ 
0 & \mbox{otherwise},
\end{cases} 
\end{equation}
for $\forall i\in\{1,2,\cdots,N\}$, and $\forall j\in\{1,2,\cdots,r\}$.

5) Obtain the reduced TPM according to $P$ and $\Phi$. 

To illustrate how we can obtain the reduced TPM, we will first define a matrix called stationary flow matrix as follows:
\begin{equation}
    \label{eq:stationary_flow}
    F_{ij}\equiv \mu_i\cdot P_{ij}, \forall i,j\in\{1,2,\cdots,N\},
\end{equation}
where $\mu$ is the stationary distribution of $P$ which satisfies $P\cdot\mu=\mu$. 

Secondly, we will derive the reduced flow matrix according to $\Phi$ and $F$:
\begin{equation}
    \label{eq:F_definition}
    F'=\Phi^T\cdot F\cdot\Phi,
\end{equation}
where, $F'$ is the reduced stationary flow matrix. In fact, this matrix multiplication inherently aggregates the flux across all micro-states (nodes) within a group to derive the flux for the corresponding macro-states (nodes). Finally, the reduced TPM can be derived directly by the following formula:
\begin{equation}
    \label{eq:coarse-graining}
    P'_i=F'_i/\sum_{j=1}^{r}(F'_i)_j, \forall i\in\{1,2,\cdots,r\}.
\end{equation}
Finally, $P'$ is the coarse-grained TPM.

\subsection{Explanation}
We will explain why this coarse-graining strategy outlined in the previous sub-section works here.

In the first step, the reason why we SVD decompose the matrix $P$ is that the singular values of $P$ actually are the squared roots of the eigenvalues of $P^T\cdot P$ because: $P^T\cdot P=(P\cdot P^T)^T=(V\cdot\Sigma\cdot U^T)\cdot(U\cdot \Sigma\cdot V^T)=V\cdot\Sigma^2\cdot V^T$, thus, we will try our best to utilize the corresponding eigenvectors in $V$ to reduce $P$, because they may contain more important information about $P$.

In the third step, the eigenvectors with the largest eigenvalues are the major axes to explain the auto-correlation matrix $P^T\cdot P$. Therefore we can use PCA method to reduce the dimensionality of $P_i, \forall i\in\{1,2,\cdots,N\}$, it is equivalent to projecting $P_i$ onto the subspace spanned by the $r$ first eigenvectors. 

In the fourth step, we cluster all the row vectors $P_i,\forall i\in\{1,2,\cdots,N\}$ into $k<r$ groups according to the new feature vectors of $P_i$. Actually, according to the previous studies~\cite{ding2004k}, these $r$ major eigenvectors can be treated as the centroids of the clusters obtained by K-means algorithm for the row vectors $P_i, \forall i\in\{1,2,\cdots,N\}$. Therefore, we cluster all row vectors in $P$ by K-means algorithm, and the row vectors in one group should aggregate around the corresponding eigenvectors. 

The final step is to obtain the reduced TPM according to the clustering result or $\Phi$ in the previous step. This is a classic problem of lumping a Markov chain~\cite{marin2014relations, barreto2011lumping}. There are many lumping methods, and we adopt the one in~\cite{klein2020emergence}. {This method is employed because directly coarse-graining the TPM based on clustering results may violate lumpacity and probability normalization conditions. Instead, we coarse-grain the dynamics using stationary distributions, ensuring that both the stationary distribution and total stationary flux remain unchanged during the coarse-graining, as they function as conserved quantities like energy or material flows\cite{guo2015flow}. Consequently, we develop the method described in the previous subsection, which ensures adherence to normalization conditions and maintains the commutativity between the macro-dynamics TPM and the coarse-graining operator as proved in Proposition \ref{thm:commutativity}. }

\section*{Acknowledgements}
We wish to acknowledge all the members who participated in the ``Causal Emergence Reading Group'' organized by the Swarma Research and the ``Swarma - Kaifeng'' Workshop. And special thanks to Lixian Zhang for his insightful comments and corrections on the manuscript.

\section*{Author Contributions}
Jiang Zhang was responsible for proposing the innovative aspects of the work, designing the methodology and the experiments, and writing the main body of the paper. Ruyi Tao conducts the experiments and created all the figures for the paper. Keng Hou Leong and Yang Mingzhe  were in charge of designing and conducting some of the experiments. Bing Yuan assisted with model construction and supervised and polished the writing.

\section*{Competing Interests}
The authors of this paper: Jiang Zhang, Ruyi Tao, Keng Hou Leong, Mingzhe Yang and Bing Yuan declare no competing interests.

\section*{Data Availability Statements}
The datasets used and analyzed during the current study available from the corresponding author on reasonable request.

\section*{Code Availability Statements}
The underlying code for this study is not publicly available but may be made available to qualified researchers on reasonable request from the corresponding author.

\bibliography{sn-bibliography}

\clearpage
\section{Figures}

\begin{figure*}[h]
    \centering
    \includegraphics[width=0.8\linewidth]{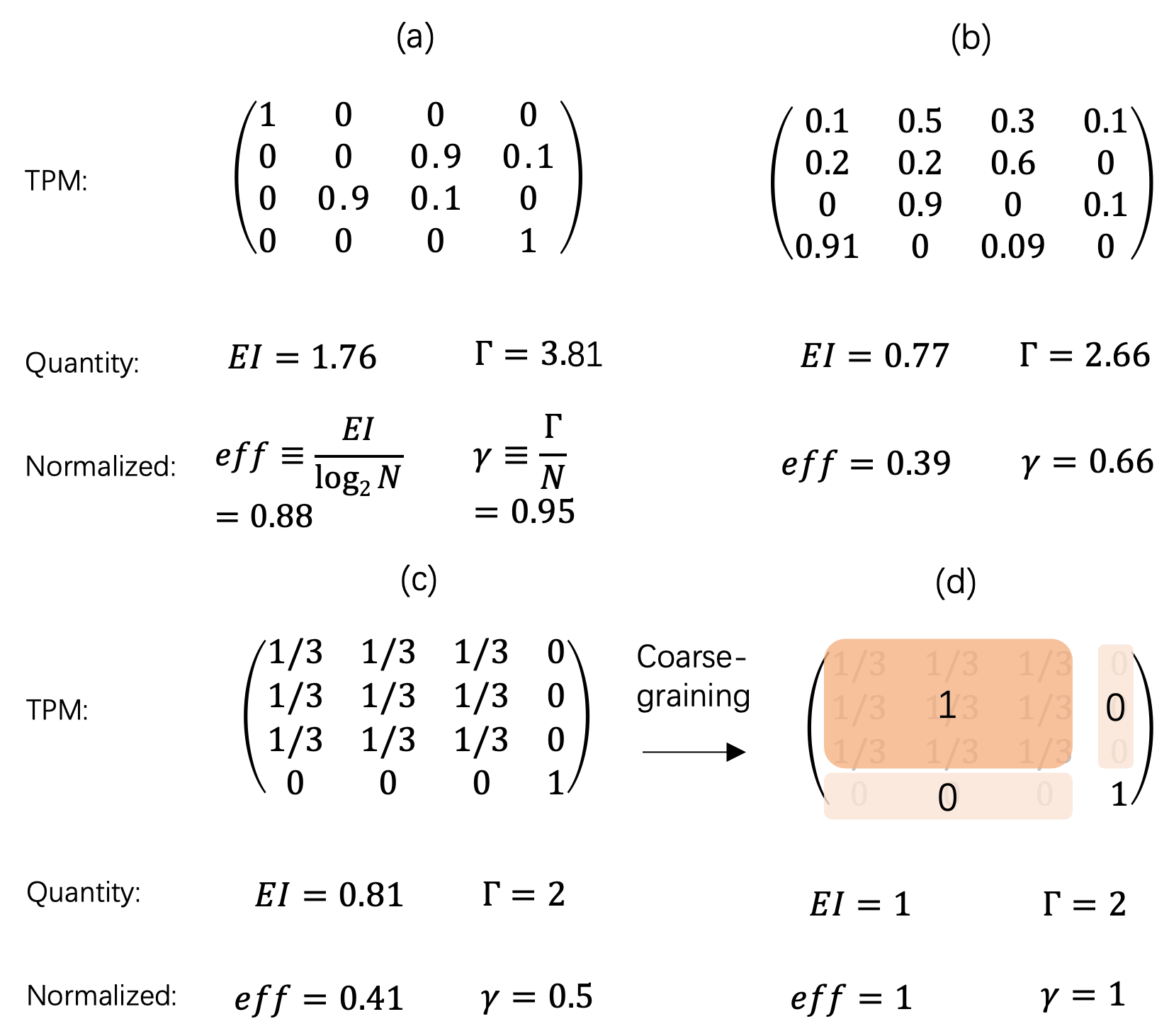}
    \caption{Examples of transitional probability matrices (TPMs) used in different Markov chains, along with measures of effective information (EI and normalized one $\mathit{eff}$) for causality, and the measure of $\Gamma_{\alpha}$ (and normalized one $\gamma_{\alpha}$ by setting $\alpha=1$) for approximate reversibility of dynamics. }
    \label{fig:examples}
\end{figure*}

\begin{figure*}[h]
    \centering
    \includegraphics[width=1\linewidth]{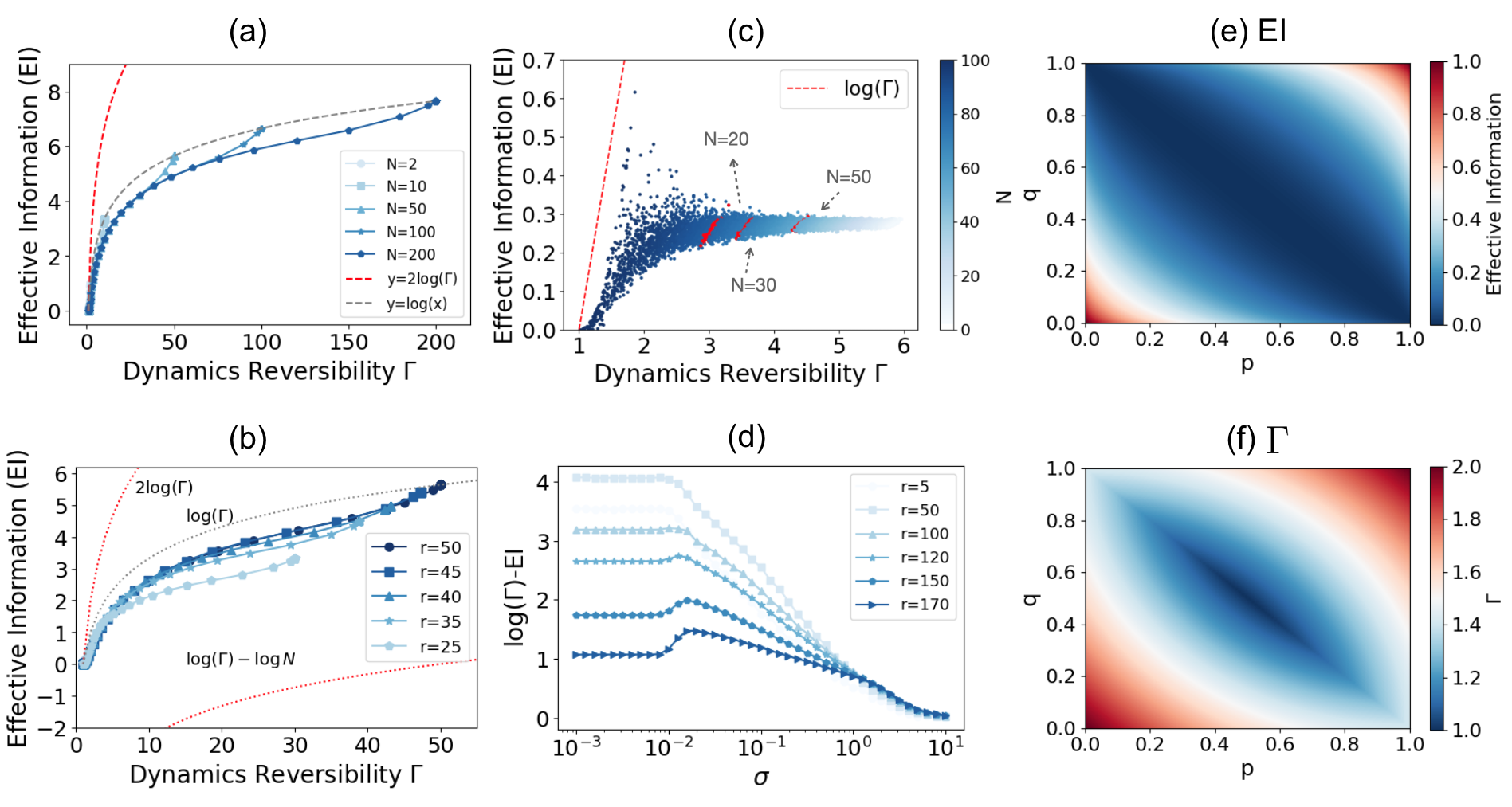}
    \caption{Comparison between $EI$ and $\Gamma$ for various generated TPMs. (a) shows the approximate relationship $EI\sim \Gamma$ for randomly generated TPMs which are softening of permutation matrices  {controlled by different $\sigma$ (detailed could be found in Section \ref{sec:perturbation_permutation})}. The theoretical and empirical upper bounds are also shown as dashed lines. The lower bound is omitted because it is size-dependent. Each curve is obtained by tuning up the magnitude of softening on a randomly generated permutation matrix with different sizes. (b) demonstrates the same relationship between $EI$ and $\Gamma$ for the TPMs which are generated by the similar softening method, but which are based on a variance of the $N\times N$ identity matrix. The variance is to change $N-r$ row vectors of identity matrix to the same one-hot vectors (with the value one as the first element), where $r$ is the rank of the matrix,  $N=50$. And the number $N-r$ can be treated as a control of the degeneracy of the TPM. On this figure, all the upper bounds and lower bounds are shown as dashed lines. (c) shows the same relationship for the combination of randomly sampled normalized row vectors for various sizes ($N\in \{2,3,\cdots,100\}$). On each size, 100 such random matrices are sampled to get the scatter points. The scatter points for particular sizes $N=20,30,50$ are rendered with red to show the nearly logarithmic relation between $EI$ and $\Gamma$. The empirical upper bound is also shown as the dashed line. (d) demonstrates the dependence of the difference between $EI$ and $\Gamma$ ($\log\Gamma-EI$) on the softening magnitude $\sigma$ for the matrices generated. (e) and (f) shows the density plots of $EI$(c) and $\Gamma$(d) with different parameters $p$ and $q$ computed for a parameterized simplest TPM: $P=\begin{pmatrix}p & 1-p \\1-q & q\end{pmatrix}$ with size $2\times 2$  }
    \label{fig:comparison}
\end{figure*}

\begin{figure}[h]
    \centering
    \includegraphics[width=1\linewidth]{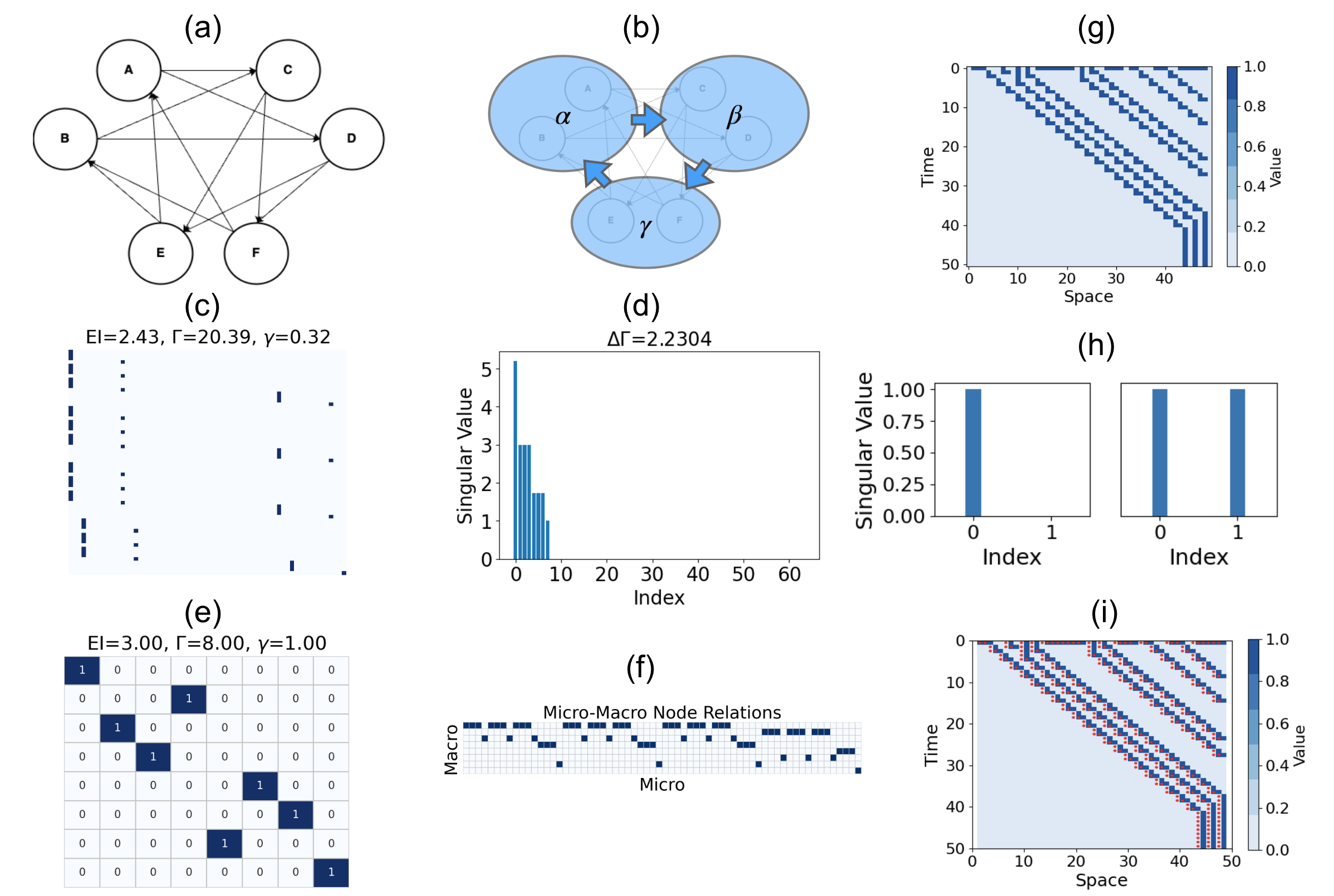}
    \caption{Examples of clear causal emergence on a Boolean network and a one-dimensional cellular automaton. (a) is the Boolean network model with 6 nodes and 12 edges, the micro-mechanism can be referred to ref.\cite{Hoel2013}. (b) is the coarse-grained Boolean network model according to the coarse-grained TPM of (e). (c) is the corresponding TPM of (a). (d) is the spectrum for the singular values of (c) with only 8 non-zero values. (e) is the coarse-graining of (c). (f) is the projection matrix from the micro-states to the macro-states obtained according to our coarse-graining method based on SVD. (g) is the evolution of the 40th elementary cellular automaton (the rule is $000\rightarrow 0, 001\rightarrow 0, 010\rightarrow 1, 011\rightarrow 0, 100\rightarrow 1, 101\rightarrow 0, 110\rightarrow 0, 111\rightarrow 0$). (h) shows the two spectra of the singular values for four distinct local TPMs of the same cellular automaton. The local TPM elucidates the process by which a cell moves from its present state to the subsequent state within a specified environment, encompassing the states of the focal cell and its two neighboring cells. (i) shows the quantification of CE for local TPM ($\Delta \Gamma$, {the red dots indicate the cells where the quantities of CE are non-zeros.}) as well as the original evolution of the original automaton (the background). }
    \label{fig:CA}
\end{figure}

\begin{figure}[h]
    \centering
\includegraphics[width=1\linewidth]{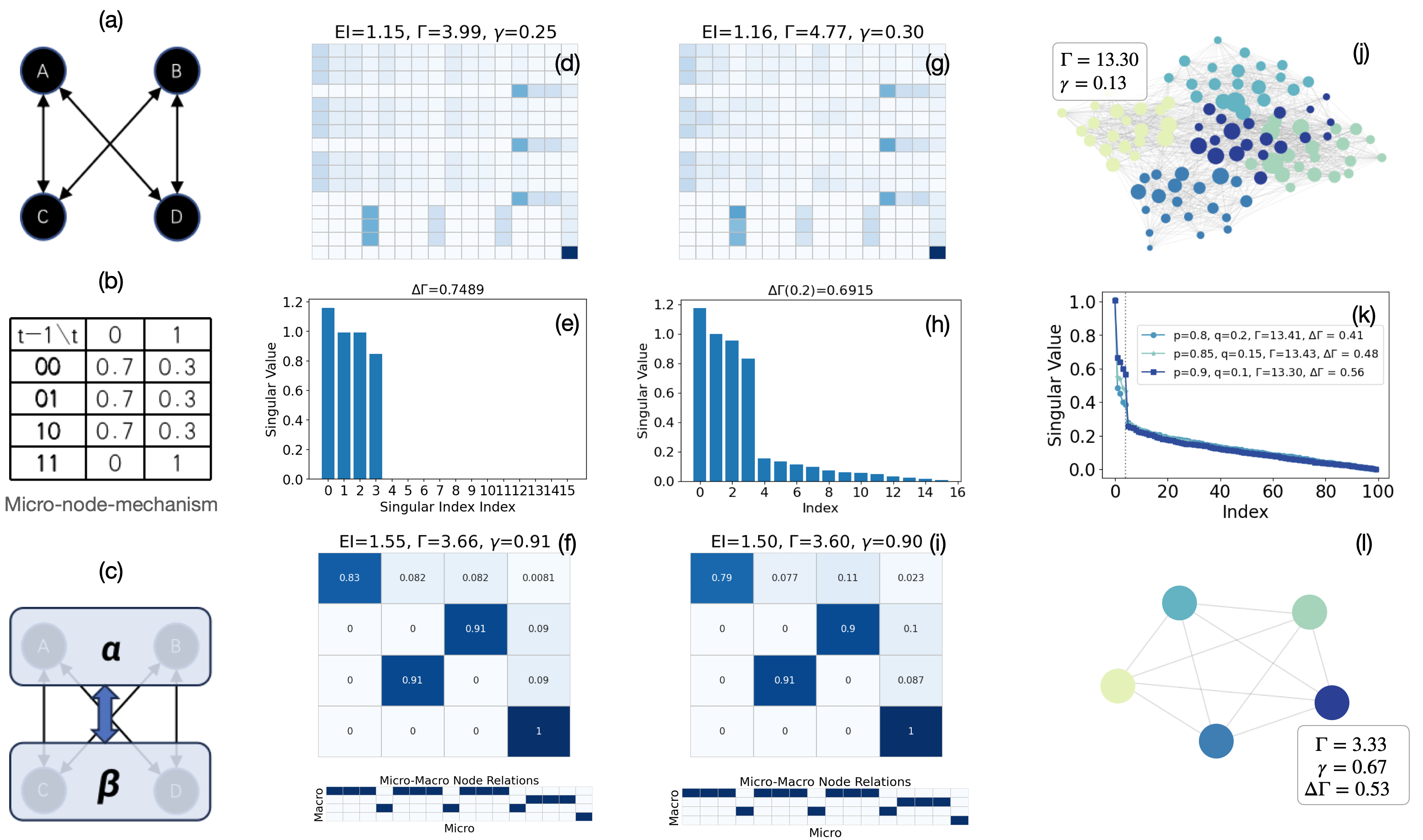}
    \caption{Examples of clear and vague CE and their coarse-grained models based on SVD method for a Boolean network and complex networks generated by the stochastic block models. (a) The original stochastic Boolean network model, each node can only interact with its network neighbors; (b) Shared node dynamics for all nodes in (a). Each row corresponds to the states combination of one node's all neighbors in previous time step, and each column is the probability to take 0 or 1 of the node at current time step. (c) The coarse-grained Boolean network of (a) which is extracted from the TPMs and the relations between micro- and macro nodes illustrated in (f) and (i) by identifying the macro-state $\alpha=0,\beta=0$ as the micro-states for $(0,0,0,0),(0,0,0,1),(0,0,1,0),(0,1,0,0),(0,1,0,1),(0,1,1,0),(1,0,0,0),(1,0,0,1),(1,0,1,0)$, $\alpha=0,\beta=1$ as the micro-states for $(0,0,1,1), (0,1,1,1),(1,0,1,1)$, $\alpha=1,\beta=0$ as the micro-states for $(1,1,0,0),(1,1,0,1),(1,1,1,0)$, and $\alpha=0,\beta=1$ as the micro-states for $(1,1,1,1)$. (d) The corresponding TPM of (a) and (b). (e) The singular value spectrum for (d). (g) A perturbed TPM from (d). (h) The singular value spectrum for (g). (f) and (i) are the reduced TPMs and the projection matrices (below) after the application of our coarse-graining method on the original TPMs in (d) and (g), respectively. (j) is the visulization of the original network sampled from the stochastic block model with p = 0.9(the probability for inner community connections) and q = 0.1(the probability for inter community connections), and the nodes are colored with different hues to distinguish the blocks to which they belong. There are 5 blocks in total. The edges are undirected and binary. The TPM is obtained by normalizing the adjacency matrix by dividing {by} each node's degree. (k) The singular value spectrum of three samples of the stochastic block model network with different p and q. (l) is the reduced network of (j) obtained through our coarse-graining method, with the node grouping results aligning with the initial block settings.}
\label{fig:emergence_renormalization}
\end{figure}

% \newpage
% \newpage
\clearpage

\begin{appendices}

\section*{\centering{Supplementary Information for \\ Dynamics Reversibility and A New Theory of Causal Emergence}}

\setcounter{section}{0}
\setcounter{figure}{0}
\setcounter{equation}{0} 
\renewcommand{\thefigure}{Supplementary Figure \arabic{figure}}
\renewcommand\appendixname{Supplementary}

% \textbf{Support Information}

\appendix
\section{\label{sec:theorems_and_proves}Theorems and Proves}
\subsection{\label{sec:theorems_EI}Propositions and Proves for $EI$}
We will layout the propositions about $EI$ and give the proves for them in this sub-section.
\begin{lem}
\label{thm:EI_derivitive}
    The $EI$ can reach its minimum value $0$ if and only if the row vectors of the probability transition matrix $P$ are identical.
\end{lem}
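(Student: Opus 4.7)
The plan is to invoke the equivalent representation of $EI$ as an averaged KL divergence established in Equation \ref{eqn.ei_definition2}, namely
\begin{equation*}
EI = \frac{1}{N}\sum_{i=1}^N D_{KL}(P_i \,\|\, \bar{P}),
\end{equation*}
and then apply Gibbs' inequality (non-negativity of KL divergence) to pin down exactly when the sum vanishes. The proof is then a direct characterization of the equality case.

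First, I would note that each row $P_i$ is a probability distribution on $\{1,\ldots,N\}$ since $P$ is a TPM, and the convex combination $\bar{P} = \frac{1}{N}\sum_{i=1}^N P_i$ is again a probability distribution. Moreover, if some coordinate of $\bar{P}$ vanishes, then the corresponding coordinate of every $P_i$ must vanish (as $\bar{P}$ is a non-negative average), so every $D_{KL}(P_i\|\bar{P})$ is well-defined under the convention $0\log 0 = 0$ and $0/0 = 0$. By Gibbs' inequality, each term satisfies $D_{KL}(P_i \,\|\, \bar{P}) \geq 0$, with equality if and only if $P_i = \bar{P}$ as vectors. Hence $EI \geq 0$, and $EI = 0$ if and only if every summand vanishes, i.e.\ $P_i = \bar{P}$ for all $i \in \{1,\ldots,N\}$.

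The two directions of the lemma now follow at once. For the ``only if'' direction, if $EI = 0$ then every row $P_i$ equals the common vector $\bar{P}$, so in particular all rows coincide with one another. Conversely, if all rows are identical to some common distribution $Q$, then $\bar{P} = \frac{1}{N}\sum_i Q = Q$, so each term $D_{KL}(Q\|Q) = 0$ and hence $EI = 0$.

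The only delicate step is the equality case of Gibbs' inequality, which I expect to be the one point worth stating carefully: strict convexity of $t \mapsto -\log t$ on $(0,\infty)$ forces $P_i = \bar{P}$ coordinatewise whenever $D_{KL}(P_i\|\bar{P}) = 0$, with the boundary convention handling the zero coordinates uniformly. Beyond this, the argument is a one-line application of Jensen's inequality and requires no computation of derivatives or Lagrange multipliers.
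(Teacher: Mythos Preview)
Your argument is correct and is in fact cleaner than the paper's own proof. The paper proceeds by differentiation: it writes out $EI$ explicitly as a function of the $p_{ij}$, computes $\partial EI/\partial p_{ij}$ under the row-sum constraint, and solves $\partial EI/\partial p_{ij}=0$ to find that the only critical point occurs when $p_{ij}=\bar p_{\cdot j}$ for all $i$, i.e.\ all rows coincide; it then evaluates $EI$ at that point to get $0$. This calculus approach requires an interior-point assumption ($p_{ij}>0$, $p_{iN}>0$, etc.) for the logarithms and their derivatives to make sense, and strictly speaking it only identifies stationary points rather than the global minimum on the full simplex (the boundary is not treated). Your route via Gibbs' inequality avoids all of this: non-negativity of $D_{KL}$ gives $EI\ge 0$ directly, the equality case $P_i=\bar P$ is the standard strict-convexity argument, and the absolute continuity of each $P_i$ with respect to $\bar P$ (which you note) handles the boundary uniformly with the $0\log 0$ convention. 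What the paper's approach buys is a formula for $\partial EI/\partial p_{ij}$ that it reuses later (e.g.\ in the second-derivative computation of Corollary \ref{cor:EI_convex}); your approach buys a shorter, assumption-free proof of the lemma itself.
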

\begin{proof}
    The $EI$ can be expressed as:
    \begin{equation}
        \label{eq:EI_complex}
        EI=\frac{1}{N}\sum_{i=1}^ND_{KL}(P_i||\Bar{P})=\frac{1}{N}\sum_{i=1}^N\sum_{j=1}^Np_{ij}\log\frac{N\cdot p_{ij}}{\sum_{k=1}^Np_{kj}},
    \end{equation}
    where, $P_i$ is the $i$th row vector in $P$, and $\log$ is the logarithm function with base $2$. Without losing generality, suppose $EI$ is differentiable on $p_{ij}$. By taking the derivative, and using the normalization condition of probability distribution $\sum_{j=1}^Np_{ij}=1$ for $1\leq i\leq N$, we obtain that:
    \begin{equation}
        \label{eq:EI_Derivative}
        \frac{\partial EI}{\partial p_{ij}}=\log\left(\frac{p_{ij}}{p_{iN}}\right)-\log\left(\frac{\Bar{p}_{\cdot j}}{\Bar{p}_{\cdot N}}\right)
    \end{equation}
    for $1\leq i\leq N$ and $1\leq j\leq N-1$, where $\Bar{p}_{\cdot j}=\frac{1}{N}\sum_{k=1}^Np_{kj}$ for $1\leq j\leq N$. Here, the placeholder $\cdot$ represents to {the} average the probabilities of $p_{kj}$ for all row indices.  Therefore, Equation \ref{eq:EI_Derivative} is equal to 0 if and only if:
    \begin{equation}
        p_{ij}=p^*_{ij}=\Bar{p}_{\cdot j}=\frac{1}{N}\sum_{k=1}^Np_{kj}
    \end{equation}
    for any $1\leq i,j\leq N$, where $p_{ij}^*$ represents the optimal solution of Equation \ref{eq:EI_Derivative}. That is to say, all the row vectors are identical: $P_i=P_j=\Bar{P}$ for any $1\leq i,j\leq N$. And the corresponding value of $EI$ is:
    \begin{equation}
        \label{eq:min_EI}
        EI_{min}=0.
    \end{equation}
    To guarantee that $EI$ is differentiable, we require that $p_{ij}>0, p_{iN}>0, \bar{p}_{\cdot j}>0,$ and $\bar{p}_{\cdot N}>0$.
\end{proof}

Therefore, $EI$ can reach its minimum value $0$ when all the row vectors are identical. For the special matrix $P=\frac{1}{N}\cdot\mathbbm{1}$, $EI$ also equals 0, however, it is not the unique minimum point. Actually, all the matrix with identical normalized probability  {row} vector can make $EI=0$.

By further taking the second order derivative of $EI$, we can prove that $EI$ is not a convex function. 
\begin{cor}
\label{cor:EI_convex}
    The second order derivative of $EI$ with respect to the distribution $p_{st}$ with $1\leq s\leq N$ and $1\leq t\leq N-1$ is:
    \begin{equation}
        \label{eq:EI_second}
        \frac{\partial^2 EI}{\partial p_{ij}\partial p_{st}}=\frac{1}{N}\cdot\left(\frac{\delta_{i,s}\delta_{j,t}}{p_{ij}}+\frac{\delta_{i,s}}{p_{iN}}-\frac{\delta_{j,t}}{N\cdot\Bar{p}_{\cdot j}}-\frac{1}{N\cdot \Bar{p}_{\cdot N}}\right),
    \end{equation}
    and the $EI(\{p_{ij}\})$ is not a convex function.
\end{cor}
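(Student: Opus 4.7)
The plan breaks into two phases: derive the Hessian formula, then argue non-convexity.

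First, I would differentiate the gradient from Lemma~\ref{thm:EI_derivitive} entry by entry. Writing $p_{iN} = 1 - \sum_{j < N} p_{ij}$, one has $\partial p_{iN}/\partial p_{st} = -\delta_{is}$, $\partial \bar{p}_{\cdot j}/\partial p_{st} = \delta_{jt}/N$ for $j < N$, and $\partial \bar{p}_{\cdot N}/\partial p_{st} = -1/N$. Applying these to each logarithm inside $\partial EI/\partial p_{ij} \propto \log(p_{ij}/p_{iN}) - \log(\bar{p}_{\cdot j}/\bar{p}_{\cdot N})$ yields the four contributions in the displayed formula, with signs dictated by numerator versus denominator placement and by the sign of $\partial p_{iN}/\partial p_{st}$ and $\partial \bar{p}_{\cdot N}/\partial p_{st}$. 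The overall $1/N$ prefactor comes from the one carried by $EI$ itself. This step is routine algebra.

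For the non-convexity assertion, my plan is to exhibit either a point $P^{*}$ and tangent direction $v$ with $v^{T} H(P^{*}) v < 0$, or a pair of TPMs $P,Q$ and $\lambda \in (0,1)$ with $EI(\lambda P + (1-\lambda) Q) > \lambda EI(P) + (1-\lambda) EI(Q)$. Natural candidates to probe are TPMs with strongly concentrated rows, where the negative mixing terms $-\delta_{jt}/(N\bar{p}_{\cdot j})$ and $-1/(N\bar{p}_{\cdot N})$ may dominate; I would also sweep the $N=2$ and $N=3$ parameter space numerically, focusing on directions that excite several rows simultaneously so their contributions reinforce at the aggregated level.

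The step I expect to be the main obstacle is certifying an actual negative eigenvalue of $H$. Collecting terms, the quadratic form reads
\begin{equation*}
v^{T} H v \;=\; \frac{1}{N}\sum_{s,j=1}^{N} \frac{v_{sj}^{2}}{p_{sj}} \;-\; \frac{1}{N^{2}}\sum_{j=1}^{N} \frac{\bigl(\sum_{s} v_{sj}\bigr)^{2}}{\bar{p}_{\cdot j}},
\end{equation*}
with $v_{sN} := -\sum_{j<N} v_{sj}$. Cauchy--Schwarz gives $\bigl(\sum_{s} v_{sj}\bigr)^{2} \leq \bigl(\sum_{s} p_{sj}\bigr)\bigl(\sum_{s} v_{sj}^{2}/p_{sj}\bigr) = N\bar{p}_{\cdot j}\sum_{s} v_{sj}^{2}/p_{sj}$, which forces $v^{T} H v \geq 0$ pointwise. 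The Hessian is therefore globally positive semidefinite, matching the classical joint convexity of the generalized Jensen--Shannon divergence (of which $EI$ is an instance), inherited from the joint convexity of the KL divergence composed with the linear averaging $(P_{1},\dots,P_{N})\mapsto \bar{P}$. Thus no literal Hessian counterexample to convexity exists, and the non-convexity claim cannot be sustained through a negative second variation. The strongest provable statement is that $EI$ fails to be \emph{strictly} convex: at any TPM with identical rows the Hessian collapses to $H = \tfrac{1}{N^{2}}(N I_{N} - J_{N}) \otimes A$ with $A$ positive definite and $N I_{N} - J_{N}$ carrying the one-dimensional null space $\mathbf{1}_{N}$, so any perturbation shifting every row by the same tangent vector lies in $\ker H$. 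My proposal is therefore to write up the corollary with strict-convexity failure as the operative content, and to flag that the literal ``not convex'' wording overstates what the Hessian supports; a genuinely non-convex surrogate could then be constructed (for instance by restricting the definition of $EI$ away from the simplex or by considering non-uniform do-distributions), but such a rewording lies outside the scope of the displayed Hessian identity.
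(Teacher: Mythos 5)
Your derivation of the Hessian entry is the same routine computation the paper performs (differentiating the gradient of Lemma~\ref{thm:EI_derivitive} using $\partial p_{iN}/\partial p_{st}=-\delta_{is}$, $\partial \bar{p}_{\cdot j}/\partial p_{st}=\delta_{jt}/N$, $\partial \bar{p}_{\cdot N}/\partial p_{st}=-1/N$), and that part is fine. Where you part ways with the paper is on the second assertion, and here you are right and the paper is wrong. The paper's own proof observes that the diagonal-block entries ($i=s$) are nonnegative while the off-diagonal-block entries ($i\neq s$) are negative, and concludes ``therefore the Hessian is not positive-definite, $EI$ is not a convex function.'' Both inferences are invalid: a symmetric matrix with negative off-diagonal entries can perfectly well be positive (semi)definite (e.g.\ the Laplacian $NI_N-J_N$), and failure of strict positive-definiteness does not contradict convexity. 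Your Cauchy--Schwarz bound $\left(\sum_{s} v_{sj}\right)^{2}\leq N\bar{p}_{\cdot j}\sum_{s}v_{sj}^{2}/p_{sj}$ applied to the assembled quadratic form shows $v^{T}Hv\geq 0$ at every interior point of the product of simplices, so $EI$ is in fact convex there --- consistent with the classical result that mutual information is convex in the channel $p(y|x)$ for a fixed input distribution, of which $EI$ with the uniform $do$-intervention is exactly an instance. So the corollary's second claim is false as stated, and your refusal to ``prove'' it is the correct outcome; the only salvageable content is the failure of \emph{strict} convexity, which your kernel computation at identical-row TPMs (where $H$ factors as a Laplacian-type matrix tensored with a positive-definite block, with kernel along uniform row shifts) establishes, and which is also the most the paper's own entry-sign computation could ever have supported.

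One further consequence worth flagging: the main text invokes this corollary to explain Figure~\ref{fig:comparison}(e)--(f) via the assertion that ``$\Gamma$ is a convex function but $EI$ is not,'' so the error propagates beyond the supplementary material. If you wanted a constructive check, the $N=2$ family $P=\begin{pmatrix}p & 1-p\\ 1-q & q\end{pmatrix}$ with the closed form in Equation~\ref{eq:pq_matrix_EI} can be verified numerically to be convex in $(p,q)$ on the open square, which corroborates your conclusion against the paper's.
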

\begin{proof}
By further taking the derivative of Equation \ref{eq:EI_Derivative} with respect to the distribution $p_{st}$ with $1\leq s\leq N$ and $1\leq t\leq N-1$, we obtain Equation \ref{eq:EI_second}. 

When $i=s$, the second order derivative of $EI$ is:
\begin{equation}
\label{eq:EI_sec_pos}
\begin{aligned}
    \frac{\partial^2 EI}{\partial p_{ij}\partial p_{it}}&=\frac{\delta_{j,t}}{N}\left(\frac{1}{p_{ij}}-\frac{1}{N\cdot \Bar{p}_{\cdot j}}\right)+\frac{1}{N\cdot p_{iN}}-\frac{1}{N^2\cdot \Bar{p}_{\cdot N}}\\
    &=\delta_{j,t}\frac{\sum_{k=1}^{N-1}p_{k j}-p_{ij}}{N^2\cdot p_{ij}\cdot \Bar{p}_{\cdot j}}+\frac{\sum_{k=1}^{N-1}p_{k N}-p_{iN}}{N^2\cdot p_{iN}\cdot \Bar{p}_{\cdot N}}\\
    &=\delta_{j,t}\frac{\sum_{k\neq i}p_{kj}}{N^2\cdot p_{ij}\cdot \Bar{p}_{\cdot j}}+\frac{\sum_{k\neq i}p_{k N}}{N^2\cdot p_{iN}\cdot \Bar{p}_{\cdot N}}\geq 0,
\end{aligned}
\end{equation}
but when $i\neq s$, 
\begin{equation}
    \frac{\partial^2 EI}{\partial p_{ij}\partial p_{st}}=-\frac{\delta_{j,t}}{N^2\cdot \Bar{p}_{\cdot j}}-\frac{1}{N^2\cdot \Bar{p}_{\cdot N}}< 0
\end{equation}
holds, no matter if $j=t$ or not. Therefore, the Hessian matrix of $EI$ is not positive-definite, $EI$ is not a convex function.
\end{proof}

We will further discuss the condition and the properties of the maximum of $EI$. 
\begin{lem}
\label{thm:EI_maximum}
    The EI measure can reach its maximum $\log N$ if and only if $P$ is a permutation matrix.
\end{lem}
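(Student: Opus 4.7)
The plan is to leverage the decomposition already recorded in Equation \ref{eq:EI_two_terms}, namely
\[
EI = -\bar H(P_i) + H(\bar P),
\]
and to bound the two summands separately using classical properties of the Shannon entropy of a probability vector of length $N$.

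First I would note that the determinism term satisfies $-\bar H(P_i)=\frac{1}{N}\sum_i P_i\cdot\log P_i\le 0$, because every row $P_i$ is a probability distribution and so $H(P_i)\ge 0$, with equality iff $P_i$ is a one-hot vector. Second, for the non-degeneracy term, $\bar P$ is itself a probability distribution on $N$ states, hence $H(\bar P)\le \log N$, with equality iff $\bar P$ is uniform, i.e.\ $\bar p_{\cdot j}=1/N$ for every $j$. Adding these two bounds yields the desired inequality $EI\le \log N$.

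For the ``if and only if'' part I would then analyze when both inequalities are saturated simultaneously. Equality in the determinism bound forces every row $P_i$ to equal some one-hot vector $e_{c_i}$ with $c_i\in\{1,\dots,N\}$. Substituting into $\bar p_{\cdot j}=\frac{1}{N}|\{i:c_i=j\}|$ and imposing the uniformity condition $\bar p_{\cdot j}=1/N$ shows that each column index $j$ is hit by exactly one row, so $i\mapsto c_i$ is a bijection and $P$ is a permutation matrix. The converse is immediate: if $P$ is a permutation matrix, each row has zero entropy and the column sums are all $1$, so $\bar P=\mathbbm{1}/N$, and the decomposition gives $EI=0+\log N=\log N$.

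The argument is essentially clean and avoids the Lagrangian/derivative machinery used in Lemma \ref{thm:EI_derivitive} for the minimum, because the two upper bounds decouple nicely. The only mild care is in combining the saturation conditions: determinism alone produces any deterministic (functional) $P$ and non-degeneracy alone produces any column-balanced $P$, while their intersection is precisely the class of permutation matrices. This is the one step that requires a short combinatorial verification rather than a pure entropy inequality, and it is the only place where a subtle mistake could slip in.
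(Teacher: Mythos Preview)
Your proposal is correct and follows essentially the same approach as the paper: both use the decomposition $EI=-\bar H(P_i)+H(\bar P)$, bound each term separately by $0$ and $\log N$ via standard entropy inequalities, and then combine the two equality conditions (one-hot rows plus uniform column averages) to force $P$ to be a permutation matrix. Your combinatorial verification that $i\mapsto c_i$ is a bijection is slightly more explicit than the paper's, but the argument is the same.
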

\begin{proof}

\begin{comment}
    If the KL-divergence is treated as a function of the distributions $P_i$ and $P_j$ for $1\leq i,j\leq N$, then $D_{KL}(P_i||P_j)$ is convex\cite{https://statproofbook.github.io/P/kl-conv.html(Chapter 2.5.5)}. Furthermore,  for any $1\leq i,j\leq N$, because the domain of $D_{KL}$ is:
    \begin{equation}
        \label{eq:boundary_set}
        Dom(D_{KL})=\{(P,Q)|P,Q\in \mathcal{R}^N and |P|_1=|Q|_1=1\},
    \end{equation}
    which is also a convex set, therefore, $D_{KL}(P_i||P_j)$ attains its maximum at some extreme point of that set according to Bauer's maximum principle\cite{Kružík, Martin (2000-11-01). "Bauer's maximum principle and hulls of sets". Calculus of Variations and Partial Differential Equations. 11 (3): 321–332.}. And the extreme points of $D_{KL}$ attains when $P_i$ and $P_j$ are one-hot vectors.

    Further, for any $i,j\in[1,N]$, $D_{KL}(P_i,P_j)=0$ iff $P_i=P_j$, thus $P_i$ must orthogonal to $P_j$. In this case, 
\end{comment}
    By noticing that
    \begin{equation}
    \begin{aligned}
        \sum_{i=1}^N\sum_{j=1}^N p_{ij}\log\Bar{p}_{\cdot j}&=\sum_{j=1}^N\left(\sum_{i=1}^Np_{ij}\right)\log\Bar{p}_{\cdot j}\\
        &=\sum_{j=1}^N\Bar{p}_{\cdot j}\log \Bar{p}_{\cdot j}=-H(\Bar{P}),
    \end{aligned}
    \end{equation}
    where $H(\Bar{P})$ is the Shannon entropy of the average distribution $\Bar{P}\equiv \sum_{i=1}P_i/N$. Thus, $EI$ can also be separated as:
    \begin{equation}
    \label{eq:twoterms_EI}
    \begin{aligned}
        EI&=\frac{1}{N}\sum_{i=1}^N\sum_{j=1}^N p_{ij}\log p_{ij} - \frac{1}{N}\sum_{i=1}^N\sum_{j=1}^N \bar{p}_{\cdot j}\log \Bar{p}_{\cdot j}\\
        &=\frac{1}{N}\sum_{i=1}^N\left(-H(P_i)\right)+H(\Bar{P}).
    \end{aligned}
    \end{equation}
    Where $H(P_i)=-\sum_{j=1}^Np_{ij}\log p_{ij}$ is the Shannon entropy of the distribution $P_i$. Furthermore, we have:
    \begin{equation}
    \label{eq:P_i}
        -H(P_i)\leq 0,
    \end{equation}
    the equality holds when $P_i$ is a one hot vector, and we also have:
    \begin{equation}
    \label{eq:max_PBar}
        H(\Bar{P})\leq \log N,
    \end{equation}
    and the equality holds when $\Bar{P}=\frac{1}{N}\cdot \mathbbm{1}$. By combining these two inequalities together, we can obtain:
    \begin{equation}
    \label{eq:EI_max}
        EI\leq 0+\log N= \log N.
    \end{equation}
    The condition that make the equality in Equation \ref{eq:P_i} and the equality in Equation \ref{eq:max_PBar} hold simantanously is that all row vectors in $P$ are one-hot vectors, and they are all different such that 
    \begin{equation}
        \frac{1}{N}\sum_iP_i=\Bar{P}=\frac{1}{N}\cdot\mathbbm{1}.
    \end{equation}
    Therefore, we reach the statement claimed by this theorem that $P$ must be a permutation matrix.
    
\end{proof}

\subsection{\label{sec:gamma}
Theorems and Proves for Dynamical Reversibility and $\Gamma$}
\subsubsection{\label{sec:dynamica_reversibility}Dynamical Reversibility and Time Reversibility}
%\begin{thm}
%\label{thm.P_invertible}
\textbf{Proposition 1}. \textit{For a given Markov chain $\chi$ and the corresponding TPM $P$, with $P$ being dynamically reversible as defined in Definition \ref{dfn:dynamical_reversibility}, if and only if $P$ is a permutation matrix.}
%\end{thm}
\begin{proof}
% \color{red}
%     Suppose $P$ is dynamically reversible, then it is invertible and its matrix inverse $Q=P^{-1}$ is still a probability transition matrix.

% For any $i$, there exists a non-zero entry $p_{ij}>0$ in the $i$th row $P_i$, because $P_i$ is a non-zero vector.

% By $PQ=I$, consider the $i$th row, for $k\neq i$,
% \[\sum_{l=1}^N p_{il}q_{lk}=0\]
% However, each entry of $P$ and $Q$ is nonnegeative. So $p_{il}q_{lk}=0$ for each $l$. Especially, for $l=j$, $p_{ij}q_{jk}=0$. Thus $q_{jk}=0$. This holds for $k\neq i$.

% Then, by $QP=I$, consider the $(j,j)$ entry,
% \[
% \sum_{k=1}^N q_{jk} p_{kj} =1
% \]
% that is, $q_{ji}p_{ij}=1$. So $q_{ji}=p_{ij}=1$, and thus $p_{ik}=0$ for $k\neq i$.

% Still by $PQ=I$, consider the $i$th column, for $k\neq i$,
% \[
% \sum_{l=1}^N p_{kl}q_{li}=0
% \]
% Each entry of $P$ and $Q$ is nonnegeative. So $p_{kl}q_{li}=0$ for each $l$. Especially, for $l=j$, $p_{kj}q_{ji}=0$. Thus $p_{kj}=0$. This holds for $k\neq i$.

% Now that for any $i$, starting from $p_{ij}>0$ in the $i$th row $P_i$, it must hold that $p_{ij}=1$ and other entries are zero in the $i$th row and in the $j$th column. We conclude that $P$ is a permutation matrix.

% \begin{comment}
    
% \end{comment}
\color{black}
If $P$ is dynamical reversible, then $P$ must be an invertible matrix and $P$ is a TPM (which means all row vectors are normalized $|P_i|_1=1$).

Because $P$ is invertible, therefore, 
\begin{equation}
    P=U\cdot diag(\lambda_1,\lambda_2,\cdot\cdot\cdot,\lambda_N)\cdot U^{-1},
\end{equation}
where, $U$ is an orthonormal matrix, $\lambda_1,\lambda_2,\cdot\cdot\cdot,\lambda_N$ are the eigenvalues of $P$, and their modulus are less than or equal to 1:
\begin{equation}
    |\lambda_i|\leq 1, \forall i\in [1,N],
\end{equation}
these inequality holds because $P$ is a TPM of a Markov chain according to \cite{seabrook2023tutorial}.

Thus, the inverse of $P$ can be expressed by:
\begin{equation}
    P^{-1}=U\cdot diag(\lambda_1^{-1},\lambda_2^{-1},\cdot\cdot\cdot,\lambda_N^{-1})\cdot U^{-1},
\end{equation}

and:
\begin{equation}
    |\lambda_i|^{-1}\geq 1, \forall i\in [1,N],
\end{equation}

However, this conflicts with the conclusion that the modulus of all the eigenvalues of a TPM must be less or equals to 1 if the inequality holds strictly. Thus, we have:
\begin{equation}
\label{eqn:lambda_1}
    |\lambda_1|=|\lambda_2|=\cdot\cdot\cdot=|\lambda_N|=1.
\end{equation}

Therefore, the eigenvalues are the complex solutions for the equation of $x^N=1$. Suppose the $i$th eigenvalue is $\lambda_i$ and the corresponding eigenvector is $v_i$, therefore:

\begin{equation}
    \label{eqn:eigenvaluevector}
    P v_i=\lambda_i v_i
\end{equation}

By taking conjugate transpose on two sides of Equation \ref{eqn:eigenvaluevector} and multiplying back to this equation, we get:

\begin{equation}
    \label{eqn:times}
    v_i^{\dagger}P^{\dagger}Pv_i=\lambda_i\lambda_i^*v_i^{\dagger}v_i=v_i^{\dagger}v_i.
\end{equation}
where $^{\dagger}$ is the conjugate transpose operator, $^*$ stands for the conjugate complex number.

This equation holds for any $i$, and consider $P$ is positive real matrix, we conclude:

\begin{equation}
    \label{eqn:unitary}
    P^T\cdot P=P\cdot P^T=I
\end{equation}

Therefore, any two row vectors of $P$ must satisfy:

\begin{equation}
    \label{eqn:orthogonal}
    P_i\cdot P_j =\delta_{ij}, \forall i,j\in\{1,2,\cdots,N\}
\end{equation}

where $\cdot$ represents the scalar product between $P_i$ and $P_j$, $\delta_{ij}=1$ only if $i=j$, otherwise $\delta_{ij}=0$. While, because $1\geq p_{ik}\geq 0, \forall i,k\in\{1,2,\cdots,N\}$, thus $P_i$s must be one-hot vectors, and $P_i$ is orthogonal to $P_j$ for any $i,j\in\{1,2,\cdots,N\}$. Therefore, $P$ is a permutation matrix.

On the other hand, if $P$ is a permutation matrix, then it is invertible because permutation matrices are full rank and eigenvalues are 1. The normalization conditions $|P_i|_1=1, \forall i\in\{1,2,\cdots,N\}$ are satisfied because all row vectors $P_i$ are one-hot vectors. Therefore, $P$ is dynamical reversible.
\end{proof}
\color{black}
Next, we will prove dynamical reversibility implies time reversibility for Markov chains.
\begin{lem}
    \label{thm:reversibility_implication}
    For a recurrent discrete Markov chain $\chi$, suppose its TPM is $P$ on the space of states $\mathcal{S}$ and its stationary distribution is $\mu$, if $P$ is dynamically reversible, then $\chi$ also satisfies time reversibility.
\end{lem}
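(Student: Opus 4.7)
The plan is to use Proposition 1 to reduce the statement to the special case where $P$ is a permutation matrix, and then deduce time reversibility from the orbit structure of the permutation together with the constraint imposed by stationarity. I would not re-do any spectral analysis, since Proposition 1 already packages the hard part.

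First, I would invoke Proposition 1 to conclude that dynamical reversibility forces $P$ to be a permutation matrix. Let $\pi$ denote the corresponding bijection on $\mathcal{S}$, so $p_{ij}=1$ iff $j=\pi(i)$ and $p_{ij}=0$ otherwise, and recall that the relation $P^T P = I$ established inside the proof of Proposition 1 gives $P^{-1}=P^T$. Since $\chi$ is recurrent, a stationary distribution $\mu$ exists, and $\mu^T P = \mu^T$ unpacks componentwise to $\mu_j = \mu_{\pi^{-1}(j)}$. Hence $\mu$ is constant along every orbit of $\pi$; in particular, $\mu_i = \mu_j$ whenever $p_{ji} > 0$.

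Second, I would verify time reversibility by computing the reverse chain explicitly. The standard reversal $P^{*}$, defined by $\mu_i\, p^{*}_{ij} = \mu_j\, p_{ji}$, simplifies, thanks to the orbit-uniformity of $\mu$, to $p^{*}_{ij} = p_{ji}$ on every pair with $p_{ji} > 0$, so $P^{*} = P^T = P^{-1}$. Thus the time-reversed dynamics is itself a legitimate Markov chain whose TPM is a permutation matrix, sharing the same stationary distribution $\mu$, which is the sense in which the paper declares $\chi$ to be time reversible.

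The main subtlety is pinning down which notion of time reversibility is in play: classical detailed balance $\mu_i p_{ij} = \mu_j p_{ji}$ is strictly stronger and already fails for a cyclic permutation of length three, so the lemma cannot be about detailed balance in its symbolic form. The argument above instead establishes the weaker conclusion that the reversed process is a bona fide Markov chain with TPM $P^{*}$ preserving the stationary distribution, which matches the paper's description of ``reversibility of state-space distributions.'' Aside from fixing this definition, every step (permutation reduction via Proposition 1, orbit-uniformity of $\mu$, and the one-line computation of $P^{*}$) is essentially immediate; the proof is almost entirely a bookkeeping consequence of $P$ being orthogonal on a recurrent class.
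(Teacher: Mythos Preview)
The paper's proof differs from yours: it asserts $\mu=\mathbbm{1}/N$ and writes $P=P^{-1}=P^{T}$, then reads off detailed balance $\mu_i p_{ij}=\mu_j p_{ji}$ directly. So ``time reversibility'' in the paper really is classical detailed balance. You are right that a $3$-cycle violates this, which means you have located an actual error in the paper's argument (the step $P=P^{-1}$ is false for non-involutory permutations, and with it detailed balance collapses). Your orbit-constancy description of $\mu$ is also more careful than the paper's blanket $\mu=\mathbbm{1}/N$, which tacitly assumes irreducibility.

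However, your fallback notion---that the reversed chain $P^{*}=P^{T}$ is a legitimate TPM with the same stationary law---is not what the paper means, and it holds trivially for every finite stationary Markov chain, so it does not salvage the lemma in any substantive sense. Your diagnosis of the obstruction is correct, but the resolution you offer is not a proof of the statement the paper actually asserts. As stated (with detailed balance as the target), the lemma only goes through when the permutation is an involution; for general dynamically reversible $P$ neither your argument nor the paper's establishes it.
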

    \begin{proof}
        If $P$ is dynamically reversible, then $P$ is a permutation matrix according to Proposition \ref{thm.P_invertible}, thus, the corresponding stationary distribution must be $\mathbbm{1}/N$, where $\mathbbm{1}=(1,1,\cdots,1)$, such that any permutation on elements in $\mathbbm{1}/N$ is the same vector. 
        Because $P$ is invertible, it satisfies 
        \begin{equation}
        \label{eq:P_invertible_PT}
            P=P^{-1}=P^T,
        \end{equation}
        then we have
        \begin{equation}
            \label{eqn:check_permutation}
            p_{ij}\cdot\frac{1}{N}=p_{ji}\cdot\frac{1}{N}.
        \end{equation}
        If we denote $\mu=\frac{\mathbbm{1}}{N}$ as a row vector, Equation \ref{eqn:check_permutation} can be written in the following form:
        \begin{equation}
            \mu_i\cdot p_{ij}=\mu_j\cdot p_{ji}, \forall i,j\in\{1,2,\cdots,N\}
        \end{equation}
        This is the condition for the time reversible Markov chain\cite{stroock2013introduction}. Therefore, $\chi$ is time reversible, and $P^T$ is the TPM of the reversible process.
        %Suppose $\chi$'s time reversed Markov chain is $\chi'$, and its TPM $Q$, {where $q_{ij}\equiv P(X_t=s_i|X_{t+1}=s_j)$, $s_i$ and $s_j\in \mathcal{S}$} are the states at time steps $t$ and $t+1$, respectively. Then $P, Q, \mu$ should satisfy the detailed balance condition:
        %\begin{equation}
        %   \label{eqn:detailed_balance}
        %    P(X_{t+1},X_t)=P(X_t,X_{t+1}),
        %\end{equation}
        %By using Bayesian formulation, 
        %\begin{equation}
        %\label{eqn:bayesian}
            %P(X_{t},X_{t+1})=P(X_{t+1}|X_t)P(X_t)=P(X_{t+1})P(X_t|X_{t+1}),
        %\end{equation}
        %and let $t\rightarrow \infty$ such that $P(X_t)=P(X_{t+1})=\mu$, Equation \ref{eqn:bayesian} can be written as:
        %\begin{equation}
        %    \label{eqn:Q_definition}
        %    {
        %    q_{ij}\cdot\mu_i=p_{ji}\cdot\mu_j
        %    }
        %\end{equation}

        %If $\chi$ is time reversible, then $Q=P$, therefore:
        %\begin{equation}
        %    \label{eqn:P_reversibility}
        %    {
        %    p_{ij}\cdot{\mu_i}=p_{ji}\cdot\mu_j.
        %    }
        %\end{equation}
        %Equation \ref{eqn:P_reversibility} is the sufficient and necessary condition for $\chi$ being time reversible. 
        %Therefore, the dynamical reversibility of $\chi$ implies its time reversibility. But its reverse is not, apparently.
       
    \end{proof}

\subsubsection{\label{sec:Gamma}Approximate Dynamical Reversibility $\Gamma_{\alpha}$}
We will present the propositions and theorems and the proofs about the measure of approximate dynamical reversibility $\Gamma_{\alpha}$ in this sub-section. Before the propositions are presented, we will prove the lemma related with the proposition that will be used in the following parts.

\begin{lem}
\label{lem.normalize}
For a TPM $P=(P_1^T,P_2^T,\cdot\cdot\cdot,P_N^T)^T$, where $P_i$ is the $i$-th row vector, then:
\begin{equation}
    P_i\cdot P_j\leq 1,  \forall i,j\in\{1,2,\cdots,N\},
\end{equation}
where, $\cdot$ represents the scalar product between the vectors $P_i$ and $P_j$. The equality holds when $P_i=P_j$ and is a one-hot vector.
\end{lem}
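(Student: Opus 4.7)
The plan is to prove the inequality by a direct componentwise bound, exploiting the two defining properties of a TPM row: non-negativity and $\ell^1$-normalization. Writing $P_i\cdot P_j=\sum_{k=1}^N p_{ik}p_{jk}$, I would observe that each $p_{jk}$ lies in $[0,1]$ (since $p_{jk}\ge 0$ and $\sum_{k}p_{jk}=1$ forces every entry to be at most $1$). Hence $p_{ik}p_{jk}\le p_{ik}\cdot 1=p_{ik}$ for every $k$, and summing over $k$ gives
\begin{equation*}
P_i\cdot P_j=\sum_{k=1}^N p_{ik}p_{jk}\le \sum_{k=1}^N p_{ik}=1.
\end{equation*}
This is the main inequality; it requires no heavier machinery than monotonicity of the sum.

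For the equality case, I would track the condition for equality in the componentwise step, namely $p_{ik}(1-p_{jk})=0$ for every $k\in\{1,\dots,N\}$. Let $S=\{k:p_{ik}>0\}$; then $p_{jk}=1$ for every $k\in S$. The normalization $\sum_k p_{jk}=1$ combined with non-negativity forces $|S|=1$, say $S=\{k^*\}$, with $p_{jk^*}=1$ and $p_{jk}=0$ for $k\neq k^*$. So $P_j$ is the one-hot vector $e_{k^*}$. Normalization of $P_i$ together with $p_{ik}=0$ for $k\neq k^*$ (which is forced because $P_i$ is supported only on $S$) then yields $p_{ik^*}=1$, so $P_i=e_{k^*}=P_j$ is the same one-hot vector. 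Conversely, if $P_i=P_j=e_{k^*}$, the scalar product is obviously $1$.

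I do not expect any serious obstacle here: the statement follows from elementary manipulation of the probability simplex constraints, and the equality analysis is just a bookkeeping exercise. An alternative route via Cauchy–Schwarz ($P_i\cdot P_j\le \|P_i\|_2\|P_j\|_2\le \|P_i\|_1\|P_j\|_1=1$) is available, but the direct argument above is tighter in the sense that the equality characterization drops out immediately without having to separately analyze the Cauchy–Schwarz equality case and the $\ell^2$–$\ell^1$ norm equality case. I would therefore present the direct proof.
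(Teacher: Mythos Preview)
Your proof is correct and essentially the same as the paper's: both bound $\sum_k p_{ik}p_{jk}$ using only non-negativity and $\ell^1$-normalization of the rows. The paper writes this via the double sum $\sum_k p_{ik}p_{jk}\le \sum_k\sum_l p_{ik}p_{jl}=|P_i|_1\,|P_j|_1=1$, whereas you use the equivalent one-line form $p_{jk}\le 1\Rightarrow p_{ik}p_{jk}\le p_{ik}$; your equality analysis is in fact more carefully spelled out than the paper's.
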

\begin{proof}
Because $P_i$ is a probability distribution, therefore, it should satisfy normalization condition which can be expressed as:
\begin{equation}
    |P_i|_1=\sum_{j=1}^N p_{ij}=1,
\end{equation}
where, $|\cdot|_1$ is 1-norm for vector, which is defined as the summation of the absolute values of all elements. Because $P_i$ and $P_j$ are all positive vectors, we have:
\begin{equation}
    P_i\cdot P_j=\sum_{k=1}^N p_{ik}p_{jk}\leq \sum_{k=1}^N\sum_{l=1}^Np_{ik}p_{jl}=|P_i|_1\cdot|P_j|_1=1.
\end{equation}
for any $i,j\in\{1,2,\cdots,N\}$. The equality holds when 
\begin{equation}
\label{eqn:zero_condition}
    \sum_{k\neq j}p_{ik}p_{jl}=0.
\end{equation} Due to $p_{ij}>0, \forall i,j\in\{1,2,\cdots,N\}$, Equation \ref{eqn:zero_condition} holds only if $P_i=P_j$ and $P_i$ is a one-hot vector.
\end{proof}

\begin{lem}
\label{lem.boundsize}
    For a given TPM $P$, suppose its singular values are $(\sigma_1,\sigma_2,\cdot\cdot\cdot,\sigma_N)$, we have:
    \begin{equation}
    \label{eq:sigma2_upper_bound}
        \sum_{i=1}^N\sigma_i^2=\sum_{i=1}^N P_i^2=||P||_F^2\leq N,
    \end{equation}
    and the equality holds when $P_i,\forall i\in\{1,2,\cdots,N\}$ are one-hot vectors. 
\end{lem}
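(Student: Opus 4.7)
The plan is to chain together three facts: a standard SVD identity, the elementary coordinate expression of the Frobenius norm, and the row-wise bound from Lemma \ref{lem.normalize}. No novel ingredients are required; the main task is simply to verify each equality and the equality case of the inequality.

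First I would establish the leftmost identity $\sum_{i=1}^{N}\sigma_i^{2}=\|P\|_F^{2}$ via the SVD. Writing $P=U\Sigma V^T$ with $U,V$ orthogonal and $\Sigma=\mathrm{diag}(\sigma_1,\ldots,\sigma_N)$, one has
\begin{equation*}
\|P\|_F^{2}=\Tr(P^TP)=\Tr(V\Sigma^T U^T U\Sigma V^T)=\Tr(\Sigma^2)=\sum_{i=1}^{N}\sigma_i^{2},
\end{equation*}
where the cyclic property of the trace and $U^TU=V^TV=I$ are used. This is standard enough that it can be cited rather than redone.

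Next I would rewrite $\|P\|_F^{2}$ entry-by-entry and recognize the row-vector form:
\begin{equation*}
\|P\|_F^{2}=\sum_{i=1}^{N}\sum_{j=1}^{N}p_{ij}^{2}=\sum_{i=1}^{N} P_i\cdot P_i,
\end{equation*}
which identifies the middle quantity $\sum_{i=1}^N P_i^2$ in the claim with $P_i\cdot P_i$ (so the notation $P_i^{2}$ should be read as the squared Euclidean norm of the row vector). Now I would invoke Lemma \ref{lem.normalize} in the special case $i=j$, which immediately gives $P_i\cdot P_i\leq 1$ for every $i$. Summing over $i$ yields the desired bound $\|P\|_F^{2}\leq N$.

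For the equality case I would appeal directly to the equality case already proved in Lemma \ref{lem.normalize}: $P_i\cdot P_i=1$ holds if and only if $P_i$ is a one-hot vector (the off-diagonal products $p_{ik}p_{il}$ with $k\neq l$ must all vanish, and normalization forces exactly one unit entry). Therefore $\sum_i P_i\cdot P_i = N$ if and only if every row $P_i$ is a one-hot vector, which completes the equality characterization. There is no serious obstacle here; the only mild care required is the notational clarification that $P_i^{2}$ in the statement means $\|P_i\|_2^{2}=P_i\cdot P_i$, and making sure the equality case really does come from applying Lemma \ref{lem.normalize} to every diagonal term simultaneously.
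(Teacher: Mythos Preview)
Your proposal is correct and follows essentially the same route as the paper: both establish $\sum_i\sigma_i^2=\|P\|_F^2$ via the trace/SVD identity, then bound each row by Lemma~\ref{lem.normalize} and sum. The only cosmetic difference is that the paper re-derives the one-hot equality case geometrically (intersection of the simplex with the unit sphere) rather than citing the equality clause of Lemma~\ref{lem.normalize} directly, but the content is the same.
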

\begin{proof}
Because $\sigma_i^2, \forall i\in\{1,2,\cdots,N\}$ are the eigenvalues of $P\cdot P^T$, thus, $P$ can be written:
\begin{equation}
\begin{aligned}
    P\cdot P^T=U\Sigma^2 U^T,
\end{aligned}
\end{equation}
where $U$ is a orthonormal matrix with size $N$, $\Sigma^2=diag(\sigma_1^2,\sigma_2^2,\cdot\cdot\cdot,\sigma_N^2)$. Thus,
\begin{equation}
\label{eq:psquare_N}
\begin{aligned}
    \sum_{i=1}^N\sigma_i^2&=\Tr{\Sigma^2}=\Tr{(U\Sigma^2 U^T)}=\Tr{\left(P\cdot P^T\right)}\\
    &=\sum_{i=1}^NP_i^2\leq N.
\end{aligned}
\end{equation}
The last inequality holds because of Lemma \ref{lem.normalize}. And the equality holds if and only if:

\begin{equation}
    P_i\cdot P_i=1, \forall i\in\{1,2,\cdots,N\},
\end{equation}
If $P_i \cdot P_i = 1$, then $\sum_j p_{ij}^2 = 1$, indicating that $P_i$ is on a unit ball. Additionally, $|P_i|_1 = \sum_{j=1}^N p_{ij} = 1$ for all $i$, implying that $P_i$ is on the unit hyperplane. Given that $0 \leq p_{ij} \leq 1$ for all $i,j \in \{1,2,\cdots,N\}$, the intersection of the unit sphere and the unit hyperplane occurs at the corners. Consequently, $P_i$ must be a one-hot vector, where only one element is 1 and the rest are zeros.

Further, it should be noticed that:
\begin{equation}
    \sum_{i=1}^N\sigma_i^2=\sum_{i=1}^NP_i^2=\sum_{i=1}^N\sum_{j=1}^N p_{ij}^2=||P||_F^2
\end{equation}

Therefore, Equation \ref{eq:sigma2_upper_bound} holds, and the equality in Equation \ref{eq:psquare_N} holds when all row vectors are one-hot vectors.
\end{proof}

\begin{lem}
\label{lem.equality}
    For a TPM $P$, we can write it in the following way:
    \begin{equation}
        P=(P_1^T,P_2^T,\cdot\cdot\cdot,P_N^T)^T,
    \end{equation}
    where $P_i$ is the $i$-th row vector. And suppose $P$'s singular values are $\sigma_1\geq \sigma_2\geq \cdot\cdot\cdot\geq \sigma_N$. Thus, if 
    \begin{equation}
        P_i\cdot P_i=1, \forall i\in\{1,2,\cdots,N\},
    \end{equation} 
    then the singular values of $P$ satisfy:
    \begin{equation}
        \sigma_1\geq \sigma_2\geq\cdot\cdot\cdot\geq\sigma_r\geq 1
    \end{equation}
    and
    \begin{equation}
        \sigma_{r+1}=\sigma_{r+2}=\cdot\cdot\cdot=\sigma_N=0,
    \end{equation}
    where $r$ is the rank of the matrix $P$.
\end{lem}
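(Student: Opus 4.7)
The plan is to exploit the strong structural consequence of the hypothesis $P_i\cdot P_i=1$. Following the proof of Lemma \ref{lem.boundsize}, this equality together with $|P_i|_1=1$ and $0\le p_{ij}\le 1$ pins $P_i$ on the intersection of the unit $\ell_2$-sphere with the probability simplex, which consists exactly of the one-hot vectors. Hence every row of $P$ is a one-hot vector; distinct rows of $P$ are orthogonal while equal rows are identical. This discrete structure is what will carry the rest of the argument.

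Next I would analyze $PP^{T}$, whose eigenvalues are $\sigma_i^{2}$. Since $(PP^{T})_{ij}=P_i\cdot P_j$ and each $P_i$ is one-hot, each entry is either $1$ (when $P_i=P_j$) or $0$ (when $P_i\neq P_j$). Partition the index set $\{1,\dots,N\}$ into equivalence classes according to which one-hot vector the corresponding row equals, and let the class sizes be $n_1,\dots,n_r$, where $r$ is precisely the number of distinct rows, i.e.\ $r=\mathrm{rank}(P)$. After a suitable relabelling, $PP^{T}$ becomes block-diagonal, with the $k$-th block being the $n_k\times n_k$ all-ones matrix $J_{n_k}$.

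The spectrum of $J_{n_k}$ is well known: a single eigenvalue $n_k$ with eigenvector $(1,\dots,1)^{T}$ and the eigenvalue $0$ with multiplicity $n_k-1$. Therefore the eigenvalues of $PP^{T}$ are $n_1,n_2,\dots,n_r$ together with $0$ with multiplicity $N-r$, so the singular values of $P$ are $\sqrt{n_1},\sqrt{n_2},\dots,\sqrt{n_r}$ together with $N-r$ zeros. Since $n_k\ge 1$ we get $\sqrt{n_k}\ge 1$, and reordering in nonincreasing fashion yields exactly
\begin{equation*}
\sigma_1\ge\sigma_2\ge\cdots\ge\sigma_r\ge 1,\qquad \sigma_{r+1}=\cdots=\sigma_N=0,
\end{equation*}
as claimed.

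There is no serious obstacle: the only step that deserves care is invoking Lemma \ref{lem.boundsize} to conclude that the rows are literally one-hot (not merely ``peaky''), because all subsequent combinatorics of $PP^{T}$ relies on the cleanness of the $0/1$ Gram matrix. Once that is in place, recognizing $PP^{T}$ as a block-diagonal assembly of all-ones matrices and reading off its spectrum is straightforward linear algebra.
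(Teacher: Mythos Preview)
Your proof is correct and follows essentially the same idea as the paper: reduce to the case where every row is a one-hot vector, group rows by which one-hot they equal (with multiplicities $n_1,\dots,n_r$), and read off the nonzero singular values as $\sqrt{n_1},\dots,\sqrt{n_r}\ge 1$. The only difference is cosmetic: you diagonalize $PP^{T}$, which becomes block-diagonal with all-ones blocks $J_{n_k}$, whereas the paper looks at $P^{T}P$, which is already a diagonal matrix with entries $n_1,\dots,n_r,0,\dots,0$ and so yields the spectrum with one less step.
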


\begin{proof}
    \color{black}
    If $P$ is formed by one-hot vectors, and the rank of $P$ is $r$, then there are $r$ different row vectors. Suppose they are $e_{i_1},e_{i_2},\ldots,e_{i_r}$, and they repeat $n_1,\ldots,n_r$ times,respectively, where $n_1+n_2+\cdots+n_r=N$.
    
    Equivalently, there are $r$ nonzero columns and there are $n_1$ ones in the $i_1$th column, $n_2$ ones in the $i_2$th column, \dots, and $n_r$ ones in the $i_r$th column. Those ones lie in different rows. So $P^T \cdot P$ is a diagonal matrix with $n_1,n_2,\ldots,n_r$ and zeros as its diagonal elements. So the nonzero singular values of $P$ are $\sigma_1=\sqrt{n_1},\sigma_2=\sqrt{n_r},\ldots,\sigma_r=\sqrt{n_r}$. Since $n_i$ is positive integer, and we can assume that $n_1 \geq n_2 \geq \cdot \cdot \cdot \geq n_r$, so

$$
 \sigma_1\geq \sigma_2\geq\cdot\cdot\cdot\geq\sigma_r\geq 1
$$
 and 

 $$ 
 \sigma_{r+1}=\sigma_{r+2}=\cdot\cdot\cdot=\sigma_N=0,
 $$
\end{proof}
\color{black}
We want to prove that it is reasonable that the proposed measure $\Gamma_{\alpha}$ to characterize the dynamical reversibility. First, we will prove that $\Gamma_{\alpha}$ is upper bounded by the system size $N$, and it can reach the maximum value if and only if $P$ is reversible.

\begin{lem}
\label{lem.upperbound_gamma}
    For a given TPM $P$, the measure of dynamical reversibility $\Gamma_{\alpha}$ for any $\alpha\in(0,2)$,is less than or equal to the size of the system $N$.  
    
\end{lem}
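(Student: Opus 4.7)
The plan is to reduce the claim $\sum_{i=1}^N \sigma_i^{\alpha}\le N$ to the already-established bound $\sum_{i=1}^N \sigma_i^{2}\le N$ from Lemma \ref{lem.boundsize}, via a standard norm-interpolation inequality. Since $\alpha\in(0,2)$ sits strictly between $0$ and $2$, and we already control the $\ell^2$ side, it is natural to use H\"older's inequality on the sequence of singular values.

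Concretely, I would apply H\"older with the conjugate exponents $p=2/\alpha$ and $q=2/(2-\alpha)$, which are both at least $1$ precisely because $\alpha\in(0,2)$, and which satisfy $1/p+1/q=1$. Pairing $(\sigma_i^{\alpha})$ against the constant sequence $(1)$ gives
\begin{equation*}
\sum_{i=1}^N \sigma_i^{\alpha}\cdot 1
\;\le\;
\left(\sum_{i=1}^N (\sigma_i^{\alpha})^{2/\alpha}\right)^{\alpha/2}
\left(\sum_{i=1}^N 1^{2/(2-\alpha)}\right)^{(2-\alpha)/2}
=\left(\sum_{i=1}^N \sigma_i^{2}\right)^{\alpha/2} N^{(2-\alpha)/2}.
\end{equation*}
Invoking Lemma \ref{lem.boundsize} to replace $\sum_i\sigma_i^2$ by $N$ then yields $\Gamma_{\alpha}\le N^{\alpha/2}\cdot N^{(2-\alpha)/2}=N$, which is the desired inequality.

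Equivalently, one could phrase the same argument via the power mean inequality: the map $p\mapsto (\tfrac{1}{N}\sum_i\sigma_i^p)^{1/p}$ is nondecreasing, so the $\alpha$-mean of the singular values is bounded above by their quadratic mean, which Lemma \ref{lem.boundsize} forces to be at most $1$; multiplying back by $N$ gives the claim.

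I do not anticipate a real obstacle here, since both tools are classical. The only points that deserve care in writing up are (i) confirming that the H\"older exponents $p,q$ are genuinely admissible throughout the open interval $\alpha\in(0,2)$ (they degenerate exactly at the two endpoints, which is why the definition of $\Gamma_{\alpha}$ already excludes them), and (ii) noting how equality propagates: equality in H\"older requires the $\sigma_i^{\alpha}$ to be proportional to a constant (so all nonzero singular values equal), and equality in Lemma \ref{lem.boundsize} requires every row of $P$ to be one-hot, which together with Lemma \ref{lem.equality} forces $P$ to be a permutation matrix. This observation is not needed for the present lemma but sets up Proposition \ref{thm.maximum}.
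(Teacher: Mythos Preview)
Your proof is correct and essentially matches the paper's argument: the paper applies Jensen's inequality for the concave function $x\mapsto x^{\alpha/2}$ to obtain $\tfrac{1}{N}\sum_i(\sigma_i^2)^{\alpha/2}\le(\tfrac{1}{N}\sum_i\sigma_i^2)^{\alpha/2}$ and then invokes Lemma \ref{lem.boundsize}, which is exactly the power-mean formulation you mention as an equivalent alternative to your H\"older step. Your additional remarks on the equality case and on the degeneration of the exponents at $\alpha\in\{0,2\}$ are accurate and anticipate Proposition \ref{thm.maximum}.
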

\begin{proof}
%Suppose any entry of the matrix $\sqrt{P\cdot P^T}$ is $q_{ij}$. It is easy to prove that $q_{ij}\geq 0$ for any $1\leq i,j\leq N$ since all the eigenvalues of $\sqrt{P\cdot P^T}$ (the singular values of $P$) are non-negative. We will prove that $q_{ij}\leq 1$. 

%According to the definition of $\sqrt{P\cdot P^T}$, the relations between the entries of the matrix $P\cdot P^T$ and $\sqrt{P\cdot P^T}$ are:
%\begin{equation}
%\begin{aligned}   
%    \label{eqn.matrix}
%    \left[\sum_{k=1}^N q_{ik}\cdot q_{kj}\right]_{N\times N}=\left(\sqrt{P\cdot P^T}\right)^2=P\cdot P^T=\left[P_i\cdot P_j\right]_{N\times N}.
%\end{aligned}
%\end{equation}
%where $P_i$ is the $i$th row vector of $P$, thus, according to Lemma \ref{lem.normalize}, we have:
%\begin{equation}
%\begin{aligned}   
%    \sum_{k=1}^N q_{ik}\cdot q_{kj}=P_i\cdot P_j\leq 1.
%\end{aligned}
%\end{equation}
%The inequality holds because of Lemma \ref{lem.normalize}. Therefore, for the diagonal elements:
%\begin{equation}
%\label{eqn.inner}
%(q_{ii})^2\leq\sum_{k=1}^Nq_{ik}q_{ki}=P_i\cdot P_i\leq 1,
%\end{equation}
%thus, $q_{ii}\leq 1$. Finally,
%\begin{equation}
%\begin{aligned}   
%    \label{eqn.diagonal}
%\Gamma&=\sum_{i=1}^N\sigma_i=\Tr{\left(\sqrt{P\cdot P^T}\right)}=\sum_{i=1}^N q_{ii}\leq N.
%\end{aligned}
%\end{equation}
Because $0< \alpha<2$, $f(x)=x^{\alpha/2}$ is a concave function, and according to Lemma \ref{lem.boundsize}, we have:
\begin{equation}
    \label{eq:upperbound}
    \begin{aligned}
        \Gamma_{\alpha}&=\sum_{i=1}^N\sigma_i^{\alpha}=N\cdot\frac{\sum_{i=1}^N\sigma_i^{2\cdot\frac{\alpha}{2}}}{N}\leq N\left(\frac{\sum_{i=1}^N\sigma_i^2}{N}\right)^{\alpha/2}\\
        &=N^{1-\alpha/2}\left(\sum_{i=1}^NP_i^2\right)^{\alpha/2}\leq N^{1-\frac{\alpha}{2}+\frac{\alpha}{2}}=N
    \end{aligned}
\end{equation}
\end{proof}
Next, we will discuss about the upper bound of $\Gamma_{\alpha}$.\\
\\
\textbf{Proposition 2}. \textit{The maximum of $\Gamma_{\alpha}$ is $N$ for any $\alpha\in(0,2)$, and it can be achieved if and only if $P$ is a permutation matrix. }
\begin{proof}
First, we will prove that when the maximum $N$ is achieved for $\Gamma_{\alpha}$, $P$ must be a permutation matrix.

Notice that the condition for $P_i$ to satisfy the equality in the second inequality of Equation \ref{eq:upperbound} is $P_i^2=1$ for all $i\in\{1,2,\cdots,N\}$.

Therefore, according to Lemma \ref{lem.equality}, there are two cases need to be discussed separately.

\textbf{Case 1}: If there are two rows of $P$ are identical: $P_i\cdot P_j=1$ for some $i,j$, then the rank of $P$ is $r<N$. Then, according to Lemma \ref{lem.equality} the first $r$ singular values satisfy:
\begin{equation}
    \sum_{i=1}^r\sigma_i^2=N,
\end{equation}
and $\sigma_i\geq 1$ for any $i\in\{1,2,\cdots,r\}$. 

Suppose there are $s\leq r$ singular values strictly larger than 1, then $\sigma_i^2>\sigma_i^{\alpha}$ for those $1\leq i\leq s$, thus:
\begin{equation}
    \sum_{i=1}^s\sigma_i^{\alpha}<\sum_{i=1}^s\sigma_i^2,
\end{equation}

And for $i>s$, the singular values are either 0 or 1, and therefore $\sigma_i=\sigma_i^2$. Finally, we have:
\begin{equation}
    \Gamma_{\alpha}=\sum_{i=1}^N\sigma_i^{\alpha}<\sum_{i=1}^N\sigma_i^2=N,
\end{equation}

Thus, in this case, the equality in Equation \ref{eq:upperbound} does not hold.

\textbf{Case 2}: If $P_i\cdot P_j\neq 1$ for any $i\neq j$, and $P_i\cdot P_i=1$ for $\forall i\in\{1,2,\cdots,N\}$, then according to Lemma \ref{lem.equality}, if and only if all the singular values are $1$, that is, 
\begin{equation}
\label{eq:optimization_condition}
    \sigma_i=\sigma_j=1, \forall i,j\in\{1,2,\cdots,N\},
\end{equation}
which implies $P$ is invertible. Notice that, this is exactly the same condition to make the equality holds for the first inequality in Equation \ref{eq:upperbound}.

Second, we will prove the necessity for the maximum of $\Gamma_{\alpha}$. If $P$ is a permutation matrix, the eigenvalues of $P$ are the singular values because $P$ is symmetric. And the singular values satisfy:
\begin{equation}
    \label{eq:permutaiton_sigma}
    \sigma_1=\sigma_2=\cdot\cdot\cdot=\sigma_N=1,
\end{equation}
thus,
\begin{equation}
\Gamma_{\alpha}=\sum_{i=1}^N\sigma_i^{\alpha}=N
\end{equation}
which achieves the maximum value of $\Gamma_{\alpha}$.

%And the condition for $q_{ij}$ that makes the equality holds is:
%\begin{equation}
%\begin{aligned}   
%    \label{eqn.condition}
%    q_{ii}=1\land q_{ik}\cdot q_{kj}=0
%\end{aligned}
%\end{equation}
\end{proof}
\begin{comment}
\begin{cor}
    If $P_i$ in $P=(P_1,P_2,\cdot\cdot\cdot,P_N)^T$ is one hot vector for $\forall i\in[1,N]$, and $P$ is degenerative, then the following equation holds:
    \begin{equation}
        \Gamma_{\alpha}<\sum_{i=1}^N\sigma_i^2=N
    \end{equation}
    for any $\alpha\in(0,2)$
\end{cor}
\begin{proof}
    Because $P_i$ are all one hot vectors, thus, $P_i\cdot P_i=1$ for all $i\in[1,N]$. According to Lemma \ref{lem.equality}, there are two cases. And because $P$ is degenerative, which means there are at least two rows of $P$ are identical: $P_i\cdot P_j=1$ for some $i,j$. This is exactly the first case in the proof of Lemma \ref{lem.equality}. Because $0<\alpha<2$ and $\sigma_i> 1$ for $i\leq r$, we have:
    \begin{equation}
    \Gamma_{\alpha}=\sum_{i=1}^N\sigma_i^{\alpha}<\sum_{i=1}^N\sigma_i^2=N.
    \end{equation}

\end{proof}
\end{comment}
This theorem implies $\sum_i\sigma_i^{\alpha}$ is a better indicator for approximate dynamical reversibility than $\sum_i\sigma_i^2$ for $\alpha<2$ although both of them can achieve the maximized value $N$ when $P$ is reversible. However, if $P$ is degenerative, $\sum_i\sigma_i^2$ is also $N$, but $\sum_i\sigma_i^{\alpha}$ is not.

Next, we will discuss the minimum value and minimum point of $\Gamma_{\alpha}$.

\begin{comment}
\begin{cor}
    If $P_i\cdot P_i=1$ for all $i\in[1,N]$, then
    \begin{equation}
        \sqrt{P\cdot P^T}=P\cdot P^T
    \end{equation}
    
\end{cor}
\begin{proof}
    If the matrix $\sqrt{P\cdot P^T}$ can be written as a column vector $\sqrt{P\cdot P^T}=(Q_1,Q_2,\cdot\cdot\cdot,Q_N)^T$, where $Q_i$ is the row vector.
    
\end{proof}
\end{comment}
%for all $0\leq i,k\leq 1$
%thus:
%\begin{equation}
%\begin{aligned}   
%    \label{eqn.bounded}
%    \Gamma &=\Tr(\Sigma)= \Tr\left(\sqrt{P\cdot P^T}\right)=\Tr\left(\sqrt{\sum_{i=1}^N\sum_{j=1}^N P_i\cdot P_j}\right)\\
%    &\leq \sqrt{N^2}=N,
%\end{aligned}
%\end{equation}
%where, the equation holds when $P^T\cdot P=I$, that is $\chi$ is dynamically reversible.
%\end{proof}
\begin{lem}
    \label{lem:min_rank}
    The rank of a nonzero TPM $P$ can achieve the minimum value of $1$ if and only if the row vectors of $P_i$ for any $i$ are identical. In this case, 
    \begin{equation}
        \label{eq:gamma_identical_rows}
        \Gamma_{\alpha}=|P_1|^{\alpha}\cdot N^{\alpha/2}
    \end{equation}
    
\end{lem}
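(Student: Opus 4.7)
The plan is to prove the two directions of the ``if and only if'' and then compute $\Gamma_\alpha$ explicitly. First I would handle the easy direction ($\Leftarrow$): if all row vectors are identical to some common vector $v = P_1$, then $P = e\, v$, where $e = (1,1,\dots,1)^T$ is the $N$-dimensional column of ones and $v$ is treated as a $1 \times N$ row. This is an outer product of two nonzero vectors, so $\mathrm{rank}(P) = 1$, which is the smallest possible rank for a nonzero matrix.

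For the other direction ($\Rightarrow$), suppose $\mathrm{rank}(P) = 1$. Then every row of $P$ is a scalar multiple of one fixed nonzero vector $v$, say $P_i = c_i\, v$ for scalars $c_i \ge 0$ (nonnegativity is inherited from $P$). Because $P$ is a TPM, each row satisfies $|P_i|_1 = 1$, which forces $c_i = 1/|v|_1$ for every $i$, independent of $i$. Hence all rows are equal. This uses only the nonnegativity and the normalization condition, which are both guaranteed by $P$ being a TPM.

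Next, assuming the common-row form $P = e\, v$ with $v = P_1$, I would compute the singular values directly by diagonalizing $P P^T$. Since $v$ is $1 \times N$, the product $v\, v^T$ is the scalar $|v|_2^{\,2} = |P_1|^2$, so
\begin{equation}
    P P^T \;=\; e\, v\, v^T\, e^T \;=\; |P_1|^2\, e\, e^T.
\end{equation}
The matrix $e\, e^T$ is the all-ones $N\times N$ matrix, rank $1$, with unique nonzero eigenvalue $N$ (eigenvector $e/\sqrt{N}$). Therefore $P P^T$ has unique nonzero eigenvalue $N |P_1|^2$, and consequently $P$ has one nonzero singular value $\sigma_1 = \sqrt{N}\, |P_1|$ (with all other $\sigma_i = 0$). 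Substituting into Definition~\ref{dfn.gamma} gives
\begin{equation}
    \Gamma_\alpha \;=\; \sum_{i=1}^N \sigma_i^\alpha \;=\; \sigma_1^\alpha \;=\; N^{\alpha/2}\, |P_1|^\alpha,
\end{equation}
which is the claimed formula.

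The proof is essentially routine, so there is no serious obstacle. The only place requiring mild care is the $\Rightarrow$ step, where one must combine nonnegativity with the $L^1$-normalization to upgrade ``rows are proportional'' to ``rows are equal''; without the TPM assumption, rank one would only give proportionality. Everything else is a direct outer-product singular-value computation.
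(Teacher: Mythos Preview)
Your proof is correct and follows essentially the same approach as the paper: use the $L^1$-normalization of rows to upgrade ``proportional'' to ``equal'' in the rank-$1$ case, then compute $PP^T = |P_1|^2\,\mathbbm{1}_{N\times N}$ directly to read off the single nonzero singular value $\sqrt{N}\,|P_1|$. Your write-up is in fact slightly cleaner in making the outer-product structure $P = e\,v$ explicit.
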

\begin{proof}
    If the rank of $P$ is 1, all the $N-1$ row vectors of $P_i, \forall i\in\{1,2,\cdots,N\}$ can be expressed by a linear function of the first row vector $P_1$, thus 
    \begin{equation}
        P_i=k\cdot P_1,
    \end{equation}
    with $k>0$. However, because $|P_i|_1=1$, thus $k$ must be one. Therefore:
    \begin{equation}
    \label{eq:all_rows_identical}
        P_i=P_j, \forall i,j\in\{1,2,\cdots,N\}.
    \end{equation}

    On the other hand, if Equation \ref{eq:all_rows_identical} holds, the rank of $P$ should be 1.

    In this case,
    \begin{equation}
        P\cdot P^T=|P_1|^2\cdot\mathbbm{1}_{N\times N},
    \end{equation}
    where $|\cdot|$ is the modulus for $\cdot$. Therefore, the eigenvalues of $P\cdot P^T$ are $(|P_1|^2\cdot N,0,\cdot\cdot\cdot,0)$. Thus, the singular values of $P$ should be $(|P_1|\cdot \sqrt{N},0,\cdot\cdot\cdot,0)$, this leads to Equation \ref{eq:gamma_identical_rows}
\end{proof}
\begin{lem}
\label{thm.lowerbound_gamma}
    For a given TPM $P=(P_1,P_2,\cdot\cdot\cdot,P_N)^T$, $\Gamma_{\alpha}=\sum_{i=1}^N\sigma_i^{\alpha}$ can reach its minimum 1 if and only if $P_i=\frac{1}{N}(1,1,\cdot\cdot\cdot,1)$ for $\forall i\in\{1,2,\cdots,N\}$.
\end{lem}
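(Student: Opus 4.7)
The plan is to establish the two directions separately, using Lemma \ref{lem:min_rank} as the main tool for the equality case. For sufficiency, I would substitute $P_i=\frac{1}{N}(1,1,\cdots,1)$ for all $i$ directly into the formula provided by Lemma \ref{lem:min_rank}: since all rows are identical with $|P_1|_2 = \sqrt{N\cdot (1/N)^2} = 1/\sqrt{N}$, the lemma immediately gives
\begin{equation}
\Gamma_{\alpha} = |P_1|_2^{\alpha}\cdot N^{\alpha/2} = N^{-\alpha/2}\cdot N^{\alpha/2}=1.
\end{equation}

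For necessity, the key observation is that any row-stochastic matrix $P$ satisfies $P\mathbbm{1}_N = \mathbbm{1}_N$ where $\mathbbm{1}_N=(1,1,\cdots,1)^T$, so the operator norm (largest singular value) obeys
\begin{equation}
\sigma_1 = \|P\|_2 \geq \frac{\|P\mathbbm{1}_N\|_2}{\|\mathbbm{1}_N\|_2} = \frac{\sqrt{N}}{\sqrt{N}} = 1.
\end{equation}
Consequently $\Gamma_{\alpha} = \sum_{i=1}^N \sigma_i^{\alpha} \geq \sigma_1^{\alpha} \geq 1$ for every $\alpha \in (0,2)$, and this already establishes that $1$ is a genuine lower bound for $\Gamma_{\alpha}$.

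Now suppose $\Gamma_{\alpha}=1$. Since each $\sigma_i \geq 0$ and $\sigma_1 \geq 1$, equality forces $\sigma_1=1$ and $\sigma_2=\cdots=\sigma_N=0$. Hence $\mathrm{rank}(P)=1$, so by Lemma \ref{lem:min_rank} all row vectors of $P$ coincide, say $P_i = P_1$ for every $i$, and moreover $|P_1|_2\cdot\sqrt{N}=\sigma_1=1$, i.e.\ $|P_1|_2 = 1/\sqrt{N}$. Combining this with the probability-normalization constraint $|P_1|_1 = 1$ and applying the Cauchy--Schwarz inequality $|P_1|_1 \leq \sqrt{N}\cdot |P_1|_2 = 1$ shows that equality must hold in Cauchy--Schwarz, which happens iff every component of $P_1$ is equal; together with $|P_1|_1=1$ this pins down $P_1 = \frac{1}{N}(1,1,\cdots,1)$, completing the necessity.

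The main obstacle, I expect, is the clean justification of $\sigma_1 \geq 1$: one must be careful to invoke row-stochasticity and not merely non-negativity, since the lower bound $1$ for $\Gamma_{\alpha}$ fails for arbitrary non-negative matrices. Once this bound is secured, Lemma \ref{lem:min_rank} together with a short Cauchy--Schwarz argument finishes the characterization with essentially no calculation.
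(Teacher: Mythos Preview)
Your proof is correct, and the sufficiency direction is identical to the paper's. For the necessity, however, you take a genuinely different and tighter route. The paper argues heuristically that, since setting as many singular values as possible to zero lowers $\Gamma_{\alpha}$, the minimum must occur at rank $1$; it then invokes Lemma~\ref{lem:min_rank} and minimizes $|P_1|_2$ subject to $|P_1|_1=1$. This two-step optimization is intuitive but the first step (``minimum rank $\Rightarrow$ global minimum of $\Gamma_{\alpha}$ over all TPMs'') is not made fully rigorous there. You instead exploit row-stochasticity directly: $P\mathbbm{1}_N=\mathbbm{1}_N$ forces $\sigma_1\geq 1$, hence $\Gamma_{\alpha}\geq\sigma_1^{\alpha}\geq 1$ for every TPM, which establishes the lower bound outright. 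Equality then pins down $\sigma_1=1$, $\sigma_2=\cdots=\sigma_N=0$, and the Cauchy--Schwarz step identifies the unique minimizer. Your argument is shorter, avoids the informal optimization reasoning, and makes explicit where row-stochasticity (not just non-negativity) is used---exactly the point you flag as the main obstacle.
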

\begin{proof}
    When $P_i=\frac{1}{N}(1,1,\cdot\cdot\cdot,1)$ for $\forall i\in\{1,2,\cdots,N\}$, $|P_1|=N^{-1/2}$, and according to Lemma \ref{lem:min_rank}, 
    \begin{equation}
        \Gamma_{\alpha}=\sum_{i=1}^N\sigma_i^{\alpha}=|P_1|^{\alpha}\cdot N^{\alpha/2}=N^{-\alpha/2}\cdot N^{\alpha/2}=1
    \end{equation}
    for any $0\leq \alpha<2$.

    On the other hand, because the minimum value of $\sigma_i$ is zero, and $\Gamma_{\alpha}=\sum_i\sigma_i^{\alpha}\geq 0$, thus $\Gamma_{\alpha}$ can be minimized if the number of zero singular values is maximized. Notice that the number of non-zero singular values of $P$ is the same as the rank of $P$. Thus, the minimum of $\Gamma_{\alpha}$ can be reached when the minimized rank of $P$ is reached. In such case, according to Lemma \ref{lem:min_rank}, $\Gamma_{\alpha}=|P_1|^{\alpha}\cdot N^{\alpha/2}$, thus the minimized value of $\Gamma_{\alpha}$ is solely dependent on $|P_1|$. While, because $P_1$ is a probability distribution which satisfies $|P_i|_1=1$, thus, $P_1\cdot P_1$ can be minimized when all the elements are equal. Thus, 
    \begin{equation}
        P_1=\frac{1}{N}\cdot\mathbbm{1}.
    \end{equation}

    Therefore, $\frac{1}{N}\cdot \mathbbm{1}$ is the minimum point.
    
\end{proof}

Next, to illustrate why the dynamics reversibility measure $\Gamma_{\alpha}$ increases as the probability matrix $P$ asymptotically converges to a permutation matrix, or dynamically reversible one, we have the following lemmas and the theorem. {It is worth noting that the lemma and the theorem are well-established results in linear algebra. However, we provide our own proof for the sake of completeness and convenience.}
\begin{lem}
\label{lem:alpha_mean_function}
    The function $f(\alpha)=\left(\sum_{i=1}^Nx_i^{\alpha}\right)^{1/\alpha}$ is a monotonic decreasing function of $\alpha$ for any $x_i\geq 0, \forall i\in\{1,2,\cdots,N\}$ and $\alpha>0$.
\end{lem}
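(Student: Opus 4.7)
The plan is to use logarithmic differentiation and reduce the monotonicity claim to the non-negativity of Shannon entropy. Setting $g(\alpha) := \log f(\alpha) = \frac{1}{\alpha}\log\left(\sum_{i=1}^N x_i^{\alpha}\right)$, it suffices to show $g'(\alpha) \leq 0$ for all $\alpha > 0$, since $f = \exp(g)$ is then decreasing whenever $g$ is.

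First I would reduce to the case where all $x_i > 0$: zero entries contribute nothing to $\sum_i x_i^\alpha$ and can be dropped from the sum; if every $x_i = 0$ then $f \equiv 0$ is trivially non-increasing, and if exactly one $x_i > 0$ then $f(\alpha) = x_i$ is constant. In the remaining non-trivial case, I differentiate to obtain
\begin{equation*}
g'(\alpha) = -\frac{1}{\alpha^2}\log\!\left(\sum_{i} x_i^{\alpha}\right) + \frac{1}{\alpha}\cdot\frac{\sum_i x_i^{\alpha}\log x_i}{\sum_i x_i^{\alpha}}.
\end{equation*}
The key step is to introduce the probability distribution $p_i := x_i^{\alpha}/S$, where $S := \sum_i x_i^{\alpha}$. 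Rewriting the second term using $\alpha\log x_i = \log x_i^{\alpha} = \log(p_i S)$, a short manipulation gives
\begin{equation*}
g'(\alpha) = \frac{1}{\alpha^2}\sum_i p_i \log\!\frac{x_i^{\alpha}}{S} = \frac{1}{\alpha^2}\sum_i p_i \log p_i = -\frac{H(p)}{\alpha^2},
\end{equation*}
where $H(p) \geq 0$ is the Shannon entropy of $p$. Since $\alpha^2 > 0$, we conclude $g'(\alpha) \leq 0$, and hence $f(\alpha)$ is monotonically decreasing on $(0, \infty)$.

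I do not anticipate a genuine obstacle here; the only care needed is in the bookkeeping of the $x_i = 0$ case and in verifying the algebraic rearrangement that turns the derivative into a Shannon-entropy expression. Equality $g'(\alpha) = 0$ occurs precisely when $p$ is a point mass, i.e.\ when exactly one of the non-zero $x_i$'s is strictly greater than all the others, consistent with the degenerate constant case noted above. This proof also works verbatim for non-integer $\alpha$ and for any $N$, which is all we need when applying it to the Schatten exponents appearing later.
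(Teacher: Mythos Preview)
Your proof is correct and follows essentially the same route as the paper: both differentiate $g(\alpha)=\log f(\alpha)$ and reduce to the inequality $\frac{\sum x_i^\alpha \log x_i^\alpha}{\sum x_i^\alpha}\le \log\sum x_i^\alpha$. The paper obtains this via Jensen for $\log$ together with the crude bound $\sum(x_i^\alpha)^2\le(\sum x_i^\alpha)^2$, whereas you recognize it directly as $\sum p_i\log p_i\le 0$; these are the same fact packaged slightly differently, and your version is a bit cleaner. One small slip: your closing remark on the equality case is off---$H(p)=0$ requires that \emph{only one} $x_i$ be nonzero, not merely that one nonzero $x_i$ dominate the others---but this is a side comment and does not affect the argument.
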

\begin{proof}
    Because:
    \begin{equation}
        \sum_{i=1}^N\left(x_i^{\alpha}\right)^2\leq \left(\sum_{i=1}^{N}x_i^{\alpha}\right)^2,
    \end{equation}
    thus:
    \begin{equation}
    \label{eq:inequal1}
    \log\frac{\sum_{i=1}^Nx_i^{2\alpha}}{\sum_{i=1}^{N}x_i^{\alpha}}\leq \log \sum_{i=1}^Nx_i^{\alpha}.
    \end{equation}
    Further, because $\log$ is a concave function, therefore:
    \begin{equation}
    \label{eq:inequal2}
        \sum_{i=1}^N\left(\frac{x_i^{\alpha}}{\sum_{j=1}^Nx_j^{\alpha}}\right)\cdot \log x_i^{\alpha}\leq \log\sum_{i=1}^N\left(\frac{x_i^{\alpha}}{\sum_{j=1}^Nx_j^{\alpha}}\cdot x_i^{\alpha}\right),
    \end{equation}
    thus, combining Eq. \ref{eq:inequal1} and \ref{eq:inequal2}, we have:
    \begin{equation}
    \label{eq:log_inequality}
        \frac{\sum_{i=1}^Nx_i^{\alpha}\cdot\log x_i^{\alpha}}{\sum_{i=1}^Nx_i^{\alpha}}\leq \log\sum_{i=1}^N x_i^{\alpha}.
    \end{equation}
    The equality holds when $x_i=0, \forall i\in\{1,2,\cdots,N\}$. Notice that the right hand term is $-\left(\frac{1}{\alpha}\right)'\log\sum_{i=1}^Nx_i^{\alpha}$, where $'$ represents the derivative with respect to $\alpha$, and the left hand term is $\frac{1}{\alpha}\cdot\left(\log\sum_{i=1}^{N}x_i^{\alpha}\right)'$, thus Equation \ref{eq:log_inequality} implies:
    \begin{equation}
        \left(\frac{1}{\alpha}\cdot\log\sum_{i=1}^Nx_i^{\alpha}\right)'=\frac{1}{\alpha}\cdot\left(\log\sum_{i=1}^{N}x_i^{\alpha}\right)'+\left(\frac{1}{\alpha}\right)'\log\sum_{i=1}^Nx_i^{\alpha} \leq 0.
    \end{equation}
    Therefore, $\log f(\alpha)=\frac{1}{\alpha}\cdot\left(\log \sum_{i=1}^Nx_i^{\alpha}\right)$ is a monotonic decreasing function of $\alpha$, and so does $f(\alpha)=\left(\sum_{i=1}^Nx_i^{\alpha}\right)^{1/\alpha}$.
\end{proof}
\begin{lem}
\label{thm:lowerbound}
    The approximate dynamical reversibility measure $\Gamma_{\alpha}$ is lower bounded by $||P||_{F}^{\alpha}$.
\end{lem}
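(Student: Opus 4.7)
The plan is to reduce this statement directly to the monotonicity result just established in Lemma \ref{lem:alpha_mean_function}. By Equation \ref{eq.frobenius_singular}, the Frobenius norm satisfies $\|P\|_F = \left(\sum_{i=1}^N \sigma_i^2\right)^{1/2}$, so what we need to show is
\begin{equation*}
\sum_{i=1}^N \sigma_i^{\alpha} \;\geq\; \left(\sum_{i=1}^N \sigma_i^{2}\right)^{\alpha/2},
\end{equation*}
for every $\alpha \in (0,2)$.

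First, I would raise both sides to the power $1/\alpha$, turning the desired inequality into the equivalent statement
\begin{equation*}
\left(\sum_{i=1}^N \sigma_i^{\alpha}\right)^{1/\alpha} \;\geq\; \left(\sum_{i=1}^N \sigma_i^{2}\right)^{1/2},
\end{equation*}
which is precisely $f(\alpha) \geq f(2)$ for the function $f(\beta) = \left(\sum_i \sigma_i^\beta\right)^{1/\beta}$ introduced with $x_i = \sigma_i \geq 0$.

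Second, I would invoke Lemma \ref{lem:alpha_mean_function}, which asserts that $f$ is monotonically decreasing in its argument. Since $\alpha < 2$, monotonicity immediately yields $f(\alpha) \geq f(2)$, and raising both sides back to the power $\alpha$ gives $\Gamma_\alpha \geq \|P\|_F^\alpha$, completing the argument.

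There is no real obstacle here: the work was already done in Lemma \ref{lem:alpha_mean_function}, and this lemma is essentially a restatement of that monotonicity specialized to the two exponents $\alpha$ and $2$ with the singular values of $P$ as the inputs. The only mild subtlety is that Lemma \ref{lem:alpha_mean_function} requires $x_i \geq 0$, which is automatic since singular values are nonnegative, and that $\alpha > 0$, which holds by Definition \ref{dfn.gamma}. Equality holds when all but one singular value vanish, matching the behavior noted in the proof of the preceding lemma.
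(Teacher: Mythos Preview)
Your proposal is correct and follows essentially the same approach as the paper: both reduce the bound to the monotonicity of $f(\beta)=\left(\sum_i\sigma_i^\beta\right)^{1/\beta}$ from Lemma~\ref{lem:alpha_mean_function}, compare $f(\alpha)$ with $f(2)$, and identify $f(2)$ with $\|P\|_F$ via the singular-value formula (the paper routes this through Lemma~\ref{lem.boundsize} and the trace, you through Equation~\ref{eq.frobenius_singular}, which is the same content). One small remark: your equality condition (at most one nonzero singular value) is the correct one for the $\ell^p$-norm inequality you are invoking, whereas the paper states a different condition; this does not affect the proof of the bound itself.
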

\begin{proof}
    Because $0<\alpha<2$ and $\sigma_i\geq 0, \forall i\in\{1,2,\cdots,N\}$ but the equality can not hold for all $i\in\{1,2,\cdots,N\}$, this means $f(\alpha)=\left(\sum_{i=1}^N\sigma_i^{\alpha}\right)^{1/\alpha}$ is a strictly monotonic decreasing function of $\alpha$ according to Lemma \ref{lem:alpha_mean_function}, thus:
    \begin{equation}
        \Gamma_{\alpha}^{1/\alpha}=\left(\sum_{i=1}^N \sigma_i^{\alpha}\right)^{1/\alpha}\geq\left(\sum_{i=1}^N \sigma_i^2\right)^{1/2},
    \end{equation}
    the equality holds only when $\sigma_i=1$ or $0$ for all $i\in [1,N]$.
    According to Lemma \ref{lem.boundsize}:
    \begin{equation}
        \left(\sum_{i=1}^N \sigma_i^2\right)^{1/2}=\left(\sum_{i=1}^N P_i^2\right)^{1/2}=\left[\Tr(P\cdot P^T)\right]^{1/2}.
    \end{equation}
    While,
    \begin{equation}
        \sqrt{\sum_{i=1}^N P_i^2}=\sqrt{\sum_{i}^N\sum_j^{N}p_{ij}^2}=||P||_F
    \end{equation}
    Therefore:
    \begin{equation}
        \Gamma_{\alpha}\geq \left[\Tr(P\cdot P^T)\right]^{\alpha/2}=||P||_F^{\alpha}.
    \end{equation}
\end{proof}
As $||P||_F$ can achieve its maximum when $P_i$ is a one-hot vector, the lower bound of $\Gamma_{\alpha}$ increases as $P$ approaches a matrix with one-hot row vectors. This can be summarized as a following proposition:
\begin{lem}
\label{cor:gamma_increasing}
    The lower bound of the approximate dynamical reversibility measure $\Gamma_{\alpha}$ increases with the number of one hot vectors in the TPM $P$. 
\end{lem}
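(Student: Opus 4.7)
The plan is to combine Lemma \ref{thm:lowerbound}, which provides $\Gamma_{\alpha}\geq \|P\|_{F}^{\alpha}$, with a row-by-row analysis showing that $\|P\|_F$ strictly grows whenever a non-one-hot row of $P$ is replaced by a one-hot row. Since $x\mapsto x^{\alpha}$ is strictly increasing on $[0,\infty)$ for $\alpha\in(0,2)$, monotonic growth of $\|P\|_F$ transfers to the lower bound $\|P\|_F^{\alpha}$ of $\Gamma_{\alpha}$.

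First I would decompose
\begin{equation}
    \|P\|_F^{2}=\sum_{i=1}^N P_i\cdot P_i,
\end{equation}
and analyze each row contribution separately. For any probability row vector $P_i$ (so $p_{ij}\geq 0$ and $\sum_j p_{ij}=1$), Lemma \ref{lem.normalize} with $j=i$ gives $P_i\cdot P_i\leq 1$, with equality if and only if $P_i$ is a one-hot vector. Consequently, every non-one-hot row contributes strictly less than $1$ to $\|P\|_F^{2}$, while every one-hot row contributes exactly $1$.

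Next, consider two TPMs $P$ and $P'$ that agree on all rows except row $i$, where $P_i$ is non-one-hot and $P'_i$ is one-hot. Then
\begin{equation}
    \|P'\|_F^{2}-\|P\|_F^{2}=P'_i\cdot P'_i - P_i\cdot P_i = 1 - P_i\cdot P_i>0,
\end{equation}
so $\|P'\|_F>\|P\|_F$ and hence $\|P'\|_F^{\alpha}>\|P\|_F^{\alpha}$ for any $\alpha\in(0,2)$. Iterating this replacement argument, the lower bound of $\Gamma_{\alpha}$ provided by Lemma \ref{thm:lowerbound} is strictly monotonic in the number of one-hot rows. When all $N$ rows are one-hot, the lower bound attains $N^{\alpha/2}$, consistent with Lemma \ref{lem.equality} and Proposition \ref{thm.maximum} which guarantee that $\Gamma_{\alpha}$ itself reaches its ceiling $N$ exactly on permutation matrices.

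The main obstacle is essentially expository: making precise what it means for the lower bound to ``increase with the number of one-hot vectors,'' since distinct TPMs with the same count of one-hot rows may yield different Frobenius norms. The cleanest formulation is the row-swap comparison above, which turns the statement into a strict inequality between adjacent configurations and yields the monotonicity claim by induction on the number of one-hot rows.
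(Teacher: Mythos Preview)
Your proposal is correct and follows essentially the same approach as the paper: invoke Lemma \ref{thm:lowerbound} to get $\Gamma_{\alpha}\geq \|P\|_F^{\alpha}$, then use that each row satisfies $P_i\cdot P_i\leq 1$ with equality precisely for one-hot vectors, so the lower bound grows as more rows become one-hot. Your row-swap comparison and inductive framing make the monotonicity claim more precise than the paper's own proof, but the underlying idea is identical.
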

\begin{proof}
    Because $|P_i|_1=1$ for all $1\leq i\leq N$, therefore $||P_i||^2\leq 1$ and the equality holds if and only if $P_i$ is a one-hot vector. 
    Further, according to Lemma \ref{thm:lowerbound}, we have: 
    \begin{equation}
        \Gamma_{\alpha}\geq \left[\Tr{\left(P\cdot P^T\right)}\right]^{\alpha/2}=\left(\sum_{i=1}^N P_i^2\right)^{\alpha/2}.
    \end{equation}
    Thus, the lower bound $\left(\sum_{i=1}^N P_i^2\right)^{\alpha/2}$ increased as each row vector $P_i, \forall i\in\{1,2,\cdots,N\}$ converges to a one-hot vector. 
\end{proof}

However, $P$ may not be a permutation matrix because some row vectors may be similar, in which case $P_i$ for all $i\in [1,N]$ are not orthogonal to each other but collapse to one direction. Thus, we need to further prove a proposition to exclude this case.

\begin{comment}
\begin{lem}
    \label{thm.onehots}
    If $P$ is formed by one-hot vectors, and the rank is $r$, then $\Gamma_{\alpha}$ is $(N-r+1)^{\alpha/2}+r-1$, and this value increases with $r$.
\end{lem}

\end{comment}
\color{black}
\begin{lem}
 If $P$ is formed by one-hot vectors, and the rank of $P$ is $r$, so there are $r$ different row vectors, and each of them repeat $n_1,n_2,\ldots,n_r$ times $(n_1+\cdots+n_r=N)$, respectively, then $\Gamma_\alpha=\sqrt{n_1}^\alpha+\sqrt{n_2}^\alpha+\ldots\sqrt{n_r}^\alpha$. This value can reach its upper bound $r\sqrt{\frac{N}{r}}^\alpha$ when $n_1=n_2=\cdots=n_r=N/r$. This upper bound increases with $r$ if $N$ is fixed.
\end{lem}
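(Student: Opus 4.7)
The plan is to split the proof into three stages: first compute $\Gamma_\alpha$ explicitly in terms of the multiplicities $n_1,\ldots,n_r$; then establish the upper bound by a concavity/Jensen argument; then show that this upper bound, viewed as a function of $r$ with $N$ fixed, is monotonically increasing.

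For the first stage, I would argue as follows. Since $P$ is built from one-hot row vectors and has rank $r$, there are exactly $r$ distinct one-hot rows; after permuting rows (which does not change the singular values since row permutation is left-multiplication by an orthogonal matrix), we may assume identical rows are grouped together. Consequently the matrix $P^T P$ is diagonal: its $(j,j)$ entry is the number of rows equal to $e_j$, so exactly $r$ of its diagonal entries equal $n_1,n_2,\ldots,n_r$ (in some order) and the remaining $N-r$ entries are zero. This argument essentially reuses the reasoning in the proof of Lemma 7 in Supplementary \ref{sec:Gamma}. Therefore the nonzero singular values of $P$ are $\sigma_k=\sqrt{n_k}$ for $k=1,\ldots,r$, and by Definition \ref{dfn.gamma}
\begin{equation}
\Gamma_\alpha=\sum_{k=1}^r \sigma_k^\alpha=\sum_{k=1}^r n_k^{\alpha/2}=\sqrt{n_1}^\alpha+\sqrt{n_2}^\alpha+\cdots+\sqrt{n_r}^\alpha.
\end{equation}

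For the second stage, I would invoke concavity of $x\mapsto x^{\alpha/2}$ on $[0,\infty)$, which holds because $0<\alpha<2$ means $0<\alpha/2<1$. Jensen's inequality applied to the uniform average of $n_1,\ldots,n_r$ subject to the constraint $n_1+\cdots+n_r=N$ yields
\begin{equation}
\frac{1}{r}\sum_{k=1}^r n_k^{\alpha/2}\leq\left(\frac{1}{r}\sum_{k=1}^r n_k\right)^{\alpha/2}=\left(\frac{N}{r}\right)^{\alpha/2},
\end{equation}
so $\Gamma_\alpha\leq r\,(N/r)^{\alpha/2}=r^{1-\alpha/2}N^{\alpha/2}$, with equality iff $n_1=\cdots=n_r=N/r$ (which of course requires $r\mid N$ when the $n_k$ must be integers; otherwise the bound is strict but still tight in the continuous relaxation).

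For the third stage, write the bound as $g(r)=r^{1-\alpha/2}N^{\alpha/2}$. Since $\alpha\in(0,2)$ gives the exponent $1-\alpha/2\in(0,1)$, the function $r\mapsto r^{1-\alpha/2}$ is strictly increasing on $(0,\infty)$, so $g(r)$ increases with $r$ when $N$ is held fixed. The main obstacle I anticipate is the first step: making fully rigorous the identification of the singular values, because one must be careful that permuting rows (used to cluster identical rows) does not affect the singular value spectrum, and that $P^T P$ really is diagonal with the claimed entries. Once this is cleanly stated, the remaining two stages are short applications of Jensen's inequality and elementary monotonicity, respectively.
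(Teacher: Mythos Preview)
Your proposal is correct and follows essentially the same approach as the paper: both compute the singular values by observing that $P^T P$ is diagonal with entries $n_1,\ldots,n_r$ and zeros, then apply Jensen's inequality to the concave map $x\mapsto x^{\alpha/2}$ to obtain the bound $r(N/r)^{\alpha/2}$, and finally note that $r^{1-\alpha/2}N^{\alpha/2}$ is increasing in $r$ since $1-\alpha/2>0$. The only cosmetic difference is that you justify the diagonal form of $P^T P$ via a preliminary row permutation, whereas the paper argues directly from the column structure; both are valid and equivalent.
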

\begin{proof}
   If $P$ is formed by one-hot vectors, and the rank of $P$ is $r$, then there are $r$ different row vectors. Suppose they are $e_{i_1},e_{i_2},\ldots,e_{i_r}$, and they repeat $n_1,\ldots,n_r$ times,respectively, where $n_1+n_2+\cdots+n_r=N$.
    
    Equivalently, there are $r$ nonzero columns and there are $n_1$ ones in the $i_1$th column, $n_2$ ones in the $i_2$th column, \dots, and $n_r$ ones in the $i_r$th column. Those ones lie in different rows. So $P^T \cdot P$ is a diagonal matrix with $n_1,n_2,\ldots,n_r$ and zeros as its diagonal elements. So the nonzero singular values of $P$ are $\sqrt{n_1},\sqrt{n_r},\ldots,\sqrt{n_r}$.
    \[\Gamma_\alpha=\sqrt{n_1}^\alpha+\sqrt{n_2}^\alpha+\ldots\sqrt{n_r}^\alpha\]

    For example,
    \[
    P = \begin{pmatrix}  
       1&0&0&0&0&0&0&0&0&0\\
       0&1&0&0&0&0&0&0&0&0\\
       0&0&1&0&0&0&0&0&0&0\\
       0&0&0&1&0&0&0&0&0&0\\
       0&0&0&1&0&0&0&0&0&0\\
       0&0&0&0&1&0&0&0&0&0\\
       0&0&0&0&1&0&0&0&0&0\\
       0&0&0&0&0&1&0&0&0&0\\
       0&0&0&0&0&1&0&0&0&0\\
       0&0&0&0&0&1&0&0&0&0
    \end{pmatrix}
\]  has $6$ different row vectors $e_1,e_2,e_3,e_4,e_5,e_6$. They repeat $1,1,1,2,2,3$ times, respectively. $P^T \cdot P= diag(1,1,1,2,2,3)$. The nonzero singular values of $P$ are $1,1,1,\sqrt{2},\sqrt{2},\sqrt{3}$.

The upper bound of $\Gamma_\alpha=\sqrt{n_1}^\alpha+\sqrt{n_2}^\alpha+\ldots\sqrt{n_r}^\alpha$ is $r\sqrt{\frac{N}{r}}^\alpha$ and it can be achieved if $N$ is a multiple of $r$ and $n_1=n_2=\cdots=n_r=N/r$. This is beacues $f(x)=x^\frac{\alpha}{2}$ is a concave function ($\alpha\in(0,2)$).
\[
\frac{1}{r}\sum_{i=1}^{r}f(n_i)\leq f(\frac{1}{r}\sum_{i=1}^{r}n_i)=f(N/r)
\]
So $\Gamma_\alpha=\sum_{i=1}^{r}f(n_i)\leq r\sqrt{\frac{N}{r}}^\alpha$.
The equality holds if and only if $n_1=n_2=\cdots=n_r=N/r$. For $\alpha<2$ and a fixed $N$, the upper bound $r\sqrt{\frac{N}{r}}^\alpha=r^{1-\frac{\alpha}{2}}N^\frac{\alpha}{2}$ increases with $r$.

\end{proof}

% In general, $\Gamma_\alpha$ tend to increase with $r$, as is shown in the figure.
% \begin{figure}[h]
%     \centering
%     \includegraphics[width=0.5\textwidth]{g1.eps}
%     \includegraphics[width=0.5\textwidth]{g05.eps}
%     \includegraphics[width=0.5\textwidth]{g15.eps}
%     \caption{$\Gamma_\alpha$ with $r$. Here $N=30$. All the possible matrices formed by one-hot row vectors are included.}
% \end{figure}
\color{black}

\subsection{Comparison between $\log \Gamma_{\alpha}$ and $EI$}
\label{sec:compare_gamma_EI_app}
In this subsection, we will establish the relationship between $\log\Gamma_{\alpha}$ and $EI$. First, we can synthesize the theoretical results about the minimum and maximum for both $\Gamma_{\alpha}$ and $EI$ to derive the following theorem:\\
\\
\textbf{Proposition \ref{thm:synthesize_theorem}}: \textit{For any TPM $P$ and $\alpha\in(0,1)$, both the logarithm of $\Gamma_{\alpha}$ and $EI$ share identical minimum value of $0$ and one common minimum point at $P=\frac{1}{N}\mathbbm{1}_{N\times N}$. They also exhibit the same maximum value of $\log N$ with maximum points corresponding to $P$ being a permutation matrix.}
\begin{proof}
    According to Lemma \ref{thm:EI_derivitive}, $EI$ has the minimum $0$ when $P$ has identical row vectors, and $P=\frac{1}{N}\cdot \mathbbm{1}_{N\times N}$ satisfies this condition. While, according to Lemma \ref{thm.lowerbound_gamma}, $\log\Gamma_{\alpha}$ has also the minimum value $0$ when $P=\frac{1}{N}\cdot \mathbbm{1}_{N\times N}$. Therefore, $EI$ and $\log\Gamma_{\alpha}$ share the same minimum.

    On the other hand, according to Lemma \ref{thm:EI_maximum}, $EI$ has the same maximum at $\log N$ when $P$ is a permutation matrix. So does $\log\Gamma_{\alpha}, \forall \alpha\in(0,2)$ according to Proposition \ref{thm.maximum}.

    Therefore, $EI$ and $\log\Gamma_{\alpha}$ share the same minimum and maximum for any $\alpha\in(0,2)$.
\end{proof}

We will prove a theorem that $EI$ is upper bounded by $\frac{2}{\alpha}\log\Gamma_{\alpha}$.\\
\\
\textbf{Theorem \ref{thm.EI_boundby_gamma}}. \textit{For any TPM $P$, its effective information $EI$ is upper bounded by $\frac{2}{\alpha}\log\Gamma_{\alpha}$, and lower bounded by $\log\Gamma_{\alpha}-\log N$.}

\begin{proof}
    First, because the upper bound of the average distribution $\bar{P}$ is:
    \begin{equation}
    \label{eq.HPbar_inequality}
        H(\bar{P})\leq \log N,
    \end{equation}
    the equality holds when $\bar{P}$ is $\frac{1}{N}\cdot \mathbbm{1}$. Second, according to the concavity of $\log$ function, we have:
    \begin{equation}
    \label{eq.HPi_inequality}
    \begin{aligned}
        -\frac{1}{N}\sum_{i=1}^N H(P_i)&=\frac{1}{N}\sum_{i=1}^N\sum_{j=1}^Np_{ij}\log p_{ij}\\
        &\leq \frac{1}{N}\sum_{i=1}^N\log \left(\sum_{j=1}^Np_{ij}^2\right)\\
        &\leq \log\sum_{i=1}^N\sum_{j=1}^N p_{ij}^2-\log N,
    \end{aligned}
    \end{equation}
    and the equality holds when $p_{ij}=1/N$ for all $1\leq i,j\leq N$. 
    Thus, we bring the two inequalities (Equation \ref{eq.HPbar_inequality} and Equation \ref{eq.HPi_inequality}) into Equation \ref{eq:twoterms_EI}, we obtain:
    \begin{equation}
    EI=-\frac{1}{N}\sum_{i=1}^N H(P_i) + H(\bar{P})\leq \log\sum_{i=1}^N\sum_{j=1}^N p_{ij}^2.
    \end{equation}
    This is the upper bound of $EI$ and the equality holds when $P=\frac{1}{N}\cdot \mathbbm{1}_{N\times N}$.
    
    On the other hand, according to Lemma \ref{thm.lowerbound_gamma},
    \begin{equation}
        \log\Gamma_{\alpha}=\log\sum_{i=1}^N\sigma_i^{\alpha}\geq \log ||P||_F^{\alpha}=\frac{\alpha}{2}\log\sum_{i=1}^N\sum_{j=1}^Np_{ij}^2,
    \end{equation}
    the equality holds when $\sigma_{i}=0,1, \forall i\in{1,2,\cdots,N}$ .
    Therefore:
    \begin{equation}
        EI\leq \frac{2}{\alpha}\log\Gamma_{\alpha}.
    \end{equation}
    
    Furthermore, because $EI\geq 0$ and $\Gamma_{\alpha}\leq N$, thus:
    \begin{equation}
        EI-\log\Gamma_{\alpha}\geq 0 - \log N=-\log N
    \end{equation}
    Therefore,
    \begin{equation}
        EI\geq \log\Gamma_{\alpha} - \log N,
    \end{equation}
    Thus, $EI$ is lower bounded by $\log\Gamma_{\alpha}-\log N$.
\end{proof}
Tighter bounds are expected to be found in future works.

\subsubsection{Quantification for Causal Emergence}
\label{sec:quantify_CE_app}
\begin{proposition}

    \label{thm:CE_bounds}
    For any given TPM $P$ with singular values $\sigma_1\geq \sigma_2\geq \cdots\geq \sigma_N$ and rank $r$, and for any given $\epsilon\in[0,\sigma_1]$, the degree of causal emergence of $P$ is:
    \begin{equation}
        \Delta\Gamma_{\alpha}=\frac{\sum_{i=1}^{r_{\epsilon}}\sigma_i^{\alpha}}{r_{\epsilon}}-\frac{\sum_{i=1}^N\sigma_i^{\alpha}}{N},
    \end{equation}
    and this degree satisfies the following inequality:
    \begin{equation}
    \label{eq:CE_bounds}
        0\leq \Delta\Gamma_{\alpha}\leq N-1,
    \end{equation}
    where $r_{\epsilon}=\max\{i|\sigma_i>\epsilon\}$. 
\end{proposition}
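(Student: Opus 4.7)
The plan is to treat $a_i := \sigma_i^{\alpha}$ as a nonnegative, monotonically non-increasing sequence indexed by $i \in \{1,\ldots,N\}$, and to bound $\Delta\Gamma_{\alpha}$ by elementary manipulations combined with the already-established Proposition~\ref{thm.maximum} which gives $\Gamma_{\alpha}=\sum_i a_i \leq N$.

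For the lower bound $\Delta\Gamma_{\alpha}\geq 0$, I would observe that since $\sigma_i$ is ordered decreasingly and $x\mapsto x^{\alpha}$ is increasing on $[0,\infty)$, the first $r_{\epsilon}$ values $a_1,\ldots,a_{r_{\epsilon}}$ are the top-$r_{\epsilon}$ entries of the sequence. Hence $a_i \geq a_j$ whenever $i\leq r_{\epsilon}<j$, which yields $\sum_{i=r_{\epsilon}+1}^{N} a_i \leq (N-r_{\epsilon})\,a_{r_{\epsilon}} \leq \frac{N-r_{\epsilon}}{r_{\epsilon}}\sum_{i=1}^{r_{\epsilon}} a_i$. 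Rearranging shows $\frac{1}{r_{\epsilon}}\sum_{i=1}^{r_{\epsilon}} a_i \geq \frac{1}{N}\sum_{i=1}^{N} a_i$, i.e.\ $\Delta\Gamma_{\alpha}\geq 0$.

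For the upper bound $\Delta\Gamma_{\alpha}\leq N-1$, I would rewrite
\[
\Delta\Gamma_{\alpha} \;=\; \Bigl(\tfrac{1}{r_{\epsilon}}-\tfrac{1}{N}\Bigr)\sum_{i=1}^{r_{\epsilon}} a_i \;-\; \tfrac{1}{N}\sum_{i=r_{\epsilon}+1}^{N} a_i
\;\leq\; \tfrac{N-r_{\epsilon}}{r_{\epsilon}\,N}\sum_{i=1}^{r_{\epsilon}} a_i,
\]
where the inequality uses $a_i \geq 0$ to discard the second (subtracted) sum. Then, because $\sum_{i=1}^{r_{\epsilon}} a_i \leq \Gamma_{\alpha}$ and $\Gamma_{\alpha}\leq N$ by Proposition~\ref{thm.maximum}, I obtain $\Delta\Gamma_{\alpha}\leq \frac{N-r_{\epsilon}}{r_{\epsilon}}$. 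Finally, since $r_{\epsilon}\geq 1$ (guaranteed by the assumption that there exists $i$ with $\sigma_i>\epsilon$), the right-hand side is maximized at $r_{\epsilon}=1$, yielding $\Delta\Gamma_{\alpha}\leq N-1$.

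There is no serious obstacle here: both bounds reduce to standard manipulations once the monotonicity of $\{a_i\}$ and the ceiling $\Gamma_{\alpha}\leq N$ are invoked. The only subtlety worth flagging is the degenerate case $\epsilon=\sigma_1$, for which $r_{\epsilon}$ as defined would be empty; the hypothesis in the proposition implicitly excludes this by requiring the set $\{i:\sigma_i>\epsilon\}$ to be nonempty, so $r_{\epsilon}\geq 1$ throughout the argument. It is also worth remarking that the upper bound $N-1$ is not attained when $\alpha<2$, since $\sigma_1^{\alpha}\leq N^{\alpha/2}<N$, but the inequality itself remains valid.
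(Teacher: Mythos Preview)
Your proposal is correct. The upper-bound argument is essentially identical to the paper's: both drop the nonnegative tail $\frac{1}{N}\sum_{i>r_{\epsilon}}a_i$, then invoke $\sum_i a_i\leq N$ from Proposition~\ref{thm.maximum} together with $r_{\epsilon}\geq 1$ to reach $N-1$.

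Your lower-bound argument, however, differs from the paper's and is a bit cleaner. The paper uses the threshold $\epsilon$ as an explicit pivot: it bounds $\sum_{i\leq r_{\epsilon}}\sigma_i^{\alpha}>r_{\epsilon}\epsilon^{\alpha}$ and $\sum_{i>r_{\epsilon}}\sigma_i^{\alpha}\leq (N-r_{\epsilon})\epsilon^{\alpha}$, then cancels the two $\epsilon^{\alpha}$ contributions to get $\Delta\Gamma_{\alpha}>0$, and handles the boundary cases $\epsilon\leq\sigma_N$ separately to pick up equality. You instead use only the monotonicity of $a_i=\sigma_i^{\alpha}$ to observe the order-statistic fact that the average of the top $r_{\epsilon}$ terms dominates the overall average; this avoids any case split on $\epsilon$ and never needs the value of $\epsilon$ at all. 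Both routes are elementary, but yours is more direct and makes transparent that the lower bound really has nothing to do with the particular threshold $\epsilon$ --- it holds for any prefix of the sorted sequence. Your caveat about $\epsilon=\sigma_1$ rendering $r_{\epsilon}$ undefined is well taken; the paper's proof implicitly sidesteps this too by working inside $(\sigma_N,\sigma_1]$ for the strict inequality.
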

\begin{proof}
When $\epsilon\in(\sigma_N,\sigma_1]$, there is an integer $i\in(1,N]$ such that $\sigma_i>\epsilon$, and $r_{\epsilon}$ is the maximum one to satisfy this condition. Therefore: 
\begin{equation}
    \sigma_1\geq\sigma_2\geq\cdots\geq\sigma_{r_{\epsilon}}>\epsilon\geq\sigma_{r_{\epsilon}+1}\geq\cdots\geq\sigma_N\geq 0,
\end{equation}
thus: 
\begin{equation}
    \sum_{i=1}^{r_{\epsilon}}\sigma_i^{\alpha}> r_{\epsilon}\cdot \epsilon^{\alpha},
\end{equation}
and:
\begin{equation}
    -\sum_{i=r_{\epsilon}+1}^{N}\sigma_i^{\alpha}\geq -(N-r_{\epsilon})\cdot \epsilon^{\alpha}.
\end{equation}
Therefore:
\begin{equation}
\label{eq:CE_lower}
    \frac{\sum_{i=1}^{r_{\epsilon}}\sigma_i^{\alpha}}{r_{\epsilon}}-\frac{\sum_{i=1}^N\sigma_i^{\alpha}}{N}=\sum_{i=1}^{r_{\epsilon}}\sigma_i^{\alpha}(\frac{1}{r_{\epsilon}}-\frac{1}{N})-\frac{\sum_{i=r_{\epsilon}+1}^{N}\sigma_i^{\alpha}}{N}> r_{\epsilon}\epsilon^{\alpha}(\frac{1}{r_{\epsilon}}-\frac{1}{N})-\frac{(N-r_{\epsilon})\cdot \epsilon^{\alpha}}{N}= 0.
\end{equation}
While, when $\epsilon<\sigma_N$, $r_{\epsilon}=N$, so $\Delta\Gamma_{\alpha}=0$, therefore:
\begin{equation}
    \Delta\Gamma_{\alpha}\geq 0
\end{equation}

Further, when $\epsilon=\sigma_N$, $r_{\epsilon}=N$, thus $\Delta\Gamma_{\alpha}=0$.
\\
On the other hand, because $r_{\epsilon}\leq N$, and according to Proposition \ref{thm.maximum}, thus:
\begin{equation}
    \sum_{i=1}^{r_{\epsilon}}\sigma_i^{\alpha}\leq\sum_{i=1}^{N}\sigma_i^{\alpha}\leq N.
\end{equation}

Since $1\leq r_{\epsilon}\leq N$, therefore 
\begin{equation}
    0\leq \frac{1}{r_{\epsilon}}-\frac{1}{N}\leq 1-\frac{1}{N}.
\end{equation}
Thus, 
\begin{equation}
\label{eq:CE_upper}
    \frac{\sum_{i=1}^{r_{\epsilon}}\sigma_i^{\alpha}}{r_{\epsilon}}-\frac{\sum_{i=1}^N\sigma_i^{\alpha}}{N}\leq \sum_{i=1}^N\sigma_i^{\alpha}(\frac{1}{r_{\epsilon}}-\frac{1}{N})\leq N-1.
\end{equation}
Combining Equation \ref{eq:CE_lower} and \ref{eq:CE_upper}, we obtain Equation \ref{eq:CE_bounds}
\end{proof}

\begin{cor}
\label{eq:CE_occurence_condition}
    For any given TPM $P$ with singular values $\sigma_1\geq \sigma_2\geq \cdots\geq \sigma_N$ and rank $r$, according to Definition \ref{dfn:vague_emergence}, causal emergence occurs if and only if $\Delta\Gamma_{\alpha}(\epsilon)>0$ for some $\epsilon\geq 0$.
\end{cor}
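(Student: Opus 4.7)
The plan is to derive this corollary directly from Proposition \ref{thm:CE_bounds}, which already controls $\Delta\Gamma_\alpha(\epsilon)$ quantitatively, combined with Definitions \ref{dfn:clear_emergence} and \ref{dfn:vague_emergence}. The key observation is that $\Delta\Gamma_\alpha(\epsilon)$ is defined only when $r_\epsilon = \max\{i : \sigma_i > \epsilon\}$ exists, so strict positivity of $\Delta\Gamma_\alpha$ automatically carries the existence statement needed to certify CE, while the converse direction is a matter of exhibiting a threshold $\epsilon$ at which the proposition's strict inequality fires.

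For the ``if'' direction, I would suppose that some $\epsilon \geq 0$ satisfies $\Delta\Gamma_\alpha(\epsilon) > 0$. Then $r_\epsilon$ must be well-defined, so there exists an index $i \in \{1,\dots,N\}$ with $\sigma_i > \epsilon$, which is precisely the defining condition of vague causal emergence with vagueness level $\epsilon$ in Definition \ref{dfn:vague_emergence}.

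For the ``only if'' direction, I would split into the two forms of CE. If clear CE occurs, then $r = \mathrm{rank}(P) < N$; taking $\epsilon = 0$ gives $r_0 = r$, and Definition \ref{dfn:clear_emergence} (equivalently the $\epsilon = 0$ specialization of Equation \ref{eq:degree_vague_emergence}) yields $\Delta\Gamma_\alpha(0) = \Gamma_\alpha \cdot (1/r - 1/N) > 0$ because $\Gamma_\alpha > 0$ whenever $P \neq 0$. If vague CE at some level $\epsilon_0$ occurs with $r_{\epsilon_0} < N$, then $\epsilon_0 \in [\sigma_{r_{\epsilon_0}+1},\sigma_{r_{\epsilon_0}}) \subseteq [\sigma_N,\sigma_1)$, and Proposition \ref{thm:CE_bounds} delivers the conclusion $\Delta\Gamma_\alpha(\epsilon_0) > 0$ at once. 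In the residual sub-case where $r_{\epsilon_0} = N$ but $\sigma_1 > \sigma_N$, I would simply replace $\epsilon_0$ by any $\epsilon' \in (\sigma_N, \sigma_1)$, for which $r_{\epsilon'} < N$ holds and the proposition applies again.

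The main obstacle is the degenerate edge case in which every singular value coincides, $\sigma_1 = \cdots = \sigma_N$: here Definition \ref{dfn:vague_emergence} vacuously reports vague CE for any $\epsilon < \sigma_1$, yet $\Delta\Gamma_\alpha(\epsilon) \equiv 0$. I would handle this by observing that in this situation the rows of $P$ span no redundant information pathways in the sense of Section \ref{sec:new_quantification}, so the notion of CE used throughout the paper is intended to exclude this trivial scenario; alternatively, one can read Definition \ref{dfn:vague_emergence} as implicitly requiring $r_\epsilon < N$, in which case the biconditional closes without any special pleading. Modulo this interpretive caveat, the corollary reduces to a bookkeeping exercise built on top of the strict inequality already established in Proposition \ref{thm:CE_bounds}.
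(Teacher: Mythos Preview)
Your approach is essentially the same as the paper's: both arguments reduce to invoking the strict inequality in Proposition \ref{thm:CE_bounds} once $r_\epsilon<N$ is secured, with the clear-CE case handled by specializing to $\epsilon=0$. The paper organizes the proof by cases on $\epsilon$ (Case I: $\epsilon>0$; Case II: $\epsilon=0$) rather than by logical direction, but the content is identical.

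Your treatment is in fact more careful than the paper's on the degenerate sub-case $\sigma_1=\cdots=\sigma_N$: the paper's proof silently reads ``vague CE at level $\epsilon$'' as entailing $\epsilon>\sigma_N$ (hence $r_\epsilon<N$), which is precisely the implicit restriction you flag in your final paragraph. With that reading in place, no residual sub-case or replacement of $\epsilon_0$ by $\epsilon'$ is needed, and the biconditional closes exactly as you and the paper both argue.
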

\begin{proof}

    \textbf{Case I}: When $\epsilon>0$, according to Definition \ref{dfn:vague_emergence}, if there exists an integer $i\in\{1,2,\cdots,N\}$ such that $\sigma_i>\epsilon$ for any $\epsilon\in[0,\sigma_1]$, that is $\epsilon>\sigma_N$, then the vague CE occurs. Then, according to Proposition \ref{thm:CE_bounds}, $\Delta\Gamma_{\alpha}(\epsilon)>0$ in this case.

    Otherwise, if $\Delta\Gamma_{\alpha}(\epsilon)>0$, then $r_{\epsilon}<N$, where $r_{\epsilon}=\max\{i|\sigma_i>\epsilon\}$, and therefore $\sigma_{r_{\epsilon}}>\epsilon$. According to Definition \ref{dfn:vague_emergence}, vague CE occurs.

    \textbf{Case II}: When $\epsilon=0$, according to Definition \ref{dfn:clear_emergence}, the rank $r$ of $P$ is less than $N$ and $\sigma_j=0, \forall j\in\{r+1,r+2,\cdots,N\}$, therefore 
    \begin{equation}
        \frac{\sum_{i=1}^r\sigma_i^{\alpha}}{r}>\frac{\sum_{i=1}^N\sigma_i^{\alpha}}{N},
    \end{equation}
    so $\Delta\Gamma_{\alpha}>0$.
    
    Otherwise, if $\Delta\Gamma_{\alpha}>0$, there exists $r_{\epsilon=0}<N$ such that $r_{\epsilon=0}=\max\{i|\sigma_i>0\}$. Thus, $\sigma_i=0, \forall i>r_{\epsilon}$, that is $r_{\epsilon=0}$ is the rank of the matrix $P$. Therefore, clear CE occurs according to Definition \ref{dfn:clear_emergence}.

\end{proof}
\nocite{*}
% \color{red}
\section{\label{sec:causal_emergence_reason}The Relationship between SVD and Max EI}

In this paper, we define clear and vague causal emergence using the spectrum of singular values, as outlined in Definitions \ref{dfn:clear_emergence} and \ref{dfn:vague_emergence}. The rationale behind these definitions is that, approximately, a necessary condition for maximizing Effective Information (EI) is that the optimal coarse-graining strategy should allocate more probability mass to the directions of singular vectors corresponding to the largest singular values. However, this condition is not exact, as the actual coarse-graining strategy must meet additional requirements—for instance, the coarse-grained TPM for macro-dynamics must satisfy the normalization condition for each row vector, and the grouping of micro-states must be well-defined and deterministic.

In this section, we theoretically demonstrate why the necessary condition should be satisfied approximately by examining one of the simplest case, the coarse-graining strategies can be represented by a clustering matrix and the coarse-grained TPM can be calculated by the multiplication of the matrices. Subsequently, we provide two examples to illustrate when will consistent conclusion about CE can be derived.

\subsection{Theoretical Analysis for the Necessary Condition of EI Maximization}
In the framework of Erik Hoel's theory of causal emergence, the occurence of causal emergence for a given Markovian dynamical systems relies on the coarse-graining strategy which maximizing the EI of the macro-dynamics after the coarse-graining. 

We define a coarse-graining method using an $N\times r$ clustering matrix $\Phi=(\Phi_1^T,\Phi_2^T,\cdots,\Phi_r^T)$, where each vector $\Phi_i$ indicates the membership in the $i$th cluster. Specifically, $\Phi_{j,i}=1$ if the $j$th micro-state belongs to the $i$th macro-state.

Since each microstate corresponds to a unique macrostate, all vectors $\Phi_i$ for $i\in\{1,2,\cdots,r\}$ are mutually orthogonal. Thus, the transition probability matrix (TPM) for the coarse-grained macro-dynamics can be expressed as follows:

\begin{equation}
    \label{eqn:exact_coarse-grained_macro}
    P'=D\cdot\Phi^T\cdot P\cdot \Phi.
\end{equation}
where $D=diag(1/\sum_{j=1}^N\Phi_{1,j},1/\sum_{j=1}^N\Phi_{2,j},\cdots,1/\sum_{j=1}^N\Phi_{r,j})$ is the normalization coefficients such that $P'$ are composed by normalized row probability vectors. Equation \ref{eqn:exact_coarse-grained_macro} illustrates a naive coarse-graining method that collapses micro-states into macro-states by summing the collapsed probabilities across different columns and averaging across all rows in $P$.

While the expression is exact, its asymmetric form limits further theoretical analysis. To enable this, we normalize the clustering vectors $\Phi_i$ by dividing each by its norm, resulting in $\phi_i=\Phi_i/|\Phi_i|$ and $\phi=(\phi_1^T,\phi_2^T,\cdots,\phi_r^T)$. This allows us to approximately transform Equation \ref{eqn:exact_coarse-grained_macro} into a symmetric form:

\begin{equation}
    \label{eqn:coarse-grained_macro}
    P'\approx \phi^T\cdot P\cdot \phi,
\end{equation}

While, by singular value decomposition, $P$ can be written as,
\begin{equation}
    \label{eqn:singular_value_decomposition}
    P=\sum_{i=1}^N\sigma_i U_i V_i^T,
\end{equation}
where $U_i$(with size $N\times 1$) and $V_i^T$ (with size $1\times N$ $\forall i\in\{1,2,\cdots,N\}$ are singular vectors corresponding to the $i$th largest singular value. Thus,
\begin{equation}
\label{eqn:square_P}
P\cdot P^T=\sum_{i=1}^N\sigma_i^2\cdot U_i\otimes U_i^T,
\end{equation}
where $\otimes$ is the outer product. Therefore, inserting Equation \ref{eqn:square_P} into Equation \ref{eqn:coarse-grained_macro}, we have:
\begin{equation}
    \label{eqn:coarse_decomposition}
    P'\cdot P'^T\approx\phi^T\cdot P\cdot\phi\cdot\phi^T \cdot P^T\cdot\phi=\phi^T\cdot P\cdot P^T \cdot\phi=\sum_{i=1}^N\sigma_i^2\cdot (\phi^T\cdot U_i) \otimes (U_i^T\cdot\phi)
\end{equation}

This equation holds because $\Phi_i,\forall i\in\{1,2,\cdots,r\}$ are orthogonal each other, and as a result $\phi\cdot\phi^T=I_{N\times N}$. Therefore, according to Lemma \ref{thm:lowerbound}, the $1/\alpha$-ordered power of the approximate dynamical reversibility of $P'$ satisfies
\begin{equation}
    \label{eqn:P_prime}
    \Gamma_{\alpha}^{1/\alpha}(P')\geq ||P'||_F=\mathrm{Tr}(P'\cdot P'^T)\gtrapprox \sum_{i=1}^{r}\sigma_i^2\cdot \mathrm{Tr}(W_i\otimes W_i^T)=\sum_{i=1}^r\sum_{j=1}^r \sigma_i^2\cdot(\phi_{j}\cdot U_i)^2,
\end{equation}
where, $W_i\equiv \phi^T\cdot U_i=(\phi_1\cdot U_i,\phi_2\cdot U_i,\cdots,\phi_r\cdot U_i)^T$, and the second inequality holds because the smallest $N-r$ singular values are cut-off. Thus, if $\phi=(\phi_1^T,\phi_2^T,\cdots,\phi_r^T)$ is selected such that there is at least one vector, say $\phi_j^T$, being parallel with the singular vector $U_i^T$, such that $||W_i||^2$ could be maximized for $\forall i\leq r$. This is possible since both $\phi$ and $U$ are all mutually orthogonal. This is equivalent to assign more probability mass on the directions of $U_i^T$ for largest singular values.

As a result, $\Gamma_{\alpha}(P')$ could be maximized. While, according to the approximate relationship $\log \Gamma_{\alpha}\sim EI$, $EI$ could also be maximized.
Therefore, to maximize $EI$, we should select the coarse-graining strategy that can assign more probability mass on the directions of singular vectors corresponding to the largest singular values.

However, in real applications, the coarse-graining strategy $\Phi$ has some constraints (e.g., it should be hard grouping method, and the final TPM of the macro-dynamics should satisfy normalization condition) such that the $\phi_i$ can not parallel the singular vectors exactly. Thus, this requirement is only an approximate necessary condition for maximizing EI.
\subsection{A Numeric Example for Clear Emergence}
We will show the consistency between the SVD and the EI maximization methods with two examples. The first example is shown in Figure \ref{fig:examples}(c) and (d). The TPM is:
\begin{equation}
    P=\begin{pmatrix}
        1/3 & 1/3 & 1/3 & 0\\
        1/3 & 1/3 & 1/3 & 0\\
        1/3 & 1/3 & 1/3 & 0\\
        0 & 0 & 0 & 1\\
    \end{pmatrix}
\end{equation}

Because $rank(P)=2$, the clear CE occurs. $EI(P)=0.81$. We select $r=2$, and the optimal strategy of coarse-graining for maximization of $EI$ can be written as,
\begin{equation}
    \Phi=\begin{pmatrix}
        1,0\\
        1,0\\
        1,0\\
        0,1
    \end{pmatrix},
\end{equation}

and the corresponding $\phi$ being the normalization of $\Phi$ is:
\begin{equation}
\label{eqn:phi_matrix_example}
    \phi=\begin{pmatrix}
        \frac{1}{\sqrt{3}},0\\
        \frac{1}{\sqrt{3}},0\\
        \frac{1}{\sqrt{3}},0\\
        0,1\\
    \end{pmatrix}.
\end{equation}

The left matrix composed by all the singular vectors of $P$ is:
\begin{equation}
\label{eqn:Umatrix_example}
    U=\begin{pmatrix}
        0&\frac{1}{\sqrt{3}}&-\frac{1}{\sqrt{2}}&-\frac{1}{\sqrt{6}}\\
        0&\frac{1}{\sqrt{3}}&0&\sqrt{\frac{2}{3}}\\
        0&\frac{1}{\sqrt{3}}&\frac{1}{\sqrt{2}}&-\frac{1}{\sqrt{6}}\\
        1&0&0&0
    \end{pmatrix}.
\end{equation}

Comparing Equations \ref{eqn:phi_matrix_example} and \ref{eqn:Umatrix_example}, we know the first column vectors in Equation \ref{eqn:phi_matrix_example} are parallel the second and the first column vectors in Equation \ref{eqn:Umatrix_example}, and the inner products between $U_3^T$ and $U_4^T$ and $\phi$ are zeros. The final coarse-grained TPM is:
\begin{equation}
    P'=\begin{pmatrix}
        1&0\\
        0&1\\
    \end{pmatrix}
\end{equation}

Therefore, $EI(P')=1$ which is larger than $0.81$, and $CE=1-0.81=0.19$. This consistent with the conclusion drawn by SVD method where the degree of CE is $\Delta\Gamma=1/2$.

\begin{subsection}{An Example for Vague Causal Emergence}
We can also give an example of vague CE to illustrate the unreasonable nature of the optimal coarse-graining method for maximizing EI. This example is original shown in \cite{eberhardt2022causal} to demonstrate the weakness of Erik Hoel's theory of causal emergence.

Consider a Markov chain with 3 states, and its TPM is:
\begin{equation}
    P=\begin{pmatrix}
        0.3&0.6&0.1\\
        0.6&0.2&0.2\\
        0&0&1\\
    \end{pmatrix}
\end{equation}

Because $rank(P)=3$, there is no clear CE. Its singular values are:$\sigma_1=1.06,\sigma_2=0.80,\sigma_3=0.35$. $EI(P)=0.66$.
We can set the vagueness $\epsilon=0.35$, such that the vague CE occurs with the degree of $\Delta\Gamma=\frac{1.06+0.8}{2}-\frac{1.06+0.8+0.35}{3}=0.19$. 

The $U$ matrix with singular vectors is:

\begin{equation}
    U=\begin{pmatrix}
        0.32&-0.67&-0.67\\
        0.40&-0.55&0.74\\
        0.86&0.50&-0.09\\
    \end{pmatrix}
\end{equation}

There are totally three possible reasonable coarse-graining strategies: clustering two micro-states as a group and leaving the other micro-state itself as a group, namely, $\{\{1,2\},3\},\{\{1,3\},2\},\{\{2,3\},1\}$. The represented vectors for these three strategies, as well as the singular vectors of $U$ can be visualized by Figure \ref{fig:vectors}.

\begin{figure}
    \centering
    \includegraphics[width=0.5\linewidth]{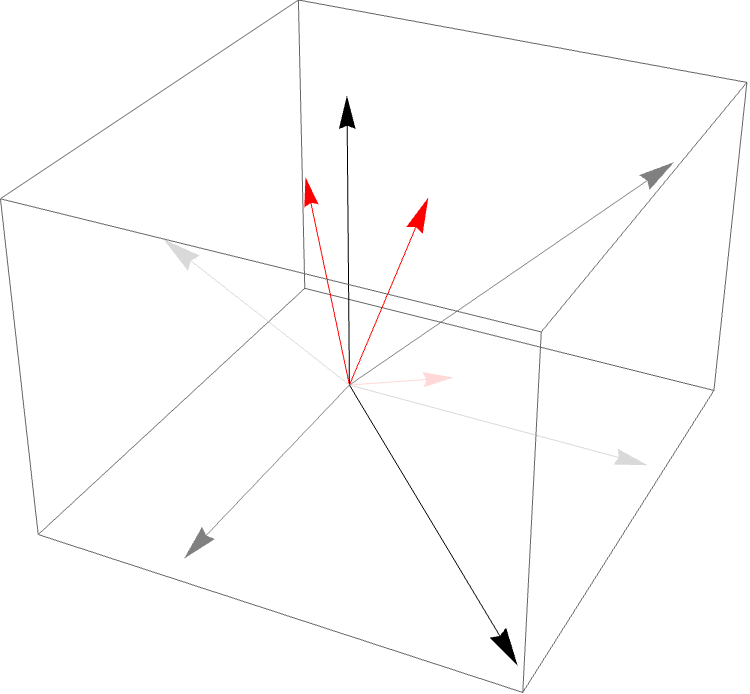}
    \caption{The singular vectors and the vectors representing the coarse-graining strategies. The red arrows are the singular vectors in $U$(the magnitude is multiplied by the corresponding singular values, the light red arrow represents the singular vector with the smallest singular value), the black arrows represent the optimal coarse-graining strategy with EI maximization, $\{\{1,2\},3\}$. The gray arrows represent the coarse-graining strategy $\{\{2,3\},1\}$, and the light gray arrows represent the strategy $\{\{1,3\},2\}$.}
    \label{fig:vectors}
\end{figure}

It is clearly to see in Figure \ref{fig:vectors} that only the black arrows representing the optimal strategy are parallel with the red arrows which representing the major singular vectors corresponding to the largest two singular values. All other vectors representing the other strategies for coarse-graining all have the non-zero projections on the direction represented by the light red arrow, which is unreasonable.

The optimal strategy is
\begin{equation}
    \Phi=\begin{pmatrix}
        1&0\\
        1&0\\
        0&1\\
    \end{pmatrix}
\end{equation}

Finally, the optimal macro-level coarse-grained TPM is:
\begin{equation}
    P'=\begin{pmatrix}
        0.85&0.15\\
        0&1\\
    \end{pmatrix}
\end{equation}
It is EI turns out to be 0.68, and the $CE=0.68-0.66=0.02$. Thus, the CE occurs according to EI maximization. 

However, as pointed out in reference \cite{eberhardt2022causal}, the optimal coarse-graining strategy of EI maximization is unreasonable because it combines the states with dissimilar vectors (the first and the second row vectors), this issue is referred to as ambiguity in \cite{eberhardt2022causal}, as it introduces uncertainty when reducing the intervention from macro-states to micro-states. Also, the coarse-grained TPM $P'$ deviates from $P$, this can also be observed by the relatively large $\epsilon$. 

Another serious problem is the non-commutativity between abstraction(coarse-graining) and marginalization(time evolution) in this example because:

\begin{equation}
    P\cdot\Phi=\begin{pmatrix}
        0.9&0.1\\
        0.8&0.2\\
        0&1
    \end{pmatrix},
\end{equation}

but:
\begin{equation}
    \Phi\cdot P'=\begin{pmatrix}
        0.85& 0.15\\
        0.85& 0.15\\
        0.& 1.
    \end{pmatrix}.
\end{equation}

Thus,
\begin{equation}
    P\cdot \Phi\neq \Phi\cdot P'
\end{equation}

Therefore, the coarse-graining operator does not commute with the time evolution operator. That indicates that the optimal coarse-graining method is unreasonable on this example although it can maximize EI.

\end{subsection}

\subsection{Consistency of max EI and SVD Methods in Experiments with Cellular Automata and Complex Networks}
\label{sec:consistency_maxEI_SVD}

This section experimentally demonstrates the relationship between the direction of the clustering method in coarse-graining strategy that maximizes EI, represented by vectors, and the directions of the singular vectors of the TPM, as illustrated in \ref{fig:greedyei_eg}. The experimental subjects are cellular automata and stochastic block models mentioned in Figures \ref{fig:CA} and \ref{fig:emergence_renormalization} in the main text for reference. Coarse-graining strategies consist of two steps: 1. identifying the optimal state clustering method using a greedy approach as outlined in \cite{Hoel2013,klein2020emergence}; 2. constructing the coarse-grained TPM by summing rows and averaging columns in $P$ based on the clustering method from the first step.

\begin{figure}
    \centering
    \includegraphics[width=1\linewidth]{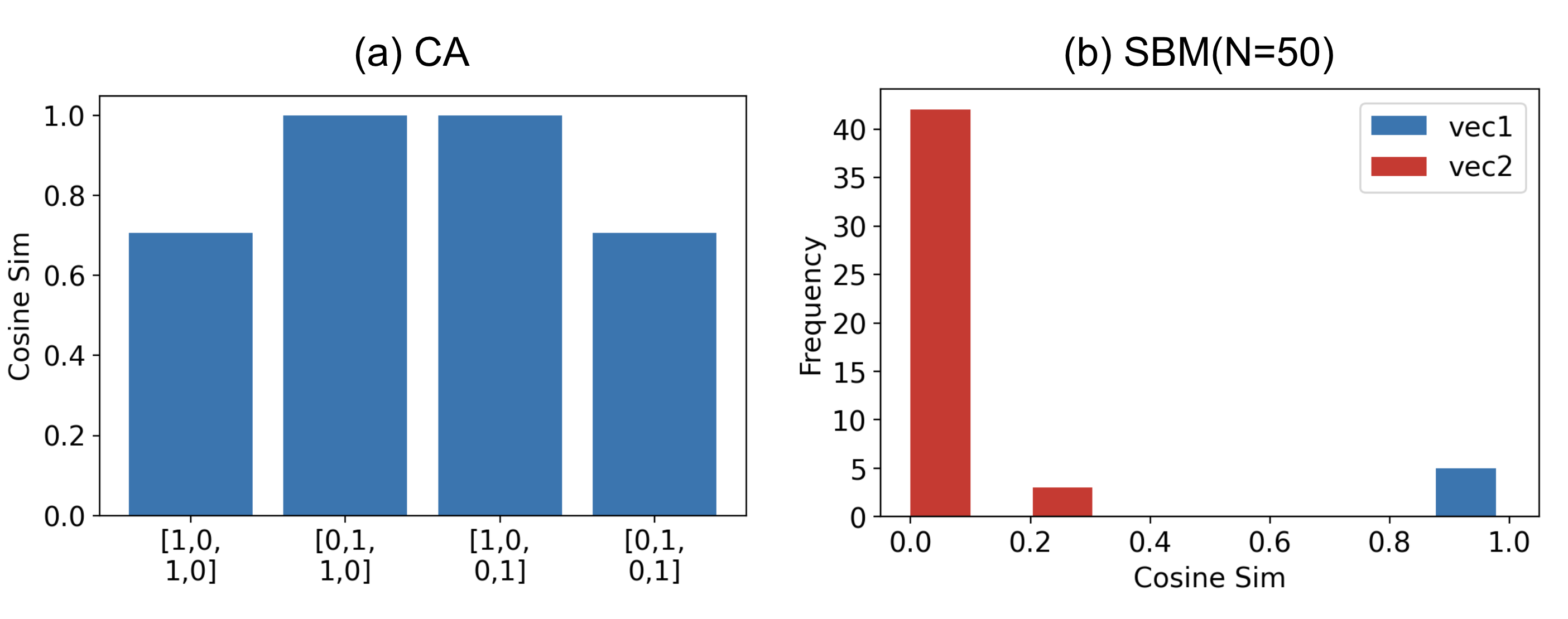}
    \caption{The cosine similarities between state clustering methods in coarse-graining strategies, represented by vectors that maximize EI, and the singular vectors of the TPM for cellular automata (a) and the stochastic block model of complex networks (b) are analyzed. In Figure (a), the four coordinates on the horizontal axis represent the four types of local TPMs (the only possible 2x2 permutation matrices) for the cellular automaton discussed in Figure \ref{fig:CA} in the main text. For each local TPM, we identify the coarse-graining strategy—methods for clustering states—represented by vectors that maximize EI using a greedy algorithm (\cite{Hoel2013,klein2020emergence}), and then calculate the cosine similarity (the vertical axis) between these vectors and the singular vectors corresponding to the largest $r$ singular values, with $r$ determined by the EI maximization results. Figure (b) illustrates the distribution of cosine similarities between the coarse-graining strategies (methods for clustering nodes into communities) represented by vectors for maximizing EI and the singular vectors for the largest (vec1) and smallest (vec2) singular values for the complex networks randomly generated by stochastic block models (SBM) with 50 nodes, 5 blocks(communities), and parameters $p = 1$ (connection probability within communities) and $q = 0$ (connection probability between communities). We utilize the normalized EI, Eff, as our objective function to compare networks of varying sizes. The horizontal axis represents cosine similarities, while the vertical axis indicates the frequency of these similarities within specific intervals.}
    \label{fig:greedyei_eg}
\end{figure}

Both figures show that the directions of the vectors representing the states clustering method in coarse-graining strategy that maximizing EI align closely with the singular vectors of the largest retained singular values. Figure \ref{fig:greedyei_eg}(a) shows the cosine similarities greater than 0.5 between the clustering method vectors and the singular vectors corresponding to the largest singular values. Figure \ref{fig:greedyei_eg}(b) illustrates the distribution of cosine similarities between the clustering method represented with vectors that maximize EI and the singular vectors, where vec1 (blue column) represents the vectors corresponding to the largest singular values, and vec2 (red columns) represents the vectors corresponding to the smallest values. The similarities for vec1 cluster between 0.8 and 1, while the similarities for vec2 are close to 0. This strongly indicates a high positive correlation between the optimal state clustering directions for EI maximization and the singular vectors of the singular vectors with largest singular values, along with a clear dissimilarity to the ones for smallest singular values.

% \color{black}
\section{\label{sec:comparison_experiments}Experiments on Testing the Correlation between $EI$ and $\Gamma$}
In this section of the Supplementary, we will introduce the details of our experiments on the relationship between the approximate dynamical reversibility $\Gamma$ and $EI$ on various generated TPMs. The generative model of TPMs have three classes: softening of permutation matrix, softening of controlled degenerative TPMs, and random normalized. 
\subsection{\label{sec:perturbation_permutation}Softening of Permutation Matrix}
In this series experiments, we will find out what the relationships between $\Gamma$ and $EI$ are on a variety of TPMs with different deviations from the reversible TPMs (permutation matrix) and different sizes.

For given size $N$, The TPM is generated by three steps: 1). Randomly sample a permutation matrix with dimension $N\times N$; 2). For each row vector $P_i$ in $P$, suppose the position of the 1 element is $j_i$, we fill out all entries of $P_i$ with the probabilities of a Gaussion distribution center at $j_i$, that is, $p'_{i,j}=\frac{1}{\sqrt{2\pi}\sigma}\exp{-\frac{(j-j_i)^2}{\sigma^2}}$, where, $\sigma$ is a free parameter for the degree of softening; 3). Normalize this new row vector $P'_i$ by dividing by $\sum_{j=1}^Np'_{ij}=1$, such that the modified matrix $P'$ is also a TPM.

In this model, the unique parameter $\sigma$ can control the degree of deviations from the original TPM. When $\sigma=0$, we recover the original TPM. And when $\sigma$ increases to very large value, then the row vectors converge to the vector $\mathbbm{1}/N$. \ref{fig:perturbed_TPM} shows the TPMs before and after the update on $\sigma=10$, where the colors represent the probabilities.

\begin{figure}
    \centering
    \includegraphics[width=0.7\linewidth]{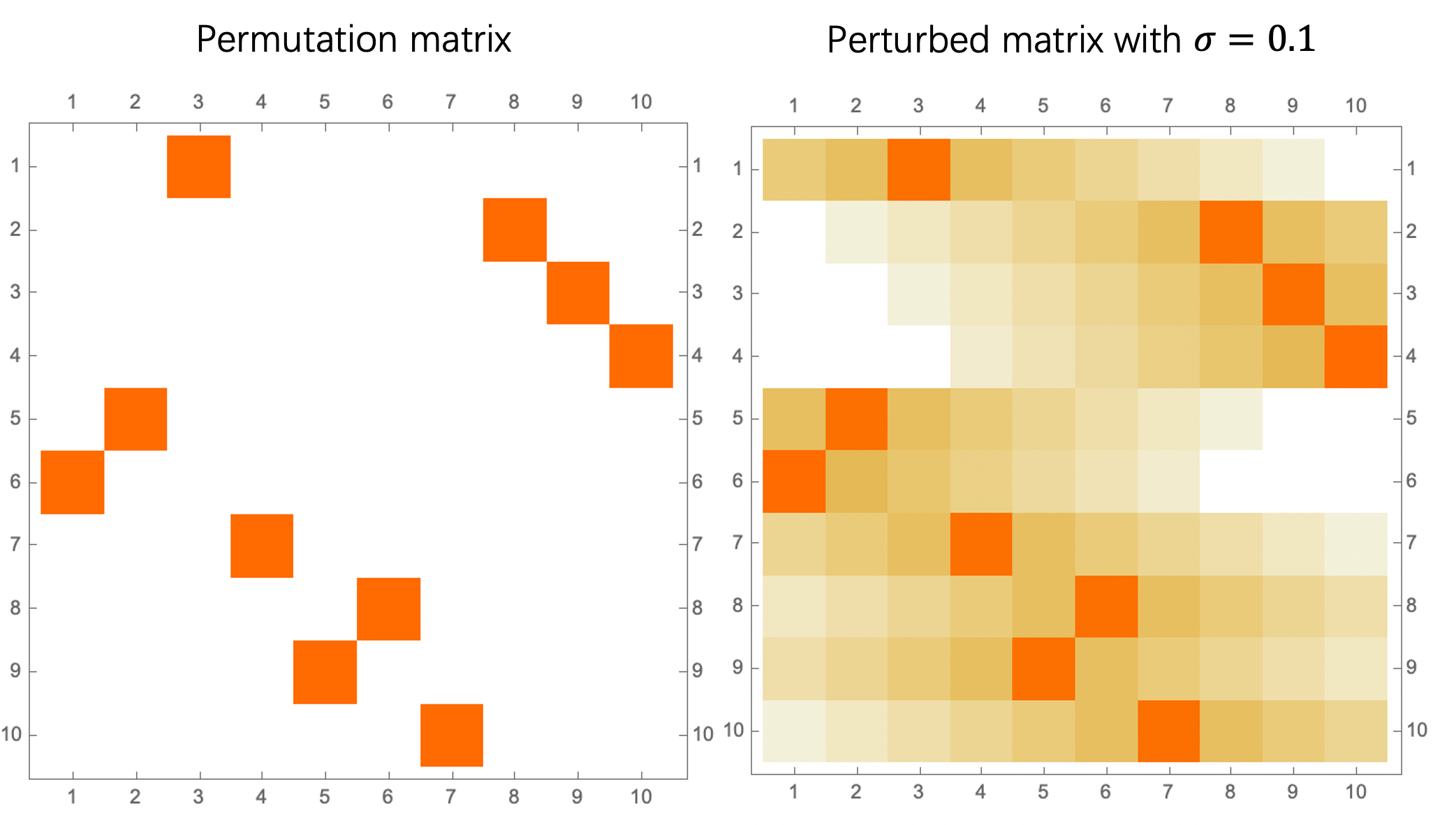}
    \caption{The original TPM which is a permutation matrix and the perturbed matrix after softening.}
    \label{fig:perturbed_TPM}
\end{figure}

By adjusting different $\alpha$, we can obtain the similar curves between $\Gamma_{\alpha}$ and $EI$ as shown in \ref{fig:alpha}. We can observe that the positive correlations between $EI$ and $\Gamma_{\alpha}$ can not be changed by different $\alpha$, the shapes of the curves are different. 
\begin{figure}
    \centering
    \includegraphics[width=1\linewidth]{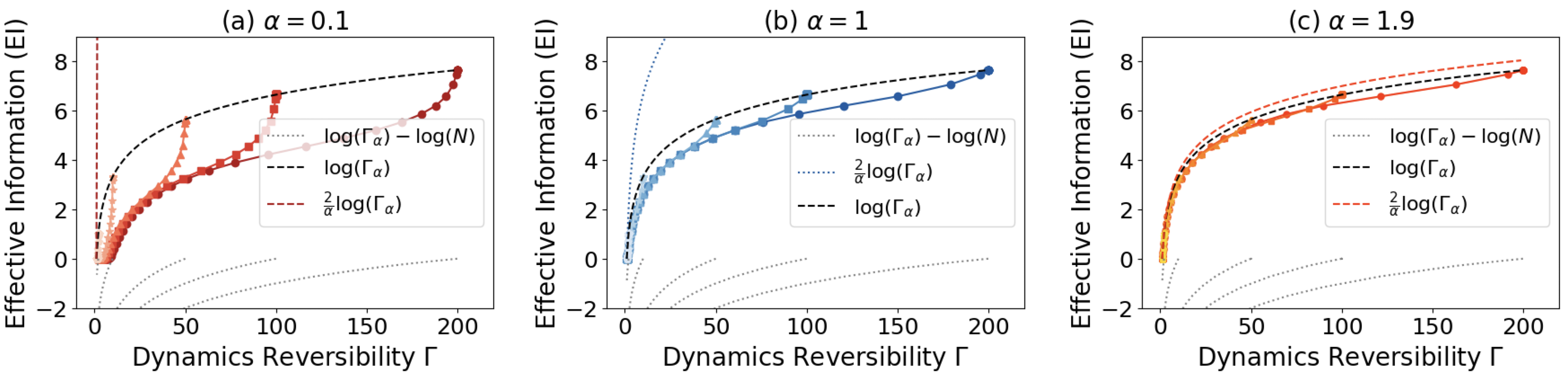}
    \caption{The relationships between $EI$ and $\Gamma_{\alpha}$ for different $\alpha$. The theoretical and empirical upper bounds are shown as red and black dashed lines, while the theoretical lower bounds which change with $N$ are also shown as dotted lines.}
    \label{fig:alpha}
\end{figure}

\subsection{\label{sec:degeneracy}Softening of Controlled Degeneracy}

The second model is very similar to the first one, however, the original matrix is not a permutation matrix, but a degenerated matrix. Here, a TPM is degenerative means that there are some row vectors are identical, and the number of identical row vectors is denoted as $N-r$ which is the controlled variable, where $r$ is the rank of $P$. By tuning $N-r$, we can control the degeneracy~\cite{Hoel2013} of the $TPM$ as \ref{fig:degenerate_TPM} shows.

\begin{figure}
    \centering
    \includegraphics[width=1\linewidth]{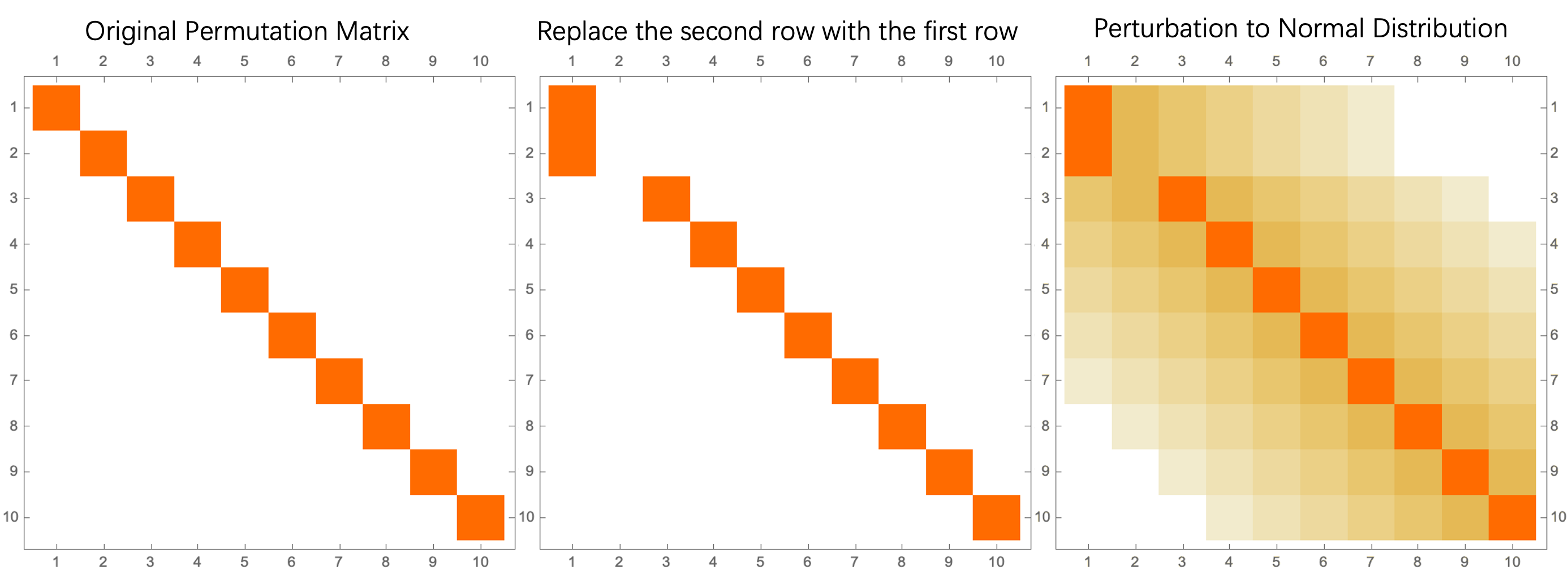}
    \caption{The original TPM ($N=10$ identity matrix), the replaced degenerated one by controlling $N-r=2$, and the perturbed TPM of the replaced one by softening with $\sigma=0.1$. Different colors represent different values of probability.}
    \label{fig:degenerate_TPM}
\end{figure}

Without losing generality, we can start from an identity matrix, and change the controlled $N-r$ row vectors into the same one-hot vectors with all elements 0 except the first one is 1. After that, we soften the one hot vectors with the same method mentioned in Section \ref{sec:perturbation_permutation} to obtain the results in the main text and Figure \ref{fig:comparison}(b).

\subsection{\label{sec:random_normalization}Random Normalization}
In this model, only two steps are required to generate a TPM: 1) Sample a row random vector from a uniform distribution in $[0,1]$, 2) normalize this row vector such that the generated matrix is a TPM.

%\begin{figure}
%    \centering
%    \includegraphics[width=1\linewidth]{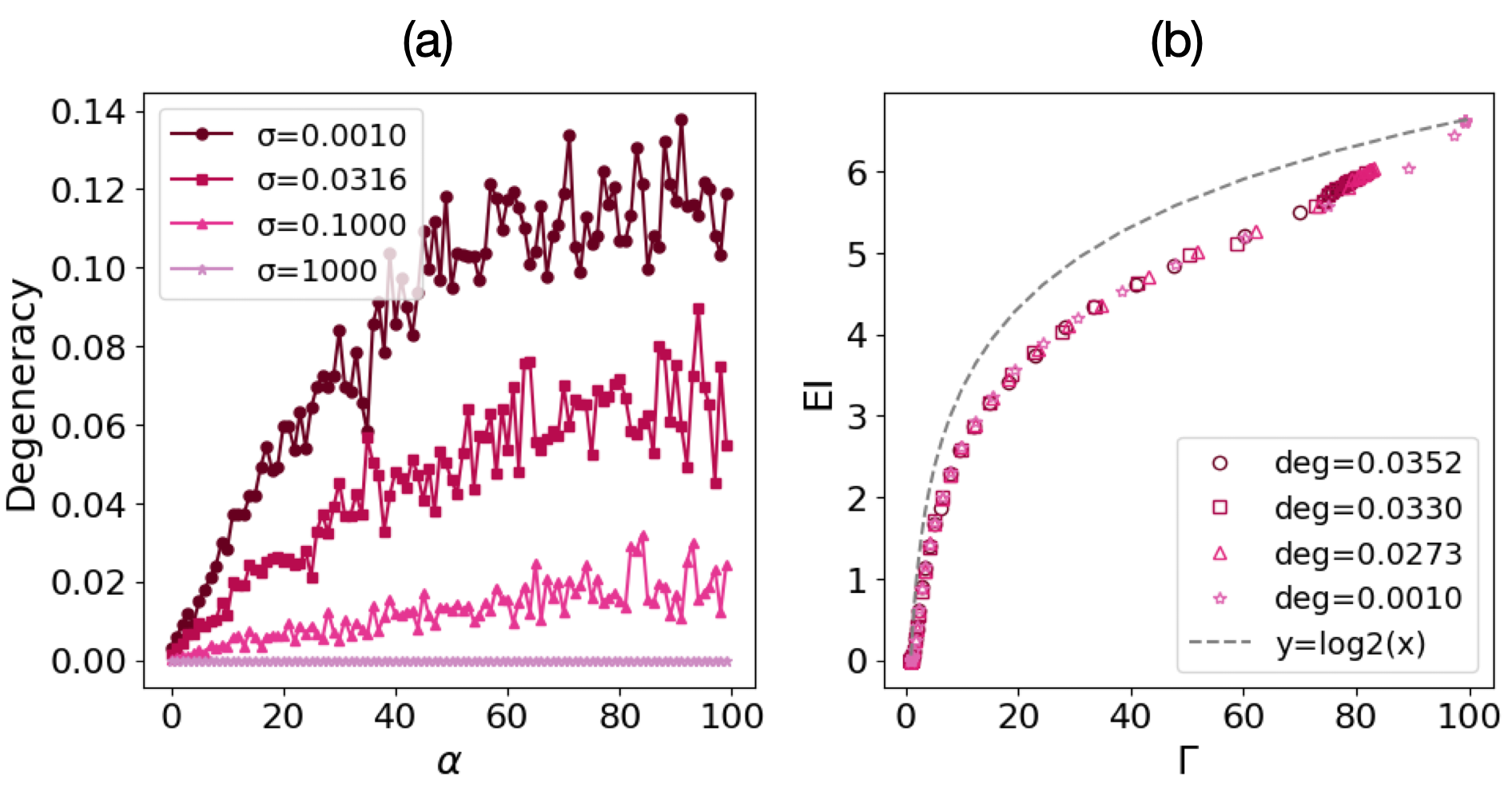}
%    \caption{(a). Degeneracy of different disturbance change as $\alpha$. Here, we choose $\sigma=0.001,1000$, 1/2 and 1/3 quantiles under logarithmic division, that is $\sigma=0.0316,0.1$ to display. (b). EI versus $\Gamma$ for different degeneracy. Each point here is the average result of a different certainty under different disturbance noise $\sigma$.}
%    \label{fig:degenerate_results}
%\end{figure}

\section{\label{sec:simplist_case}Analytic Solutions on the Parameterized 2*2 TPM}
We compare $EI$ and $\Gamma$ on a simplest example, the TPM is as:
\begin{equation}
\label{eq:pq_matrix}
    P=\begin{pmatrix}p & 1-p \\1-q & q\end{pmatrix},
\end{equation}
where $p$ and $q$ are all free parameters in the range of $[0,1]$. With this TPM, we can explicitly write down the expression for $EI$:
\begin{equation}
\label{eq:pq_matrix_EI}
\begin{aligned}
    EI=&\frac{1}{2}\left[p\log_2\frac{2p}{1+p-q}+(1-p)\log_2\frac{2(1-p)}{1-p+q}\right.\\
    &+\left.(1-q)\log_2\frac{2(1-q)}{1+p-q}+q\log_2\frac{2q}{1-p+q}\right],
\end{aligned}
\end{equation}

and $\Gamma$:
\begin{equation}
\label{eq:pq_matrix_gamma}
\begin{aligned}
    \Gamma=\sqrt{p^2+(1-p)^2+(1-q)^2+q^2+2|1-q-p|}.
\end{aligned}
\end{equation}
According these two expressions, we plot the landscapes in Figure \ref{fig:comparison}(e) and (f). 

\section{\label{sec:other_example}Applying Our Coarse-graining Method on More Examples}
\subsection{\label{sec:example_booleans}Examples of Boolean Networks in Hoel(2003)}
To compare our methods and EI maximization method on quantification of causal emergence and coarse-graining a Markov chain, we apply our method to the examples of causal emergence in Hoel et al's original papers \cite{Hoel2013} and \cite{Hoel2017}. All the examples show clear causal emergence. And almost identical reduced TPMs are obtained for all the examples.

%In Figure \ref{fig:SI_ex1}, we present the results for the same example Hoel mentioned, which uses a directed Boolean network of 6 nodes. Our findings align perfectly with Hoel's, proving that the equivalence of the two theoretic frameworks. 

Figures \ref{fig:SI_ex2} and \ref{fig:SI_ex3} show additional examples from the Supplementary, where our coarse-graining method obtain a coarse Markov chain with higher EI compared to Hoel's findings. The main difference lies in: Hoel groups node variables to form a new macro Boolean network and then calculates EI for the TPM. We, on the other hand, group states directly, which might result in a TPM that doesn't fully represent the original Boolean network in terms of variables. Therefore, the extension of our method to variable-based but not state-based coarse-graining method is deserve future studies.

\newpage

\begin{figure}[h!]
    \centering
    \includegraphics[width=1\linewidth]{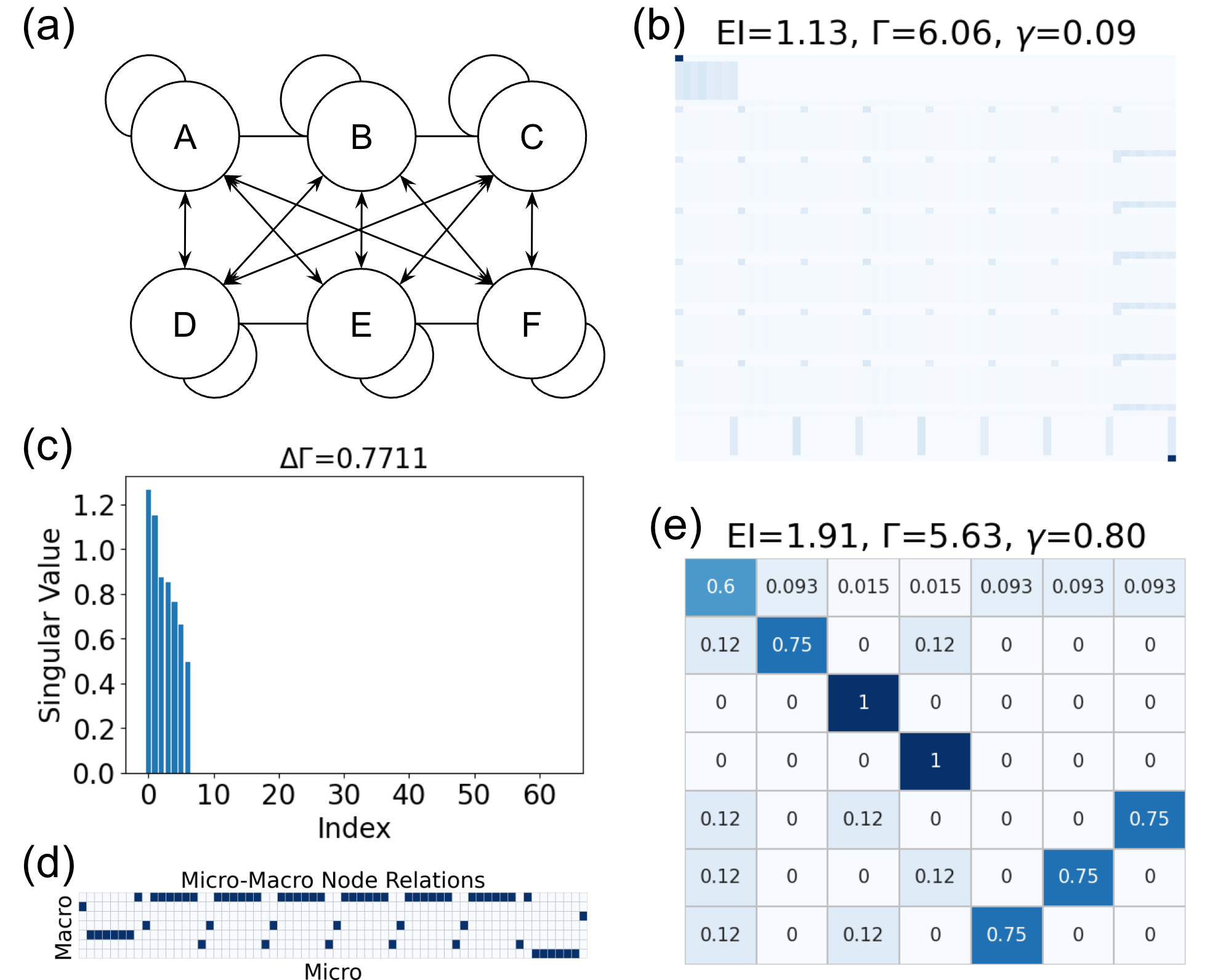}
    \caption{(a) represents a fully connected Boolean network with two distinct types of edges. (b) depicts the Transition Probability Matrix (TPM) as illustrated in Fig. S2 of \cite{Hoel2013}. (c) displays the singular spectrum of (b). (d) showcases the projection matrix. (e) illustrates the coarse-grained TPM derived from (b) using our coarse-graining approach. In our analysis, we observe a higher macro EI value of 1.91 compared to the 1.84 reported in Hoel et al.'s study. This discrepancy arises from the fact that the macro TPM in \cite{Hoel2013} encompasses 9 states, whereas our findings involve only 7 states. In their approach, \{000111\} and \{111000\} are treated as distinct additional macro states \{02\} and \{20\}, whereas we consider them as a single macro-state \{11\}.}
   \label{fig:SI_ex2}
\end{figure}

\begin{figure}[ht!]
    \centering
    \includegraphics[width=1\linewidth]{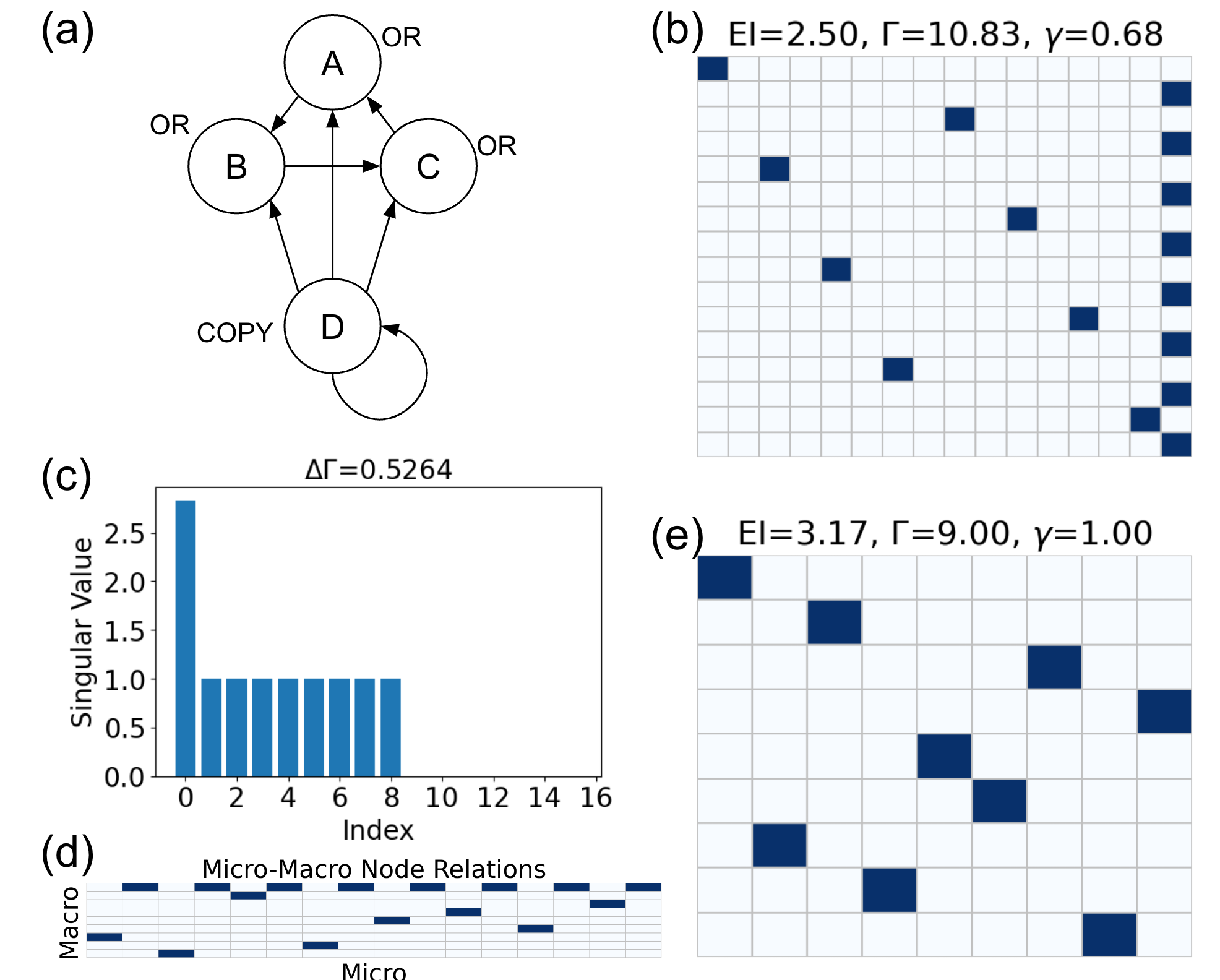}
    \caption{(a) is a directed Boolean network. (b) is the TPM of (a). (c) is the singular spectrum of (b). (d) is projection matrix. (e) is the course-grained TPM. We also get a larger macro EI than Hoel's (3.17 here and 3.00 in \cite{Hoel2017}). It is because we have 9 macro states and \cite{Hoel2017} only has 8. We combined all the 8 micro states that transform to \{11111111\} as a separate macro state, while \cite{Hoel2017} combines them into each of the other 8 groups.}
    \label{fig:SI_ex3}
\end{figure}

\newpage

% \color{red}
\section{\label{sec:commutativity}The Proof of the Commutativity of our Coarse-Graining Strategy}

In section \ref{sec:coarse-graining}, we give a coarse-graining method based on the SVD and the stationary distribution. One of the most advantages of this method is that the commutativity of the coarse-graining operator and the time evolution operator(the TPM). In this section, we formalize this characteristic by the following proposition.

\begin{proposition}
\label{thm:commutativity}
For a given Markov chain $\chi$ with the corresponding TPM $P$ and its stationary distribution $\mu$, and for a given clustering projection matrix $\Phi$ defined in \ref{eq:projection_matrix}, we coarse-grain the TPM $P$ according to method mentioned in Section \ref{sec:coarse-graining} to derive a macro-level TPM represented by $P'$. Then, the commutativity between the operators of dynamics ($P$ or $P'$) and the coarse-graining ($\Phi$) holds, i.e., the following equation is satisfied:
\begin{equation}
    P\cdot \Phi=\Phi\cdot P'.
\end{equation}
\end{proposition}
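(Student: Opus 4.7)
Let $G_1,\ldots,G_r$ denote the K-means clusters that define $\Phi$, and let $K(i)$ be the index of the cluster containing micro-state $i$. The natural first step is to expand both sides entry-wise,
\[
(P\Phi)_{ik} = \sum_{j \in G_k} p_{ij}, \qquad (\Phi P')_{ik} = P'_{K(i),k},
\]
and then to substitute the formula for $P'$. Using $F_{ij} = \mu_i p_{ij}$, $F' = \Phi^T F \Phi$, together with the identity $\sum_m F'_{lm} = \sum_{i \in G_l}\mu_i\sum_{j}p_{ij} = \mu'_l$ where $\mu'_l \equiv \sum_{i\in G_l}\mu_i$, the row-normalization step of Eq.~(\ref{eq:coarse-graining}) gives
\[
P'_{lk} = \frac{1}{\mu'_l}\sum_{i'\in G_l}\sum_{j\in G_k}\mu_{i'}p_{i'j}.
\]

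With these in hand, the claim $P\Phi = \Phi P'$ reduces to the scalar identity
\[
\sum_{j\in G_k} p_{ij} \;=\; \frac{1}{\mu'_{K(i)}}\sum_{i'\in G_{K(i)}}\mu_{i'}\sum_{j\in G_k}p_{i'j},
\]
which asserts that the block row-sum $\sum_{j\in G_k}p_{ij}$ coincides with its own $\mu$-weighted average over the cluster $G_{K(i)}$. This holds for every $i,k$ precisely when the block row-sums are constant on each cluster, i.e.\ when the partition induced by $\Phi$ is lumpable in the classical sense of~\cite{buchholz1994exact}. The bulk of the proof therefore consists in arguing that the clustering produced by the SVD-based method actually achieves this lumpability.

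To close this gap I would exploit the SVD structure $P = U\Sigma V^T$ together with $\tilde P = PV_{N\times r}$: projecting the rows of $P$ onto the span of the top-$r$ right singular vectors is an isometry on the row space of $P$ when $\sigma_{r+1}=\cdots=\sigma_N=0$, i.e.\ in the clear-CE regime of Definition~\ref{dfn:clear_emergence}. In that case two micro-states placed in the same K-means cluster satisfy $\tilde P_i = \tilde P_{i'}$, which forces $P_i = P_{i'}$ exactly, hence $\sum_{j\in G_k}p_{ij} = \sum_{j\in G_k}p_{i'j}$ for every $k$, so lumpability is exact and $P\Phi=\Phi P'$ follows by substituting into the reduced identity above. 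The main obstacle is the vague-CE case: the truncation now discards nonzero singular values $\sigma_{r+1},\ldots,\sigma_N$ and the identity only holds up to an error controlled by $\sigma_{r+1}$ and the K-means residual, which is bounded by the vagueness threshold $\epsilon$ but does not vanish. I would therefore state and prove the proposition cleanly in the clear-CE setting, and flag the vague-CE case as approximate commutativity with an explicitly bounded error.
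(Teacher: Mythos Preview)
Your entry-wise reduction to the lumpability condition is correct and considerably more careful than the paper's own argument. The paper proceeds by a purely formal matrix manipulation: it writes $P\Phi=(\mathrm{diag}\,\mu)^{-1}F\Phi$, then replaces $F\Phi$ by $(\Phi^{T})^{-1}F'$, then by $(\Phi^{T})^{-1}\mathrm{diag}(\mu')P'$, then by $(\Phi^{T})^{-1}\Phi^{T}\mathrm{diag}(\mu)\Phi P'$, and cancels, even concluding afterwards that commutativity holds for \emph{arbitrary} clustering projections $\Phi$. But $\Phi^{T}$ is $r\times N$ with $r<N$, so $(\Phi^{T})^{-1}$ does not exist, and your reduction shows that $P\Phi=\Phi P'$ forces the block row-sums $\sum_{j\in G_k}p_{ij}$ to be constant on each cluster --- a genuine constraint that fails for generic $\Phi$ (take $N=3$, cluster $\{1,2\}$ versus $\{3\}$, and any $P$ whose first two rows have different masses on $G_1$). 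So you have correctly identified that the proposition, read as an unconditional statement, is false and that the paper's algebraic proof is invalid.

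Your proposed rescue in the clear-CE regime has its own gap, however. You claim that when $\sigma_{r+1}=\cdots=\sigma_N=0$, two micro-states landing in the same K-means cluster satisfy $\tilde P_i=\tilde P_{i'}$ and hence $P_i=P_{i'}$. But $\mathrm{rank}(P)=r$ only means the rows span an $r$-dimensional subspace, not that there are at most $r$ distinct rows: the stochastic rows $(1,0)$, $(0,1)$, $(\tfrac12,\tfrac12)$ already give rank $2$ with three distinct rows. K-means with $r$ centres will then have strictly positive within-cluster variance, $P_i\neq P_{i'}$ in some cluster, and lumpability is still not forced. Your clear-CE argument therefore goes through only under the additional hypotheses that $P$ has at most $r$ distinct rows and that K-means attains its global optimum --- a far narrower setting than the proposition claims.
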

\begin{proof}

Because $\mu$ is the stationary distribution of $P$, thus,
\begin{equation}
    \label{eqn:stationary_definition}
    \mu=\mu \cdot P,
\end{equation}

And we can define a stationary flow matrix $F$ (Equation \ref{eq:stationary_flow}):
\begin{equation}
\begin{aligned}
    \label{eq:F113}
    F \equiv diag{(\mu)} \cdot P.
\end{aligned}
\end{equation}

According to Section \ref{sec:coarse-graining}, the coarse-grained stationary flow matrix $F'$ is computed as (Equation \ref{eq:F_definition}):
\begin{equation}
    \label{eq:F114}
    F'=\Phi^T\cdot F\cdot\Phi.
\end{equation}

The corresponding coarse-grained stationary distribution $\mu'$ can be defined as:
\begin{equation}
    \label{eq:F115}
    diag{(\mu')}=\Phi^T \cdot diag{(\mu)} \cdot \Phi.
\end{equation}

And the reduced TPM $P'$ according to Equation \ref{eq:coarse-graining} can be written as:
\begin{equation}
\begin{aligned}
    \label{eq:F116}
    P'_i = F'_i/\sum_{j=1}^{r}(F'_i)_j = F'_i/\mu'_i, \forall i\in\{1,2,\cdots,r\},
\end{aligned}
\end{equation}
where, the second equality holds because
\begin{equation}
\begin{aligned}
    \sum_{j=1}^{r}(F'_i)_j
    &= (F' \cdot \mathbbm{1}_{r \times 1})_i\\
    &\overset{(\ref{eq:F114})}{=} (\Phi^T \cdot F \cdot \Phi \cdot \mathbbm{1}_{r \times 1})_i = (\Phi^T \cdot F \cdot \mathbbm{1}_{n \times 1})_i \\
    &\overset{(\ref{eq:F113})}{=} (\Phi^T \cdot diag{(\mu)} \cdot P \cdot \mathbbm{1}_{n \times 1})_i = (\Phi^T \cdot diag{(\mu)} \cdot \mathbbm{1}_{n \times 1})_i = \mu'_i.
\end{aligned}
\end{equation}
In which, the third equality holds because there is one 1 in each row in $\Phi$ and other elements are 0, and the firth equality holds because the sum of each row in $P$ equals 1.

And the matrix form of Equation \ref{eq:F116} can be written as:
\begin{equation}
 P' = (diag{(\mu')})^{-1} \cdot F'.
\end{equation}
Therefore,
\begin{equation}
\begin{aligned}
\label{eq:117}
P \cdot \Phi 
& \overset{(\ref{eq:F113})}{=} (diag{(\mu)})^{-1} \cdot F \cdot \Phi\\
& \overset{(\ref{eq:F114})}{=} (diag{(\mu)})^{-1} \cdot (\Phi^T)^{-1} F'\\
& \overset{(\ref{eq:F116})}{=} (diag(\mu))^{-1} \cdot (\Phi^T)^{-1} diag{(\mu')} \cdot P'\\
& \overset{(\ref{eq:F115})}{=} (diag(\mu))^{-1} \cdot (\Phi^T)^{-1} \Phi^T \cdot diag{(\mu)} \cdot \Phi \cdot P'\\
& = \Phi \cdot P'
\end{aligned}
\end{equation}

\end{proof}

This shows that the coarse-graining method we have used satisfies:
\begin{equation}
\label{eqn:last}
P' = (diag{(\mu')})^{-1} \cdot F' = (\Phi^T \cdot diag{(\mu)} \cdot \Phi)^{-1} \cdot \Phi^T\cdot diag{(\mu)} \cdot P \cdot\Phi,
\end{equation}
and it ensures the commutativity for arbitrary clustering projection $\Phi$.

\end{appendices}

%%===========================================================================================%%
%% If you are submitting to one of the Nature Portfolio journals, using the eJP submission   %%
%% system, please include the references within the manuscript file itself. You may do this  %%
%% by copying the reference list from your .bbl file, paste it into the main manuscript .tex %%
%% file, and delete the associated \verb+\bibliography+ commands.                            %%
%%===========================================================================================%%
% \bibliographystyle{plainnat}

%% if required, the content of .bbl file can be included here once bbl is generated
%%\input sn-article.bbl

\end{document}